\newcommand{\landau}{\mbox{\begin{scriptsize}$\mathcal{O}$\end{scriptsize}}}
\newcommand{\laa}{\langle\hspace{-0.08cm}\langle}
\newcommand{\raa}{\rangle\hspace{-0.08cm}\rangle}
\newcommand{\LZ}{L^2(\mathbb{R}^2,\mathbb{C})}
\newcommand{\LZN}{L^2(\mathbb{R}^{2N},\mathbb{C})}
\newcommand{\im}{\text{i}}
\newtheorem{theorem}{Theorem}[section]
\newtheorem{lemma}[theorem]{Lemma}
\newtheorem{notation}[theorem]{Notation}
\newtheorem{remark}[theorem]  {Remark}
\newtheorem{definition}[theorem] {Definition}
\newenvironment{proof}{\emph{Proof:}}{\begin{flushright} $ \Box $ \end{flushright}}
\newenvironment{refproof}{\emph{Proof of Theorem }}
\renewcommand{\phi}{\varphi}
\newcommand{\potdiff}{Z}
\begin{document}

\title{Derivation of the
Time Dependent Two Dimensional
 Focusing NLS Equation 
}

\author{Maximilian Jeblick\footnote{jeblick@math.lmu.de Mathematisches Institut, LMU Munich, Germany} \; and Peter Pickl\footnote{pickl@math.lmu.de
Mathematisches Institut, LMU Munich, Germany}}

\maketitle

\begin{abstract}
We present a microscopic derivation of the two-dimensional focusing cubic nonlinear Schr\"odinger equation starting from an interacting $N$-particle system of Bosons. 
The interaction potential we consider is given by $W_\beta(x)=N^{-1+2 \beta}W(N^\beta x)$ for some
spherically symmetric and compactly supported potential $W \in L^\infty (\mathbb{R}^2, \mathbb{R})$. The class of initial wave functions is chosen such that the variance in energy is small.
Furthermore, we assume that the Hamiltonian $ H_{W_\beta, t}=-\sum_{j=1}^N \Delta_j+\sum_{1\leq  j< k\leq  N} W_\beta(x_j-x_k) +\sum _{j=1}^N A_t(x_j)$ fulfills stability of second kind, that is $ H_{W_\beta, t} \geq -CN$.
We then prove the convergence of the reduced density matrix corresponding to the exact time evolution to the projector onto the solution of the corresponding nonlinear Schr\"odinger equation in either Sobolev trace norm, if $\|A_t\|_p < \infty$ for some $p>2$, 
 or in trace norm, for more general external potentials. 
For trapping potentials of the form $A(x)=C |x|^s\; , C>0$, the condition
$ H_{W_\beta, t} \geq -CN$ can be fulfilled for a certain class of interactions $W_\beta$, for all $0 < \beta < \frac{s+1}{s+2}$, see
\cite{lewin}.

\end{abstract}
\newpage
\tableofcontents\newpage


\newpage
\section{Introduction}
During the last decades the experimental realization and the theoretical investigation of Bose-Einstein condensation (BEC) regained a considerable amount of attention. Mathematically, there is a steady effort to describe both the dynamical as well as the statical properties of such condensates. While the principal mechanism of BEC is similar for many different systems, the specific effective description of such a system depends strongly on the model one studies.
In this paper we will focus on a dilute, two dimensional system of bosons with attractive interaction.

Let us first define the $N$-body quantum problem we have in mind. 
The evolution of $N$ interacting bosons is described by a time-dependent wave-function $\Psi_t \in L^2_{s}(\mathbb{R}^{2N}, \mathbb{C}),
\| \Psi_t \|=1$ (Here and below norms without index $\|\cdot\|$ always denote the $L^2$-norm on the appropriate Hilbert space.).
$L^2_{s}(\mathbb{R}^{2N}, \mathbb{C})$ denotes the set of all $\Psi \in  L^2(\mathbb{R}^{2N}, \mathbb{C})$
 which are symmetric under pairwise permutations of the variables 
$x_1, \dots, x_N \in \mathbb{R}^2$. 
$\Psi_t$ solves the  $N$-particle 
Schr\"odinger equation 
\begin{align}
\label{schroe}
 \im\partial_t \Psi_t =  H_{W_\beta,t} \Psi_t 
 \;,
\end{align}
where the time-dependent Hamiltonian $ H_{W_\beta,t}$ is given by
\begin{align}
\label{hamiltonian}
 H_{W_\beta, t}=-\sum_{j=1}^N \Delta_j+\sum_{1\leq  j< k\leq  N} W_\beta(x_j-x_k) +\sum _{j=1}^N A_t(x_j)
 \; .
\end{align} 
The scaled potential $W_\beta(x)= N^{-1+2 \beta} W(N^\beta x)$,
$W \in L^\infty_c(\mathbb{R}^2, \mathbb{R})$ describes a strong, but short range
potential acting on the length scale of order $N^{-\beta}$ (we assume $W$ to be compactly supported). 
The external potential $A_t \in L^p(\mathbb{R}^2, \mathbb{R})$ for some $p>1$ is used as an
external trapping potential. Below, we will comment on different choices for $A_t$ in more detail. 
In general, even for small particle numbers $N$, the evolution equation \eqref{schroe} cannot be solved directly nor numerically for $\Psi_t$. Nevertheless, for a certain class of initial conditions $\Psi_0$ and certain interactions $W$, which we will make precise in a moment, it is possible to derive an approximate solution of \eqref{schroe} in the trace class topology of reduced density matrices.

Define the one particle reduced density matrix $\gamma^{(1)}_{\Psi_0}$ of $\Psi_0$ with integral kernel
$$\gamma^{(1)}_{\Psi_0}(x,x')=\int_{\mathbb{R}^{2N-2}} \Psi_{0}^*(x,x_2,\ldots,x_N)\Psi_{0}(x',x_2,\ldots,x_N)d^2x_2\ldots d^2x_N \;. $$
To account for the physical situation of a Bose-Einstein condensate, we assume complete condensation in the limit of large particle number $N$. This amounts to
assume that, for $N \rightarrow \infty$,
 $\gamma^{(1)}_{\Psi_0}  \rightarrow |\phi_0\rangle\langle\phi_0|$ in trace norm for some $\phi_0
\in L^2(\mathbb{R}^2,\mathbb{C}) 
  , \|\phi_0\|=1$.
   Define $a=\int_{\mathbb{R}^2} d^2x W(x)$ (throughout this paper, $a$ will always denote the integral over $W$).
Let $\phi_t$ solve  the nonlinear Schr\"odinger equation
\begin{align}
\label{nls}
\im \partial_t
\phi_t=\left(-\Delta +A_t\right) \phi_t+
a
|\phi_t|^2\phi_t=:h_{ 
a,t
}^{\text{NLS}}\phi_t
\end{align}
with initial datum $\phi_0$. 
 Our main goal is to show the persistence of condensation over time. 
In particular, we prove that the time evolved reduced density matrix $\gamma^{(1)}_{\Psi_t}$
 converges to $ |\phi_t\rangle\langle\phi_t|$ in trace norm as $N \rightarrow \infty$ with convergence rate of order $N^{-\eta}$ for some explicitly computable $\eta>0$, see Lemma \ref{gammalemma}. 
Assuming $W \in L^\infty_c(\mathbb{R}^2, \mathbb{R})$, such that $W$ is spherically symmetric and  $W \geq 0$, we were possible to show convergence $\gamma^{(1)}_{\Psi_t} \rightarrow| \phi_t \rangle \langle \phi_t |$ in trace norm for all $\beta>0$, under suitable conditions on
$\Psi_0$ and $\phi_0$ \cite{jeblick}, see also \cite{schlein2d} for a prior result.
The problem becomes more delicate for interactions which are not nonnegative, especially if \eqref{nls} is focusing, which means $a<0$. 
For strong, attractive potentials, it is known that the condensate collapses in the limit of large particle number.
To prevent this behavior, our proof needs stability of second kind for the Hamiltonian $ H_{W_\beta, t}$, that is, we assume $ H_{W_\beta, t} \geq -CN$. If $W_\beta$ is partly or purely nonpositive, this assumption gets highly nontrivial for higher $\beta$. 
For $\beta \leq 1/2$, the inequality
\begin{align}
& \inf_{\Psi \in L^2(\mathbb{R}^{2N}, \mathbb{C}), \| \Psi \|=1}
\frac{\laa \Psi,  H_{W_\beta, t} \Psi \raa}{N }
\nonumber 
\\
\label{nlsenergy}
\geq &
\inf_{\phi \in L^2(\mathbb{R}^{2}, \mathbb{C}), \| \phi \|=1}
\left(
\int_{\mathbb{R}^2} d^2x 
\left(
	|\nabla \phi(x)|^2+ A_t(x) |\phi(x)|^2
	+
	\frac{1}{2}
	\int_{\mathbb{R}^2} d^2x 
	 |\phi(x)|^2 (NW_\beta \ast | \phi|^2)(x)
\right)
\right)
\\
\nonumber
-&
\landau (1)
-C N^{2 \beta-1},
\end{align}
which was proven in \cite{lewinMeanField}, shows $ H_{W_\beta,t} \geq -CN$, if \eqref{nlsenergy}, which is the ground state energy of the nonlinear Hartree functional, is bounded from below uniformly in $N$
(Here and in the following,  $\laa\cdot,\cdot\raa$ denotes the scalar product on $\LZN$.).
Assuming $A_t \geq -C$,
this is the case if
\begin{align}
\label{condition}
\inf_{\phi \in H^1(\mathbb{R}^2, \mathbb{C})}
\left(
\frac{\int_{\mathbb{R}^2} d^2x |\phi(x)|^2 ( |\phi|^2 \ast W)(x)}
{ \|\phi\|^2  \|\nabla \phi \|^2}
\right)
>-1
\end{align}
holds  \cite{lewin}. 
Assuming condition \eqref{condition} together with $
A_t \in L^1_{\text{loc}} (\mathbb{R}^2, \mathbb{R}),\;
A_t(x) \geq C |x|^s$, $s >0$, 
 stability of second kind was subsequently proven for all $0<\beta< \frac{s+1}{s+2}$ \cite{lewin}.
 In particular, it was shown that the ground state energy per particle of $H_{W_\beta, t}$ is then given (in the limit $N \rightarrow \infty$) by the corresponding nonlinear Schr\"odinger functional; see also \cite{lewinneu} for an earlier result which also treats the one- and three-dimensional cases.
 \\
 Condition \eqref{condition} thus restricts the set of interactions $W$. Indeed, stability of the second kind fails if
 \begin{align}
\inf_{\phi \in H^1(\mathbb{R}^2, \mathbb{C})}
\left(
\frac{\int_{\mathbb{R}^2} d^2x |\phi(x)|^2 ( |\phi|^2 \ast W)(x)}
{ \|\phi\|^2  \|\nabla \phi \|^2}
\right)
<-1 ,
\end{align}
see \cite{lewinneu, lewin} for a nice discussion.
Let $W^-$ denote the negative part of $W$ and let $a^*>0$ denote
the optimal constant of the Gagliardo-Nirenberg inequality
\begin{align*}
\left(\int_{\mathbb{R}^2} d^2x | \nabla u(x) |^2 \right)
\left(\int_{\mathbb{R}^2} d^2y |  u(y) |^2 \right)
\geq
\frac{a^*}{2}
\left(\int_{\mathbb{R}^2} d^2x | u(x) |^4 \right) \;.
\end{align*}
It is then easy to prove that $  
\int_{\mathbb{R}^2} d^2x |W^-(x)|
 < a^*$ implies condition \eqref{condition}.
On the other hand, \eqref{condition} implies $a > - a^*$.
While \eqref{condition} is in general a weaker condition than  $  
\int_{\mathbb{R}^2} d^2x |W^-(x)|
 < a^*$, 
for $W \leq 0$, they are equivalent.
Consequently, for nonpositive $W$ and for $a<-a^*$, the nonlinear Hartree functional is not bounded from below in the limit $N \rightarrow \infty$, which in particular implies that $ H_{W_\beta, t}$ is not stable of the second kind.
It is also known that $a^*$ is the critical threshold for blow-up solutions, that is, 
for $a \leq-a^*$ the solution of \eqref{nls} may blow up in finite time  \cite{ carles2, carles, cazenave, killip, tao, weinstein}.
\\
The condition $ H_{W_\beta, t} \geq -CN$ is needed in our proof to control the kinetic energy of those particles which leave the condensate, see Lemma \ref{varianzlemma}. 
In prior works, it was necessary to control 
the quantity $\|\nabla_1 q_1 \Psi_t\|$ \footnote{
Here, $q_1$ denotes the complement of the projection onto $\phi_t(x_1)$, see  Definition \ref{defpro} below.
}  sufficiently well
in order to show convergence of the reduced density matrices for large $\beta$,
using the method introduced in \cite{pickl1}.
While it is possible to obtain an a priori estimate of $\|\nabla_1 q_1 \Psi_t\|$ for repulsive interactions, 
it is not obvious how one could generalize this estimate for nonpositive $W$.
 Our strategy to overcome this difficulty is to control the quantity
 $
 \| q_2\nabla_1 \Psi_t \|
 $ instead. Under some  natural assumptions (see (A2), (A4) and (A5) below),  it is possible to obtain a
sufficient 
  bound of $ \| q_2\nabla_1 \Psi_t \|
  $, if initially the variance of the energy
 \begin{align}
 \text{Var}_{ H_{W_\beta,0}}(\Psi_0)
=
\frac{1}{N^2}
\laa \Psi_0, 
\left(
 H_{W_\beta, 0}
-
\laa \Psi_0, 
 H_{W_\beta, 0}
\Psi_0 \raa
\right)^2
\Psi_0 \raa
 \end{align}
 is small and $H_{W_\beta, t}$ is stable of second kind. 
For product states $\Psi_0=\phi^{\otimes k}$, with $\phi$ regular enough, a straightforward calculation yields
$
 \text{Var}_{ H_{W_\beta, 0}}(\Psi_0)\leq C(1+N^{-1+\beta}+N^{-2+2\beta})$, see Lemma \ref{varianzlemma}. Therefore, at least for $\beta<1$, there exist initial states $\Psi_0$, for which the variance is small.
 The strategy to control $ \| q_2\nabla_1 \Psi_t \|$ instead of 
 $\|\nabla_1 q_1 \Psi_t\|$ by means of the energy variance was already used in \cite{leopold} where the derivation of the Maxwell-Schr\"odinger equations from the Pauli-Fierz Hamiltonian was shown\footnote{
 We would like to thank Nikolai Leopold 
 for pointing out to us 
the idea to use the variance of the energy $\text{Var}_{ H_{W_\beta, t}} (\Psi_t)$ 
in our estimates and how to include time-dependent external potentials.}.  
 Adopting this method, we are able to prove convergence of $\gamma^{(1)}_{\Psi_t}$
  to $ |\phi_t\rangle\langle\phi_t|$ in trace norm as $N \rightarrow \infty$ with convergence rate of order $N^{-\eta} \;, \eta>0$, if the assumptions (A1)-(A5) (see below) are fulfilled. We like to remark that 
it is unclear if the assumptions (A1) (stability of second kind of $H_{W_\beta, t}$) and (A3) (smallness of  $\text{Var}_{ H_{W_\beta, 0}}(\Psi_0)$)  can be fulfilled for $\beta \geq 1$.

A stronger statement than convergence in trace norm is convergence in Sobolev trace norm.
For external potentials $A_t \in L^p( \mathbb{R}^2, \mathbb{R})$, with $p \in ]2, \infty]$, we are able to 
show 
\begin{align}
\label{energynorm_convergence}
\lim_{N \rightarrow \infty}
\text{Tr}
\left|
\sqrt{1-\Delta}
( \gamma_{\Psi_t}^{(1)}- | \phi_t \rangle \langle \phi_t |)
\sqrt{1-\Delta}
\right|=0 ,
\end{align}
 if initially the energy per particle
$ N^{-1} \laa \Psi_0, H_{W_\beta, 0} \Psi_0 \raa$ is close to the NLS energy
$
\langle \phi_0, 
\left(
-\Delta 
+
\frac{a}{2}
 |\phi_0|^2
+
 A_0 
 \right)
 \phi_0 \rangle$. 
To obtain this type of convergence, we adapt some recent results of \cite{anapolitanos, picklnorm},  where a similar statement was proven.

The rigorous derivation of effective evolution equations has a long history, see e.g. \cite{ benedikter, brennecke, SchleinNorm, erdos1, erdos2, erdos3, erdos4, knowles,  marcin1, marcin2, picklgp3d, pickl1, rodnianskischlein}
 and references therein.
 In particular, for the two-dimensional case we consider, it has been proven, for $0 <\beta <3/4$ and $W$ nonnegative, that $\gamma^{(1)}_{\Psi_t}$
 converges to $ |\phi_t\rangle\langle\phi_t|$ as $N \rightarrow \infty$ \cite{schlein2d}.
 This was later extended by us to hold for all $\beta>0$ \cite{jeblick}.
For $A(x)=|x|^ 2$ and $W \leq 0$ sufficiently small such that $ H_{W_\beta, t} \geq -CN$, the respective convergence has been proven in \cite{chen2d} for $0 < \beta < 1/6$. 
 One key estimate of the proof was to show the stability condition $ H_{W_\beta, t} \geq -CN$. The authors note that their proof actually works for all $0 <\beta < 3/4$, if 
 $ H_{W_\beta, t} \geq -CN$ holds.
As mentioned, this was subsequently proven by  \cite{lewin} in a more general setting.
Recently, 
 the validity of the Bogoloubov approximation for 
the two-dimensional attractive bose gas was 
 shown in \cite{marcin3} for $0 < \beta <1$.
 In contrast to our result, the authors were actually able to achieve norm convergence and did not
  need to impose the stability condition
  $ H_{W_\beta, t} \geq -CN$, but only required the bound
  $  
\int_{\mathbb{R}^2} d^2x |W^-(x)|
 < a^*$. 
They then use some refined localization method on the number of particles. 
 This strategy enables them to analyze the dynamics without any external field.
 It would be nice to achieve a better understanding on the relationship between stability of second kind and the bound on the negative part of $W$, as stated above. We plan to come back to this point in the future.
 We  want to emphasize that norm convergence is a stronger statement than convergence in the topology of reduced densities. 
However, convergence in Sobolev trace norm as defined in  \eqref{energynorm_convergence}
does in general not follow from norm convergence.
\\
It is also possible to consider the two dimensional Bose gas in the regime where 
the short scale correlation structure is important for the dynamics. The scaling to consider in such a case is given by $
e^{2N} W(e^N x)
$. We refer the reader to \cite{jeblick, lssy}.

For  $0<\beta<1/4$, it can be verified that the methods presented in \cite{pickl2}, where the attractive three dimensional case is treated, can be applied, assuming some regularity conditions on $\phi_t$ (the corresponding conditions for the three dimensional system were proven in 
\cite{chong}). Interestingly, this proof does 
not restrict the strength of the nonpositive potential nor does it require stability of second kind, but rather assumes a sufficiently fast convergence of $\gamma^{(1)}_{\Psi_0}$ to $ |\phi_0\rangle\langle\phi_0|$. Therefore, one can prove BEC in two dimensions for $\beta<1/4$ and arbitrary strong attractive interactions for times for which the solution $\phi_t$ exists and is regular enough, that is, before some possible blow-up.


\section{Main result}

We will bound  expressions which are uniformly bounded in $N$ and $t$ by some constant $C>0$.
We will not distinguish constants
 appearing in a sequence of estimates, i.e. in $X\leq  CY\leq  CZ$ the constants may differ.
 We denote by  $\laa\cdot,\cdot\raa$  the scalar product on $\LZN$
and by $\langle\cdot,\cdot\rangle$  the scalar product on $\LZ$.

We will require the following assumptions:
\begin{itemize}
\item[(A1)]
For $\beta>0$, let $W_\beta$ be given by $W_\beta(x)= N^{-1+2 \beta} W(N^{\beta}x)$, for $W \in 
L^\infty_c (\mathbb{R}^2,\mathbb{R})$, $W$ spherically symmetric.
We assume that there exist constants $0<\epsilon, \mu<1$ such that
\begin{align*}
 H_{W_\beta, t}^{(\epsilon,\mu)}=
(1-\epsilon)\sum_{k=1}^N (-\Delta_k)+ \sum_{i<j} W_\beta(x_i-x_j) +(1-\mu) \sum_{k=1}^N A_t(x_k) \geq -CN
\;.
\end{align*}
\item[(A2)]
For any real-valued function $f$, decompose $f(x)=f^+(x)-f^-(x)$ with $f^+(x), f^-(x) \geq 0$, such 
that the supports of $f^+$ and $f^-$ are disjoint. 
We assume that $A_t^- \in L^\infty(\mathbb{R}^2, \mathbb{R})$. Furthermore, we assume that
$A_t \in H^ 2(\mathbb{R}^ 2, \mathbb{R})$ is differentiable with respect to $t$ and fulfills
\begin{align*}
\dot{A}_t \in L^\infty (\mathbb{R}^2, \mathbb{R}) 
\;,
\nabla \dot{A}_t \in L^\infty (\mathbb{R}^2, \mathbb{R}) \;,
\Delta \dot{A}_t \in L^\infty (\mathbb{R}^2, \mathbb{R})
\end{align*} 
for all $t \in \mathbb{R}$.
\item[(A3)]
For any $s \in \mathbb{R}$,
we denote for $k \in \mathbb{N}$ the domain of the self-adjoint operator $ (H_{W_\beta, s})^k$ by $\mathcal{D}( (H_{W_\beta, s})^k)$.
Define the energy variance $\text{Var}_{ H_{W_\beta, s}}:
\mathcal{D}( (H_{W_\beta, s})^2)
\to \mathbb{R}^+$
as
$$
\text{Var}_{ H_{W_\beta, s}}(\Psi)
=
\frac{1}{N^2}
\laa \Psi, 
\left(
 H_{W_\beta, s}
-
\laa \Psi, 
 H_{W_\beta, s}
\Psi \raa
\right)^2
\Psi \raa
\;.
$$
We then require $\text{Var}_{ H_{W_\beta, 0}}(\Psi_0) \leq CN^{-\delta}$ for some $\delta>0$.
\item[(A4)]
Let $\phi_t$ the solution to $ i \partial_t \phi_t= h^{\text{NLS}}_{a, t} \phi_t, \; \| \phi_0 \|=1$. 
We assume that $\phi_t \in H^{4}(\mathbb{R}^2, \mathbb{C})$.
\item[(A5)]
Assume that the energy per particle
$$N^ {-1} |\laa \Psi_0,  H_{W_\beta, 0} \Psi_0 \raa | \leq C $$ and
the NLS energy
$$
\left|
\langle \phi_0, 
\left(
-\Delta 
+
\frac{a}{2}
 |\phi_0|^2
+
 A_0 
 \right)
 \phi_0 \rangle
\right| 
 \leq C
$$
are bounded uniformly in $N$ initially.
\item[(A5)']
Assume that there exists a $\delta>0$, such that
\begin{align*}
 \left|
 N^ {-1}\laa \Psi_0,  H_{W_\beta, 0} \Psi_0 \raa 
 -
 \langle \phi_0, 
\left(
-\Delta 
+
\frac{a}{2}
 |\phi_0|^2
+
 A_0 
 \right)
 \phi_0 \rangle 
 \right|\leq C N^{-\delta}.
\end{align*}

\end{itemize}
\begin{remark}
\begin{enumerate}
\item
Note that (A1) together with (A2) directly implies $ H_{W_\beta, t}^{(\epsilon,0)} \geq -CN
\;,  H_{W_\beta, t}^{(0,\mu)} \geq -CN$ and $ H_{W_\beta, t}= H_{W_\beta, t}^{(0,0)} \geq -CN$.
As mentioned in the introduction, (A1) and (A2) are fulfilled for $A(x) \geq C|x|^s \;, s >0$ 
for any $0<\beta< \frac{s+1}{s+2}$, assuming \eqref{condition}.
\cite{lewin}.
\item
Assuming $\Psi_0= \phi_0^{\otimes N}$ with 
$ 
\phi_0 \in W^{2,\infty}(\mathbb{R}^2,\mathbb{C}) \cap
H^1(\mathbb{R}^2,\mathbb{C}), \| \phi_0 \|=1
$
such that
$
 \langle \phi_0, A_0 \phi_0 \rangle
+
\langle \phi_0, A_0^2 \phi_0 \rangle \leq C
$,
it follows that 
$
\text{Var}_{ H_{W_\beta, 0}}(\Psi_0)
\leq
C(N^{-1+\beta}+ N^{-2+2 \beta})
$, see Lemma \ref{varianceproduct}; and hence (A3) is then valid for all $0<\beta<1$. 
\item
For $A_t \in \lbrace 0, |x|^2 \rbrace$, (A4) follows from the persistence of regularity of solutions, assuming $
\phi_0 \in H^4(\mathbb{R}^2,\mathbb{C}) ,\; \| A_0^2 \phi_0\| <\infty $, see Appendix \ref{solutionTheory}. However, for regular enough external potentials, $a>-a^*$ and regular enough $\phi_0$ we believe (A4) to be valid, too.
\item
It is interesting to note that both (A1) and (A3) can be fulfilled for $0<\beta<1$, while it is unclear if they hold for $\beta\geq 1$.
\end{enumerate}
\end{remark}

We now state our main Theorem:
\begin{theorem}\label{theo}
Let $\Psi_0 \in L^2_{s}(\mathbb{R}^{2N}, \mathbb{C})
\cap \mathcal{D}( (H_{W_\beta, 0})^2)$ with $\|\Psi_0\|=1$. Let $\phi_0  \in L^{2}(\mathbb{R}^2,\mathbb{C})$ with $\|\phi_0\|=1$ and assume 
$\lim\limits_{N\to\infty}
\left(
N^{\delta}
\text{Tr} | \gamma^{(1)}_{\Psi_0}-|\phi_0\rangle\langle\phi_0| |
\right)
=0 $ for some $\delta > 0$.
Let $\Psi_t$ the unique solution to $i \partial_t \Psi_t
=  H_{W_\beta, t} \Psi_t$ with initial datum $\Psi_0$.
Let $\phi_t$ the unique solution to $i \partial_t \phi_t
= h^{\text{NLS}}_{a, t} \phi_t$ with initial datum $\phi_0$.
\begin{enumerate}

\item (Convergence in trace norm)
Assume (A1)-(A5).
Then, for any $t>0$ 
\begin{equation}
\label{convergenls}
\lim_{N\to\infty}\gamma^{(1)}_{\Psi_t}=|\phi_t\rangle\langle\phi_t|
\end{equation} in trace norm.
\item (Convergence in Sobolev trace norm)
Assume (A1)-(A5) and (A5)'. Furthermore, assume that
$A_t \in L^p(\mathbb{R}^ 2, \mathbb{R})$ holds for some $p \in ]2, \infty]$ and for all $t \in \mathbb{R}$.
Then, for any $t>0$ 
\begin{equation}
\lim_{N \rightarrow \infty}
\text{Tr}
\left|
\sqrt{1-\Delta}
( \gamma_{\Psi_t}^{(1)}- | \phi_t \rangle \langle \phi_t |)
\sqrt{1-\Delta}
\right|
=
0.
\end{equation}

\end{enumerate}

\end{theorem}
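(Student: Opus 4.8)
**

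The plan is to use the standard Pickl-type method based on the counting functional $\alpha(t)$ measuring the number of particles outside the condensate, but with the key modification advertised in the introduction: controlling $\|q_2 \nabla_1 \Psi_t\|$ through the energy variance rather than $\|\nabla_1 q_1 \Psi_t\|$ through an a priori energy estimate. Concretely, I would introduce the projectors $p_1 = |\phi_t\rangle\langle\phi_t|$ acting on the first variable, $q_1 = 1 - p_1$, and a weighted counting functional $\alpha_\xi(\Psi_t, \phi_t) = \laa \Psi_t, \widehat{n}^\xi \Psi_t \raa$ built from the number operator $\widehat{n} = \sum_j q_j / N$, with a suitable weight $\xi$ (e.g. a mild power) to absorb the worst terms. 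The first step is to establish that it suffices to prove $\alpha(t) \to 0$ with a rate, since $\text{Tr}|\gamma^{(1)}_{\Psi_t} - |\phi_t\rangle\langle\phi_t|| \leq C\sqrt{\alpha(t)}$ and the initial condition gives $\alpha(0) \to 0$ with rate $N^{-\delta}$.

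The core is a Gr\"onwall estimate for $\frac{d}{dt}\alpha(t)$. Differentiating, the external potential $A_t$ and the free Laplacian cancel against the NLS mean field up to commutator terms, leaving the interaction terms, which are split in the usual way into contributions where the pair interaction $W_\beta(x_i - x_j)$ is hit by $p_i p_j$, $p_i q_j$ (plus symmetric), and $q_i q_j$ on one side and the mean-field counterterm $a|\phi_t|^2$ on the other. The dangerous term is the one of the form $\laa \Psi_t, p_1 p_2 \, (N W_\beta(x_1 - x_2) - a |\phi_t(x_1)|^2 \,\text{-type}) \, p_1 q_2 \Psi_t \raa$ and, more seriously for large $\beta$, the term where both $W_\beta$ and a gradient must be controlled; here one writes $W_\beta = \nabla \cdot h_\beta$ for a potential $h_\beta$ with $\|h_\beta\|_1 \sim N^{-\beta}$ (using that $W_\beta$ has small support), integrates by parts, and is left needing to bound $\|q_2 \nabla_1 \Psi_t\|$. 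This is exactly where stability of second kind (A1) and the smallness of the energy variance (A3) enter.

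The key auxiliary lemma — which I expect to be the main obstacle — is the bound on $\|q_2 \nabla_1 \Psi_t\|$. The idea is: the quantity $\laa \Psi_t, (H_{W_\beta,t} - \laa\Psi_t, H_{W_\beta,t}\Psi_t\raa)^2 \Psi_t\raa$ is (almost) conserved up to terms coming from the time-dependence of $A_t$, which are controlled by (A2); so the variance stays $O(N^{2-\delta})$ for all $t$. Then one writes the "microscopic energy per particle" as a sum involving $\|\nabla_1 \Psi_t\|^2$ and compares $\laa\Psi_t, H_{W_\beta,t}\Psi_t\raa$ with $N$ times the NLS energy of $\phi_t$; the difference is controlled by $\alpha(t)$ plus the variance term plus the stability bound (which gives a lower bound preventing the kinetic energy of the "out" particles from being compensated by a very negative interaction energy). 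Combining, $\|q_2\nabla_1\Psi_t\|^2 \leq C(\alpha(t) + N^{-\delta} + \text{lower-order in }N)$, closing the Gr\"onwall loop: $\alpha'(t) \leq C(t)(\alpha(t) + N^{-\eta})$, hence $\alpha(t) \leq (\alpha(0) + N^{-\eta}) e^{C(t)} \to 0$. The various error exponents must be tracked to confirm they are positive for all $0 < \beta < 1$.

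For part (2), convergence in Sobolev trace norm, I would follow the strategy of \cite{anapolitanos, picklnorm}: it suffices to upgrade the trace-norm statement by additionally showing that the kinetic energy per particle of $\Psi_t$ converges to that of $\phi_t$, i.e. $N^{-1}\|\nabla_1 \Psi_t\|^2 \to \|\nabla\phi_t\|^2$, together with weak-$*$ convergence of $\sqrt{1-\Delta}\,\gamma^{(1)}_{\Psi_t}\sqrt{1-\Delta}$ to $\sqrt{1-\Delta}\,|\phi_t\rangle\langle\phi_t|\sqrt{1-\Delta}$, which promotes to trace-norm convergence by a standard argument (convergence of a sequence of positive trace-class operators in weak-$*$ sense together with convergence of traces implies trace-norm convergence). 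The energy-per-particle convergence uses conservation of $N^{-1}\laa\Psi_t, H_{W_\beta,t}\Psi_t\raa$ up to the explicit $\dot A_t$ contribution controlled by (A2), the hypothesis (A5)$'$ that the initial energies match up to $N^{-\delta}$, and the fact — itself a byproduct of part (1) and the variance lemma — that the interaction and potential energies of $\Psi_t$ converge to the corresponding NLS quantities; the additional assumption $A_t \in L^p$, $p > 2$, is needed precisely to bound $\langle\phi_t, A_t\phi_t\rangle$ and the cross terms via Sobolev embedding when extracting $\|\nabla\phi_t\|^2$ from the total energy. I expect part (2) to be essentially bookkeeping once the variance lemma and part (1) are in place.
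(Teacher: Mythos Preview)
Your proposal is on the right track and captures the essential mechanism of the paper's proof: a Gr\"onwall argument on a counting functional, control of $\|q_2\nabla_1\Psi_t\|$ through the energy variance together with stability of second kind, and an integration-by-parts trick to tame the singular interaction for large $\beta$. There are, however, two points where your outline differs from the paper, and the first would cause trouble as stated.

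First, you treat the variance as an \emph{independently} controlled quantity (``the variance stays $O(N^{2-\delta})$ for all $t$'') to be fed as a fixed error into the Gr\"onwall for $\alpha$. This does not work directly: computing $\frac{d}{dt}\text{Var}_{H_{W_\beta,t}}(\Psi_t)$ via the $\dot A_t$ contribution, the bound one gets involves not only the variance itself but also $\laa\Psi_t, q_1\Psi_t\raa$ (coming from the $p\dot A q$ and $q\dot A q$ pieces). Thus the variance and the counting functional are genuinely coupled. The paper resolves this simply by \emph{defining} $\alpha(\Psi,\phi) = \laa\Psi, q_1^\phi\Psi\raa + \text{Var}_{H_{W_\beta,\cdot}}(\Psi)$ and running a single Gr\"onwall for this sum. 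In particular, the weighted $\widehat{n}^\xi$ machinery you propose is not needed here --- the unweighted $\laa q_1\raa$ suffices once the variance is folded into $\alpha$. Second, your ``$W_\beta = \nabla\cdot h_\beta$'' is too crude for large $\beta$: the paper first subtracts a smeared-out potential $U_{\beta_1}$ at an intermediate scale $0<\beta_1\le 1/3$ (handled as in the soft mean-field regime), then writes $W_\beta - U_{\beta_1} = \Delta h_{\beta_1,\beta}$ with $h_{\beta_1,\beta}$ the two-dimensional Newton potential of the difference, and integrates by parts \emph{twice}. Optimizing the intermediate scale ($\beta_1=1/3$) yields the $N^{-1/3}$ error and is what makes the exponent positive uniformly in $\beta$.

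For part (b), your weak-$*$-plus-trace-convergence route is a valid alternative; the paper instead invokes directly an inequality from \cite{picklnorm} bounding the Sobolev trace norm by $\|q_1\Psi_t\|$ and $\|\nabla_1 q_1\Psi_t\|$, and then establishes $\|\nabla_1 q_1\Psi_t\|\to 0$ via $|\,\|\nabla_1\Psi_t\|^2 - \|\nabla\phi_t\|^2\,|\to 0$ --- exactly the energy accounting you describe, with the $L^p$ assumption on $A_t$ entering when bounding $\laa\Psi_t, q_1 A_t(x_1)q_1\Psi_t\raa$ by H\"older and Sobolev. So your assessment that part (b) is bookkeeping once part (a) and the variance lemma are in place is correct.
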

\begin{remark}
\begin{enumerate}

\item 
It is well known that convergence of $\gamma^{(1)}_{\Psi_t}$
 to $ |\phi_t\rangle\langle\phi_t|$ in trace norm is equivalent to the respective convergence in operator norm since $|\phi_t\rangle\langle\phi_t|$ is a rank-1-projection, see
Remark 1.4. in \cite{rodnianskischlein}. 
Other equivalent definitions of asymptotic 100\% condensation can be found in \cite{michelangeli}.
Furthermore, the convergence of the one-particle reduced density matrix
$\gamma^{(1)}_{\Psi_t} \rightarrow| \phi_t \rangle \langle \phi_t |$ in trace norm
 implies convergence of any $k$-particle reduced density matrix $\gamma^{(k)}_{\Psi_t}$ against $| \phi_t^{\otimes k} \rangle \langle \phi_t^{\otimes k} |$ in trace norm as $N \rightarrow \infty$ and $k$ fixed, see for example \cite{knowles}.

\item
 In our proof we will give explicit error estimates in terms of the particle number $N$. We shall show that the rate of convergence is of order
$N^{-\delta}$ for some $\delta>0$, assuming that initially
$\text{Tr}|\gamma^{(1)}_{\Psi_0}- |\phi_0\rangle\langle\phi_0|| 
\leq
C N^{-2 \delta}
$ holds.
 See \eqref{tracestimate} for the precise error estimate.

\item Under assumption (A2), the  
domains $\mathcal{D}( H_{W_\beta, t})$ and $\mathcal{D}( (H_{W_\beta, t})^2)$ of the time dependent Hamiltonian $ H_{W_\beta, t}$ are time-invariant, see Appendix \ref{selfadjoint}.
Therefore, the condition $\Psi_0 \in \mathcal{D}( (H_{W_\beta, 0})^2)$ is sufficient to define and to differentiate
the variance of the energy $\text{Var}_{ H_{W_\beta, t}} (\Psi_t)$.

\item
For $A_t(x)=|x|^2$, $0< \beta< 3/4$ and under condition \eqref{condition},
 the assumptions (A1)-(A5) can be fulfilled by
choosing $\Psi_0= \phi_0^{\otimes N}$ with $\phi_0$ regular enough.
We are therefore able to reproduce the result presented in \cite{chen2d} under slightly different assumptions, using the result of \cite{lewin} which implies (A1). 
 
\item
For external potentials $A_t$ which are bounded from below, assumption (A1) has been proven for all $0< \beta \leq 1/2$, under the condition \eqref{condition} \cite{lewinMeanField}.
We are therefore able to control the  
 convergence of $\gamma^{(1)}_{\Psi_t}$ to $ |\phi_t\rangle\langle\phi_t|$ in Sobolev trace norm as $N \rightarrow \infty$ for $0 < \beta \leq 1/2$.
 
 \item
In our estimates, we need the regularity conditions 
\begin{align*}
\|  \Delta \phi_t \|_\infty < \infty,
\;
\| \nabla \phi_t \|_\infty < \infty,
\;
\|  \phi_t \|_\infty < \infty,
\;
\| \nabla \phi_t \| < \infty,
\;
\| \Delta\phi_t \| < \infty \;.
\end{align*}
That is, we need $\phi_t \in H^2(\mathbb{R}^2,\mathbb{C}) \cap W^{2,\infty}(\mathbb{R}^2,\mathbb{C})$.
Then, $ \| \Delta|\phi_t| ^2 \|$ which also appears in our estimates, can be bounded by
\begin{align*}
 \Delta|\phi_t|^ 2
 =&
 \phi^*_t \Delta \phi_t
  +
 \phi_t  \Delta\phi^*_t
   +
  2   (\nabla \phi^*_t) \cdot (\nabla \phi_t)
  \\
 \| \Delta|\phi_t| ^2 \|
 \leq &
2
 \| \Delta \phi_t \| \| \phi_t \| _\infty
 +
 2
 \| \nabla \phi_t \|  \| \nabla \phi_t \| _\infty
\end{align*}
Recall the Sobolev embedding Theorem, which implies in particular
$H^k(\mathbb{R}^2,\mathbb{C})= W^{k,2}(\mathbb{R}^ 2,\mathbb{C}) \subset C^{k-2}(\mathbb{R}^2,\mathbb{C})$. 
If $\phi \in C^2(\mathbb{R}^2,\mathbb{C}) \cap H^2(\mathbb{R}^2,\mathbb{C})$, then $\phi \in W^{2,\infty}(\mathbb{R}^2,\mathbb{C})$ follows since both $\phi$ and $\nabla \phi$ have to decay at infinity.
 Thus, $\phi_t \in H^4(\mathbb{R}^2,\mathbb{C})$ implies $\phi_t \in H^2(\mathbb{R}^2,\mathbb{C}) \cap W^{2,\infty}(\mathbb{R}^2, \mathbb{C})$, which suffices for our estimates \footnote{ Actually, 
 $\phi_t \in H^{3+\epsilon}(\mathbb{R}^2,\mathbb{C})$ for some $\epsilon>0$ would suffice for our estimates. Note that it is reasonable to expect persistence of regularity of $\phi_t$ assuming $
\phi_t \in L^\infty(\mathbb{R}^2,\mathbb{C})$, see also Appendix \ref{solutionTheory}. }.

\end{enumerate}
\end{remark}

\section{Proof of Theorem \ref{theo} (a)}\label{secpro}

We fix the notation we are going to employ during the rest of the paper.
\begin{notation}
\begin{enumerate}

\item
We will denote the operator norm defined for any linear operator $f:\LZN\to\LZN$ by
$$\|f\|_{\text{op}}=\sup_{\psi \in \LZN, \|\Psi\|=1}\|f\Psi\|\;.$$

 \item
We will denote for any multiplication operator 
 $ F:L^2(\mathbb{R}^2, \mathbb{C}) \rightarrow L^2(\mathbb{R}^2, \mathbb{C}) $
 the corresponding operator 
 $$
 \mathds{1}^{\otimes (k-1)} \otimes F \otimes  \mathds{1}^{\otimes (N-k)} :
 L^2(\mathbb{R}^{2N}, \mathbb{C}) \rightarrow L^2(\mathbb{R}^{2N}, \mathbb{C}) 
 $$
 acting on the $N$-particle Hilbert space
 by $F(x_k)$. In particular, we will use, for any $ \Psi,\Omega \in  L^2(\mathbb{R}^{2N}, \mathbb{C})$ the notation
 $$
 \laa \Omega,  \mathds{1}^{\otimes (k-1)} \otimes F \otimes  \mathds{1}^{\otimes (N-k)}\Psi \raa
 =
 \laa \Omega, F(x_k) \Psi \raa 
 \;.
 $$
 In analogy, for any two-particle multiplication operator $K:L^2(\mathbb{R}^2, \mathbb{C}) ^{\otimes 2}\rightarrow L^2(\mathbb{R}^2, \mathbb{C})^{\otimes 2} $, we denote the operator acting on any $ \Psi \in  L^2(\mathbb{R}^{2N}, \mathbb{C})$ by multiplication
 in the variable $x_i$ and $x_j$ by $K(x_i,x_j)$. In particular, we denote
 $$
  \laa \Omega, K(x_i,x_j) \Psi \raa 
  =
\int_{\mathbb{R}^{2N}} 
K(x_i, x_j)
 \Omega^*(x_1,\ldots,x_N) \Psi  (x_1,\ldots,x_N) d^2x_1 \dots d^2x_N   \;.
 $$
 \item
 We will denote by $\mathcal{K}(\phi_t)$ a constant depending on time, via
 $
 \|\dot{A}_t\|_\infty ,\; \|A^-_t \|_\infty \;,$ $ \int_0^t ds \| \dot{A}_s \|_\infty
 $ and  $\| \phi_t \|_{H^4}$. As mentioned above, we make use of the embedding
 $ 
  H^2(\mathbb{R}^2,\mathbb{C}) \cap W^{2,\infty}(\mathbb{R}^2, \mathbb{C})
  \subseteq H^4(\mathbb{R}^2,\mathbb{C}) 
 $.

\end{enumerate}
\end{notation}

The method we  use in this paper is introduced in detail in  \cite{pickl1} and was generalized to derive various mean-field equations. 
Our proof is based on \cite{jeblick, picklgp3d}. In \cite{jeblick} we proved the equivalent of Theorem \ref{theo} for nonnegative potentials and for all $\beta>0$. However, since we are not covering the two dimensional Gross-Pitaevskii regime where one considers an exponential scaling of the interaction, our estimates are less involved. Furthermore,
we adopt some ideas which were first presented in \cite{leopold}.
Heuristically speaking, the method we are going to employ is based on the idea of counting for each time $t$ the relative number of those particles which are
not in the state $\phi_t$. 
It is then possible to show that the rate of particles which leave the condensate is small, if initially almost all particles are in the state $\phi_0$.  
In order to compare the exact dynamic, generated by $ H_{W_\beta, t}$, with the effective dynamic, generated by $h^{\text{NLS}}_{a, t}$, we define the projectors $p^\phi_j$ and $q^\phi_j$.
\begin{definition}\label{defpro}
Let $\phi\in L^{2}(\mathbb{R}^2,\mathbb{C})$ with $\| \phi  \|=1$.
For any $1\leq  j\leq  N$ the
projectors $p_j^\phi:\LZN\to\LZN$ and $q_j^\phi:\LZN\to\LZN$ are defined as
\begin{align} p_j^\phi\Psi=\phi(x_j)\int_{\mathbb{R}^2} \phi^*(\tilde{x}_j)\Psi(x_1,\ldots, \tilde{x}_j,\dots,x_N)d^2\tilde{x}_j\;\;\;\forall\;\Psi\in\LZN
\end{align}
and $q_j^\phi=1-p_j^\phi$.
We shall also use, with a slight abuse of notation, the bra-ket notation
$p_j^\phi=|\phi(x_j)\rangle\langle\phi(x_j)|$.

\end{definition}
For ease of notation, we will often omit the upper index $\phi$ on $p_j$,
$q_j$, except where their $\phi$-dependence plays an important role.
Our key strategy is to define a convenient functional $\alpha$, depending on $\Psi_t$ and $\phi_t$, such that $
\lim\limits_{N \rightarrow \infty}  \alpha( \Psi_t, \phi_t)
 = 0$ implies Theorem \ref{theo}.
\begin{definition}\label{defm}
Let $\Psi \in L^2 (\mathbb{R}^{2N}, \mathbb{C})
\cap \mathcal{D}(  (H_{W_\beta, \cdot})^2)
$ and let $\phi \in L^2(\mathbb{R}^2,\mathbb{C}),\; \|\phi \|=1$.
Define
\begin{align}
\nonumber
&\alpha: L^2(\mathbb{R}^{2N}, \mathbb{C}) \times
L^2(\mathbb{R}^{2}, \mathbb{C}) \rightarrow \mathbb{R}^+ ,
\\
&\alpha(\Psi,\phi)=
\laa\Psi,q^\phi_1\Psi\raa\
+
\text{Var}_{ H_{W_\beta, \cdot}}(\Psi)
\;.
\end{align}

\end{definition}
Using a general strategy, we will estimate the time derivative $\frac{d}{dt}
\alpha(\Psi_t, \phi_t)$. In particular, we
show that
\begin{align*}
\frac{d}{dt}
\alpha(\Psi_t, \phi_t)
\leq
\mathcal{K}(\phi_t)
\left(
\alpha(\Psi_t, \phi_t)+ 
N^{-\delta}
\right)
\end{align*}
holds
for some $\delta>0$. By a Gr\"onwall estimate, which precise form can be found below, we then obtain
$\alpha_t(\Psi_t, \phi_t) \rightarrow 0$   as $N \rightarrow \infty$, if $ \alpha(\Psi_0, \phi_0)$
converges to zero.

\begin{lemma}\label{equiv}
Let $\Psi \in L^2_s(\mathbb{R}^{2N},\mathbb{C}) \cap \mathcal{D}( (H_{W_\beta,  \cdot})^2) ,\; \| \Psi\|=1$ and let $\phi \in L^2(\mathbb{R}^2,\mathbb{C}),\; \| \phi \|=1$.
Let $\alpha(\Psi,\phi)$ be defined as above. Then,
\begin{align}
\nonumber
\lim_{N\to\infty}\alpha(\Psi,\phi)=0 \;\; \Leftrightarrow  \hspace{1cm}&\lim_{N\to\infty}\gamma^{(1)}_{\Psi}=|\phi\rangle\langle\phi|\text{ in trace norm}
\\
 \text{and}
&
\lim_{N\to\infty} \text{Var}_{ H_{W_\beta, \cdot}}(\Psi) =0
\;.
\end{align}
\end{lemma}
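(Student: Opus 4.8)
The proof of Lemma \ref{equiv} amounts to showing that $\laa \Psi, q_1^\phi \Psi\raa \to 0$ is equivalent to $\gamma^{(1)}_\Psi \to |\phi\rangle\langle\phi|$ in trace norm, since the variance term appears identically on both sides of the claimed equivalence and can simply be added to both. So the content is the standard dictionary between the ``one-particle excitation number'' and trace-norm convergence of reduced density matrices onto a rank-one projection.

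The plan is as follows. First I would record the elementary identity $\laa \Psi, q_1^\phi \Psi\raa = 1 - \laa \Psi, p_1^\phi \Psi\raa = 1 - \langle \phi, \gamma^{(1)}_\Psi \phi\rangle$, which follows directly from Definition \ref{defpro} and the definition of the reduced density matrix $\gamma^{(1)}_\Psi$; by symmetry of $\Psi$ the choice of the first variable is irrelevant. Thus the quantity $\laa \Psi, q_1^\phi \Psi\raa$ measures exactly the defect $1 - \langle \phi, \gamma^{(1)}_\Psi \phi\rangle$ of the ``condensate mode'' $\phi$ in $\gamma^{(1)}_\Psi$. The direction ``$\Leftarrow$'' is then immediate: if $\gamma^{(1)}_\Psi \to |\phi\rangle\langle\phi|$ in trace norm, then in particular $\langle \phi, \gamma^{(1)}_\Psi \phi\rangle \to \langle\phi, \phi\rangle\langle\phi,\phi\rangle = 1$, hence $\laa \Psi, q_1^\phi \Psi\raa \to 0$.

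For the direction ``$\Rightarrow$'' I would exploit that $\gamma^{(1)}_\Psi$ is a nonnegative trace-class operator with $\text{Tr}\,\gamma^{(1)}_\Psi = 1$ and that the target is a rank-one projection. Write $\gamma^{(1)}_\Psi - |\phi\rangle\langle\phi| = p^\phi(\gamma^{(1)}_\Psi - |\phi\rangle\langle\phi|)p^\phi + p^\phi \gamma^{(1)}_\Psi q^\phi + q^\phi \gamma^{(1)}_\Psi p^\phi + q^\phi \gamma^{(1)}_\Psi q^\phi$, where $p^\phi = |\phi\rangle\langle\phi|$ acts on the one-particle space and $q^\phi = 1 - p^\phi$. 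The diagonal $p^\phi$-block has trace norm $|\langle\phi,\gamma^{(1)}_\Psi\phi\rangle - 1| = \laa\Psi,q_1^\phi\Psi\raa =: \varepsilon_N$. The $q^\phi q^\phi$-block is nonnegative with trace $\text{Tr}(q^\phi \gamma^{(1)}_\Psi q^\phi) = 1 - \langle\phi,\gamma^{(1)}_\Psi\phi\rangle = \varepsilon_N$, so its trace norm is $\varepsilon_N$. The off-diagonal blocks are controlled by Cauchy--Schwarz for the Hilbert--Schmidt/trace pairing: $\|p^\phi\gamma^{(1)}_\Psi q^\phi\|_{\text{Tr}} \leq \sqrt{\text{Tr}(p^\phi \gamma^{(1)}_\Psi p^\phi)}\,\sqrt{\text{Tr}(q^\phi \gamma^{(1)}_\Psi q^\phi)} \leq \sqrt{\varepsilon_N}$ (using $\text{Tr}(p^\phi\gamma^{(1)}_\Psi p^\phi)\le 1$ and the factorization $\gamma^{(1)}_\Psi = (\gamma^{(1)}_\Psi)^{1/2}(\gamma^{(1)}_\Psi)^{1/2}$). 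Collecting, $\text{Tr}|\gamma^{(1)}_\Psi - |\phi\rangle\langle\phi|| \leq 2\varepsilon_N + 2\sqrt{\varepsilon_N}$, which tends to $0$ as $\varepsilon_N = \laa\Psi,q_1^\phi\Psi\raa \to 0$. Finally, since $\text{Var}_{H_{W_\beta,\cdot}}(\Psi)$ is a common summand, $\alpha(\Psi,\phi)\to 0$ iff both $\laa\Psi,q_1^\phi\Psi\raa\to 0$ and $\text{Var}_{H_{W_\beta,\cdot}}(\Psi)\to 0$, giving the stated equivalence.

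The only mildly delicate point — hardly an obstacle — is the Cauchy--Schwarz bound on the off-diagonal blocks; everything else is bookkeeping with the four-block decomposition. An alternative to writing this out would be to invoke the known equivalence (e.g. \cite{rodnianskischlein, michelangeli, knowles}) that for a rank-one projection $|\phi\rangle\langle\phi|$ one has $\text{Tr}|\gamma^{(1)}_\Psi - |\phi\rangle\langle\phi|| \leq C\sqrt{1 - \langle\phi,\gamma^{(1)}_\Psi\phi\rangle}$ and conversely $1 - \langle\phi,\gamma^{(1)}_\Psi\phi\rangle \leq \text{Tr}|\gamma^{(1)}_\Psi - |\phi\rangle\langle\phi||$, and simply cite it, then append the trivial remark about the shared variance term.
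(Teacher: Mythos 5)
Your proof is correct and follows essentially the same route as the paper: both reduce the claim to the equivalence between $\laa \Psi, q_1^\phi \Psi\raa \to 0$ and trace-norm convergence $\gamma^{(1)}_\Psi \to |\phi\rangle\langle\phi|$, with the shared variance term handled trivially. The only difference is that the paper simply cites the two-sided inequality $\laa \Psi, q^{\phi}_1 \Psi \raa \leq \text{Tr} | \gamma^{(1)}_{\Psi}-|\phi\rangle\langle\phi| | \leq \sqrt{8\laa \Psi, q^{\phi}_1 \Psi \raa}$ from \cite{knowles}, whereas you prove it via the four-block decomposition — your closing remark that one could instead "simply cite it" is exactly what the authors do.
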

\begin{proof}
 $\lim_{N\to\infty}
\laa \Psi, q^{\phi}_1 \Psi \raa
=0  \Leftrightarrow  \lim_{N\to\infty}\gamma^{(1)}_{\Psi}=|\phi\rangle\langle\phi|\text{ in trace norm}$ follows from the inequality
$\laa \Psi, q^{\phi}_1 \Psi \raa
\leq
\text{Tr} | \gamma^{(1)}_{\Psi}-|\phi\rangle\langle\phi| |
\leq
 \sqrt{ 8
\laa \Psi, q^{\phi}_1 \Psi \raa}
$, see \cite{knowles}.
\end{proof}

\begin{definition}\label{alphasplit}
Let 
\begin{equation}\label{defz}\potdiff^\phi_\beta(x_j,x_k)=W_{\beta}(x_j-x_k)-\frac{a}{N-1}|\phi|^2(x_j)-\frac{a
}{N-1}|\phi|^2(x_k)
\;.
\end{equation}
Define the functional $\gamma:\LZN \times L^2(\mathbb{R}^2, \mathbb{C})
\to\mathbb{R}^+_0$ by
\begin{align}
\gamma(\Psi,\phi)
\label{ppqp}
=&2N\left|\laa\Psi ,p_{1}p_{2}\potdiff^\phi_\beta(x_1,x_2)q_{1}p_{2}
\Psi\raa\right|
\\
\label{ppqq}
+&2N\left|\laa\Psi ,p_{1}p_{2}\potdiff^\phi_\beta(x_1,x_2) q_{1}q_{2}
\Psi\raa\right|
\\
\label{qqpqq}
+&2N\left|\laa\Psi ,q_{1}p_{2}\potdiff^\phi_\beta(x_1,x_2)q_{1}q_{2}
\Psi\raa\right|\;.
\end{align}
\end{definition}

\begin{lemma}\label{ableitung}

Let $\Psi_t$ the unique solution to $i \partial_t \Psi_t
=  H_{W_\beta, t} \Psi_t$ with initial datum $\Psi_0 \in L^2_{s}(\mathbb{R}^{2N}, \mathbb{C}) \cap 
\mathcal{D}(  (H_{W_\beta, 0})^2)
,\; 
\|\Psi_0\|=1$. 
Let $\phi_t$ the unique solution to $i \partial_t \phi_t
= h^{\text{NLS}}_{a, t} \phi_t$ with initial datum $\phi_0 \in H^2(\mathbb{R}^2,\mathbb{C}) \;,\|\phi_0\|=1$.
Let $\alpha(\Psi_t,\phi_t)$ be defined as in Definition \ref{defm}. Then
\begin{align}
\label{lemmaableitungeq}
\frac{d}{dt}
 \alpha(\Psi_t,\phi_t)\leq
 \gamma(\Psi_t,\phi_t)
+
\left|
\frac{d}{dt}
\text{Var}_{ H_{W_\beta, t}} (\Psi_t)
\right|
\;.
\end{align}

\end{lemma}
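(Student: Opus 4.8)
The plan is to compute $\frac{d}{dt}\alpha(\Psi_t,\phi_t)$ directly from its definition $\alpha(\Psi_t,\phi_t)=\laa\Psi_t,q^{\phi_t}_1\Psi_t\raa+\text{Var}_{H_{W_\beta,t}}(\Psi_t)$. The variance term is simply carried along: its time derivative contributes the second summand $\left|\frac{d}{dt}\text{Var}_{H_{W_\beta,t}}(\Psi_t)\right|$ on the right-hand side, so the real work is to show $\frac{d}{dt}\laa\Psi_t,q^{\phi_t}_1\Psi_t\raa\leq\gamma(\Psi_t,\phi_t)$. First I would differentiate $\laa\Psi_t,q^{\phi_t}_1\Psi_t\raa$, noting that both $\Psi_t$ and $\phi_t$ depend on $t$: the time derivative produces a commutator-type expression $\im\laa\Psi_t,[H_{W_\beta,t}-\sum_{j=1}^N h^{\text{NLS}}_{a,t}(x_j),\, q^{\phi_t}_1]\Psi_t\raa$, using that $\im\partial_t\Psi_t=H_{W_\beta,t}\Psi_t$ and that the generator of $t\mapsto q^{\phi_t}_1$ is $-\im[\,h^{\text{NLS}}_{a,t}(x_1),q^{\phi_t}_1]$ (the external potential $A_t$ and kinetic parts cancel against the terms coming from $H_{W_\beta,t}$, leaving only interaction contributions). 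Since $p_1$ commutes with $h^{\text{NLS}}_{a,t}(x_j)$ for $j\neq 1$ and the free/one-body parts of $H_{W_\beta,t}$ also commute with $q_1^{\phi_t}$ in the relevant way, what survives is the difference between the true pair interaction $\sum_{i<j}W_\beta(x_i-x_j)$ and the mean-field term $\sum_j a|\phi_t|^2(x_j)$.

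Next I would use permutation symmetry of $\Psi_t$ to reduce the sum over pairs to $N-1$ copies of the $(1,2)$ term, and recognize that the self-consistency of $\phi_t$ (the NLS equation) is exactly what makes the mean-field counterterm $\frac{a}{N-1}(|\phi_t|^2(x_1)+|\phi_t|^2(x_2))$, summed appropriately, match the nonlinear term. This is the point where the operator $\potdiff^\phi_\beta(x_1,x_2)=W_\beta(x_1-x_2)-\frac{a}{N-1}|\phi|^2(x_1)-\frac{a}{N-1}|\phi|^2(x_2)$ from Definition \ref{alphasplit} appears naturally; the derivative becomes, up to the symmetry factor, $2N\,\text{Im}\laa\Psi_t,\potdiff^{\phi_t}_\beta(x_1,x_2)\,q_1^{\phi_t}\Psi_t\raa$ (together with a term where $q_1$ sits on the other side, which by taking real/imaginary parts collapses). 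Then I would insert the resolution of the identity $1=p_1p_2+p_1q_2+q_1p_2+q_1q_2$ on the left of $\potdiff^{\phi_t}_\beta(x_1,x_2)$. Terms with $q_1$ already on the left pair with $q_1$ on the right and, being of the form $\laa\Psi,q_1(\cdots)q_1\Psi\raa$, are real, so their imaginary part vanishes — this kills the $q_1p_2$ and $q_1q_2$ left-contributions except where $\potdiff_\beta$ shifts a $p$ to a $q$ on the right. After symmetrization one is left with precisely the three types $p_1p_2\,\potdiff_\beta\,q_1p_2$, $p_1p_2\,\potdiff_\beta\,q_1q_2$, and $q_1p_2\,\potdiff_\beta\,q_1q_2$ (the last arising from relabeling $x_1\leftrightarrow x_2$ in a $p_1q_2\,\potdiff_\beta\,q_1q_2$-type term), and bounding their imaginary parts by their absolute values gives exactly the three lines \eqref{ppqp}, \eqref{ppqq}, \eqref{qqpqq} defining $\gamma(\Psi_t,\phi_t)$.

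The main obstacle I anticipate is the bookkeeping around the counterterm: making sure the factor $N-1$ versus $N$ is handled correctly (the pair sum has $\binom{N}{2}$ terms, the mean-field sum has $N$ terms, and the definition of $\potdiff_\beta$ with $\frac{a}{N-1}$ is tuned so that $N$ times the $(1,2)$-term reproduces the full nonlinearity $a|\phi_t|^2$ acting on a single particle), and verifying that all the one-body pieces — kinetic energies $-\Delta_j$, external potential $A_t(x_j)$ for $j\geq 2$, and the self-interaction-free part — genuinely drop out of the commutator with $q_1^{\phi_t}$. The kinetic and potential terms with $j\geq 2$ commute with $q_1^{\phi_t}$ outright; the $j=1$ kinetic and $A_t$ terms combine with $\partial_t$ acting on $\phi_t$ through the NLS equation, and the only residue is the nonlinear piece, which is what generates $\potdiff_\beta$. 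A secondary technical point is justifying the differentiation under the integral and the use of the Schrödinger equations in the weak sense, which is legitimate because $\Psi_0\in\mathcal{D}((H_{W_\beta,0})^2)$ and the domains are time-invariant under (A2) (as noted in the remarks following Theorem \ref{theo}), and because $\phi_0\in H^2$ with $\phi_t$ staying in $H^2$. Once these cancellations are in place, the estimate \eqref{lemmaableitungeq} follows by collecting the surviving terms and applying the triangle inequality.
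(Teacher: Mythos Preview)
Your proposal is correct and follows essentially the same route as the paper: differentiate $\laa\Psi_t,q_1^{\phi_t}\Psi_t\raa$, reduce via the Schr\"odinger and NLS equations to $-2(N-1)\,\text{Im}\,\laa\Psi_t,\potdiff^{\phi_t}_\beta(x_1,x_2)q_1^{\phi_t}\Psi_t\raa$, then decompose with $p$'s and $q$'s. The one step your write-up leaves implicit is why the cross term $\text{Im}\,\laa\Psi_t,p_1^{\phi_t}q_2^{\phi_t}\potdiff^{\phi_t}_\beta(x_1,x_2)q_1^{\phi_t}p_2^{\phi_t}\Psi_t\raa$ drops out; the paper observes that the $(x_1\leftrightarrow x_2)$-symmetry of $\Psi_t$ and of $\potdiff_\beta$ makes this quantity equal to its own complex conjugate, hence real---your phrase ``after symmetrization'' presumably intends this, and making it explicit closes the only loose end.
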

\begin{remark}
The three different contributions of $\gamma(\Psi, \phi)$ can be identified with three distinct
transitions of particles out of the condensate described by $\phi$. The first line can be identified as the interaction of two particles in the state $\phi$, causing one particle to leave the condensate. The second line estimates the evaporation of two particles. The last contribution describes the interaction of one particle in the condensate with one particle outside the condensate, causing the particle in the state $\phi$ to leave the condensate.
\end{remark}

\begin{proof} For the proof of the Lemma we restore the upper index $\phi_t$ in order to pay respect to the time dependence of $p_1^{\phi_t}$ and $q_1^{\phi_t}$.
The proof is a straightforward calculation of 
\begin{align*}&
\frac{d}{dt}\laa\Psi_t,q_1^{\phi_t}\Psi_t\raa
\\=&
i\laa  H_{W_\beta, t}\Psi_t
,q_1^{\phi_t}\;\Psi_t\raa
-
i\laa\Psi_t
,q_1^{\phi_t}\; H_{W_\beta, t}\Psi_t\raa
-
i\laa \Psi_t
,
[
-\Delta_1+ a |\phi_t|^2(x_1)+ A_t(x_1),
q_1^{\phi_t}
\;]
\Psi_t\raa
\nonumber
\\=&
i(N-1)\laa\Psi_t
,[\potdiff^{\phi_t}_\beta(x_1,x_2),q^{\phi_t}_1 \;]
\Psi_t\raa
=
-2  (N-1)
\text{Im}
\left(
\laa\Psi_t
,\potdiff^{\phi_t}_\beta(x_1,x_2) q^{\phi_t}_1 
\Psi_t\raa
\right)
\;.
\end{align*}
 Using the identity
$1=p_1^{\phi_t}+ q_1^{\phi_t}$, we obtain
\begin{align*}
\frac{d}{dt}\laa\Psi_t,q_1^{\phi_t}\Psi_t\raa
=
-&
2  (N-1)
\text{Im}
\left(
\laa\Psi_t
,p^{\phi_t}_1\potdiff^{\phi_t}_\beta(x_1,x_2) q^{\phi_t}_1 
\Psi_t\raa
\right)
\\
=
-&
2  (N-1)
\text{Im}
\left(
\laa\Psi_t
,p^{\phi_t}_1 p^{\phi_t}_2\potdiff^{\phi_t}_\beta(x_1,x_2) q^{\phi_t}_1 p^{\phi_t}_2 
\Psi_t\raa
\right)
\\
-&
2  (N-1)
\text{Im}
\left(
\laa\Psi_t
,p^{\phi_t}_1 p^{\phi_t}_2\potdiff^{\phi_t}_\beta(x_1,x_2) q^{\phi_t}_1 q^{\phi_t}_2 
\Psi_t\raa
\right)
\\
-&
2  (N-1)
\text{Im}
\left(
\laa\Psi_t
,p^{\phi_t}_1 q^{\phi_t}_2 \potdiff^{\phi_t}_\beta(x_1,x_2) q^{\phi_t}_1 p^{\phi_t}_2 
\Psi_t\raa
\right)
\\
-&
2  (N-1)
\text{Im}
\left(
\laa\Psi_t
,p^{\phi_t}_1 q^{\phi_t}_2 \potdiff^{\phi_t}_\beta(x_1,x_2) q^{\phi_t}_1 q^{\phi_t}_2 
\Psi_t\raa
\right) \;.
\end{align*}
Note that $\text{Im}
\left(
\laa\Psi_t
,p^{\phi_t}_1 q^{\phi_t}_2 \potdiff^{\phi_t}_\beta(x_1,x_2) q^{\phi_t}_1 p^{\phi_t}_2 
\Psi_t\raa \right)=0$, which concludes the proof.

\end{proof}
We now establish the Gr\"onwall estimate.
\begin{lemma} \label{gammalemma}
Let $\Psi_t$ the unique solution to $i \partial_t \Psi_t
=  H_{W_\beta, t} \Psi_t$ with initial datum 
$\Psi_0 \in L^2_{s}(\mathbb{R}^{2N}, \mathbb{C})
\cap \mathcal{D} ( H_{W_\beta, 0} ^2) 
 \;, \|\Psi_0\|=1$.
 Let $\phi_t$ the unique solution to $i \partial_t \phi_t
= h^{\text{NLS}}_{a, t} \phi_t$ with initial datum $\phi_0 \in L^{2}(\mathbb{R}^2,\mathbb{C}), \;\|\phi_0\|=1$.
 Assume (A1)-(A5).
Then,
\begin{align} 
\label{theoremeq1}
\gamma(\Psi_t,\phi_t) \leq &
\mathcal{K}(\phi_t)
\ln(N)^{1/2}
\left(
\alpha(\Psi_t,\phi_t)
 + N^{-2 \beta} \ln(N)^{1/2}
 +
 N^ {-1/3} \ln(N)^{3/2}
 \right)\;.
\end{align}
\end{lemma}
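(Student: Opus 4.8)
The plan is to estimate the three terms \eqref{ppqp}, \eqref{ppqq}, \eqref{qqpqq} in the definition of $\gamma(\Psi_t,\phi_t)$ one at a time, exploiting throughout the smallness of $\laa\Psi_t,q_1\Psi_t\raa\le\alpha(\Psi_t,\phi_t)$ together with the stability assumption (A1) to control kinetic energy of particles that have left the condensate. The first step is to record the standard combinatorial identities for the $p$--$q$ formalism: $\|q_1\Psi_t\|^2=\laa\Psi_t,q_1\Psi_t\raa$, the exchange symmetry that lets one trade $Nq_1q_2$-type expressions for the pair-counting operator, and the fact that $|\phi_t|^2$ and its derivatives are bounded by $\mathcal{K}(\phi_t)$ via the Sobolev embedding $H^2\cap W^{2,\infty}\subseteq H^4$ quoted in the Notation section. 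I would also fix the decomposition $\potdiff^{\phi_t}_\beta(x_1,x_2)=W_\beta(x_1-x_2)-\frac{a}{N-1}|\phi_t|^2(x_1)-\frac{a}{N-1}|\phi_t|^2(x_2)$ and handle the mean-field part $\frac{a}{N-1}(|\phi_t|^2(x_1)+|\phi_t|^2(x_2))$ separately, since there $NW_\beta$ is replaced by the smooth bounded $a|\phi_t|^2$ and all estimates are elementary; the genuine work is in the singular part involving $W_\beta$.

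For the singular part I would use the microscopic structure of $W_\beta$. The key device is to absorb one factor of $W_\beta$ by a ``potential-to-kinetic'' trade: writing $W_\beta=\Delta U_\beta$ (or more precisely solving $-\Delta h_\beta = W_\beta$ on the support scale $N^{-\beta}$, with $\|h_\beta\|_\infty\lesssim N^{-1}\ln N$, $\|\nabla h_\beta\|_2\lesssim N^{-1}$, $\|h_\beta\|_1\lesssim N^{-1-2\beta}$), integrating by parts to move a gradient onto $\Psi_t$ or onto $\phi_t$, and then bounding the resulting $\|\nabla_1 q_1\Psi_t\|$-type quantities. Here is exactly where the paper's new idea enters: rather than controlling $\|\nabla_1 q_1\Psi_t\|$ directly, one controls $\|q_2\nabla_1\Psi_t\|$, which by (A1) and the energy-variance bound (A3) is bounded in terms of $\alpha(\Psi_t,\phi_t)$ plus a vanishing error — this is the content of the lemma referred to as \ref{varianzlemma} in the introduction, which I would invoke as a black box. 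Cauchy--Schwarz on each of \eqref{ppqp}--\eqref{qqpqq}, splitting off either one $q$ factor (contributing $\sqrt{\laa\Psi_t,q_1\Psi_t\raa}$) or one gradient (contributing $\|q_2\nabla_1\Psi_t\|$), then produces the bound, with the logarithmic factors $\ln(N)^{1/2}$ coming from $\|h_\beta\|_\infty$ in two dimensions and the powers $N^{-2\beta}$, $N^{-1/3}$ arising from $\|h_\beta\|_1$ and from optimizing the split between the ``short'' and ``long'' range treatments of $W_\beta$ (the $N^{-1/3}$ typically comes from choosing an intermediate length scale to interpolate between an $L^1$ and an $L^\infty$ estimate of a mollified $W_\beta$). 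The line \eqref{ppqq} (double evaporation) needs the extra care of inserting a further $1=p+q$ and using $\|p_1^{\phi_t}W_\beta(x_1-x_2)p_1^{\phi_t}\|_{\mathrm{op}}\lesssim\|\phi_t\|_\infty^2\|W_\beta\|_1 = \mathcal{K}(\phi_t)N^{-1}$, which is where the gain making the $2N$ prefactor harmless comes from.

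The main obstacle I anticipate is the bookkeeping for the term \eqref{qqpqq}, $2N|\laa\Psi_t,q_1p_2\potdiff^{\phi_t}_\beta(x_1,x_2)q_1q_2\Psi_t\raa|$: here one $q_1$ already sits on the left and a $q_1q_2$ on the right, so naive Cauchy--Schwarz only yields $\laa\Psi_t,q_1\Psi_t\raa$ without a gain from the $N$-scaling, and one must instead redistribute the interaction — e.g. write $q_1p_2 W_\beta q_1 q_2 = q_1 p_2 W_\beta (p_1+q_1) q_1 q_2$ is not available, so one symmetrizes in the particle indices and uses the operator $\widehat{W}_\beta:=\sum_{i<j}$-type counting together with a commutator estimate $[W_\beta(x_1-x_2),q_1]$ to trade the bad $q_1$ on the right for a gradient, again feeding into $\|q_2\nabla_1\Psi_t\|$. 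Controlling this commutator at the correct order in $N$, uniformly over $\beta<1$, while keeping only a single logarithmic loss, is the delicate point; everything else is Cauchy--Schwarz, the $p$--$q$ identities, and the a priori bounds from (A1)–(A5). Once the three lines are bounded by $\mathcal{K}(\phi_t)\ln(N)^{1/2}(\laa\Psi_t,q_1\Psi_t\raa^{1/2}(\dots)+\dots)$, absorbing the half-power of $\laa\Psi_t,q_1\Psi_t\raa$ by Young's inequality $xy\le\tfrac12(x^2+y^2)$ and recombining with $\mathrm{Var}_{H_{W_\beta,\cdot}}(\Psi_t)$ gives \eqref{theoremeq1}.
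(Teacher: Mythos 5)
Your overall strategy matches the paper's: estimate each of \eqref{ppqp}--\eqref{qqpqq} separately, smear $W_\beta$ via a two-dimensional Coulomb potential $h$ with $\Delta h = W_\beta - U_{\beta_1}$, integrate by parts, and funnel the resulting gradients into $\|q_2\nabla_1\Psi_t\|$, which Lemma~\ref{varianzlemma} controls via (A1) and (A3). You also correctly place the $N^{-1/3}$ as coming from optimizing over an intermediate length scale $\beta_1$ and the weight $\eta$ (the paper picks $\eta = \beta_1 = 1/3$), and you correctly see that $\|p_1 W_\beta(x_1-x_2) p_1\|_{\mathrm{op}} \sim \|\phi_t\|_\infty^2\|W_\beta\|_1 \sim N^{-1}$ is what tames the $N$-prefactor in \eqref{ppqq}.

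Where you go astray is in your anticipated ``main obstacle,'' the term \eqref{qqpqq}. You propose a commutator estimate $[W_\beta(x_1-x_2), q_1]$ to trade $q_1$ for a gradient, but this commutator is not small: it equals $p_1 W_\beta - W_\beta p_1$, whose operator norm scales like $\|W_\beta\|\,\|\phi_t\|_\infty \sim N^{-1+\beta}$, so $N$ times it is $N^\beta$ and nothing is gained. The paper avoids this entirely: after writing $W_\beta = U_0 + \Delta_2 h_{0,\beta}$, it integrates by parts only in the $x_2$ variable. One gradient then lands on $p_2$, giving a bounded object ($\| h_{0,\beta}(x_1-x_2)\nabla_2 p_2\|_{\mathrm{op}} \lesssim \|h_{0,\beta}\|\,\|\nabla\phi_t\|_\infty$), and the other lands on $q_1 q_2 \Psi$ as $\nabla_2 q_1 q_2 \Psi$. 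Since $q_1$ acts on the first coordinate and $\nabla_2$ on the second, they commute \emph{exactly}, so $\|\nabla_2 q_1\Psi_t\| = \|q_1\nabla_2\Psi_t\|$, which by symmetry of $\Psi_t$ equals $\|q_2\nabla_1\Psi_t\|$ and is bounded by Lemma~\ref{varianzlemma}(b). No commutator of $W_\beta$ with $q_1$ ever appears, and the $q_1$ on the right is not ``bad'' at all --- it is simply carried along. The half-power of the logarithm in \eqref{theoremeq1} then comes from the Young-inequality weight $\ln(N)^{\pm1/2}$ used to balance $\|q_1\Psi_t\|^2$ against $N^2\|h\|^2\|\nabla_2 q_1\Psi_t\|^2$, not directly from $\|h\|_\infty$ as you suggest.

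One further small gap: your plan does not explicitly address the variance derivative term $|\tfrac{d}{dt}\mathrm{Var}_{H_{W_\beta,t}}(\Psi_t)|$, which the paper includes in $\alpha$ and estimates separately in Lemma~\ref{hnorms}(d) via the decomposition $\dot A_t(x_1) = (p_1+q_1)\dot A_t(x_1)(p_1+q_1)$; this is elementary under (A2) but must be done to close the Gr\"onwall loop.
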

The proof of this Lemma can be found in Section \ref{gammaestimates}.
\medskip\\
\begin{refproof}\ref{theo} (a):
Once we have proven Lemma \ref{gammalemma}, we obtain with Gr\"onwall's Lemma that
\begin{align}
&
\nonumber
\alpha(\Psi_t,\phi_t)\leq  
N^{\frac{\int_0^t ds \mathcal{K}(\phi_s)}{\ln(N)^{1/2}}}
\alpha(\Psi_0,\phi_0)
\\
+&
\int_0^t ds
\mathcal{K}(\phi_s)
N^{\frac{\int_s^t d \tau \mathcal{K}(\phi_\tau)}{\ln(N)^{1/2}}}
\left(
N^{-2 \beta} \ln(N)
 +
 N^ {-1/3} \ln(N)^2
 \right)
\;.
\end{align}
Note that under the assumptions (A2) and (A4)
there exists a time-dependent constant $C_t < \infty$, such that
 $\int_0^t ds
\mathcal{K}(\phi_s) \leq C_t$.
Furthermore, the assumption
$\lim\limits_{N\to\infty}\left(
N^{\delta}
\text{Tr} | \gamma^{(1)}_{\Psi_0}-|\phi_0\rangle\langle\phi_0| |
\right) =0 $ for some $\delta>0$
then implies together with (A3)
\begin{align*}
\lim_{N\to\infty}
\left(
N^{\frac{\int_0^t ds \mathcal{K}(\phi_s)}{\ln(N)^{1/2}}}
\alpha(\Psi_0,\phi_0)
\right)=0
\;,
\end{align*}
since
$\laa \Psi, q^{\phi}_1 \Psi \raa
\leq
\text{Tr} | \gamma^{(1)}_{\Psi}-|\phi\rangle\langle\phi| |
\leq
 \sqrt{ 8
\laa \Psi, q^{\phi}_1 \Psi \raa}
$, see \cite{knowles}.
Therefore,
\begin{align}
\nonumber 
\text{Tr} | \gamma^{(1)}_{\Psi_t}-|\phi_t\rangle\langle\phi_t| |
\leq
&
C
N^{\frac{C_t}{2 \ln(N)^{1/2}}- \delta/2}
\\
\label{tracestimate}
+&
C
\sqrt{
C_t
N^{\frac{\sup_{s \in [0,t]}|C_t-C_s|}{\ln(N)^{1/2}}}
\left(
N^{-2 \beta} \ln(N)
 +
 N^ {-1/3} \ln(N)^2
 \right)
 }
\;.
\end{align}

This proves  Theorem \ref{theo} (a).

 \begin{flushright} $ \Box $ \end{flushright}
\end{refproof}

\subsection{Energy estimates}

\begin{lemma}
\label{varianzlemma}
Let $\Psi_0 \in L^2_{s}(\mathbb{R}^{2N}, \mathbb{C}) \cap 
\mathcal{D}(  (H_{W_\beta, 0})^2)
$ with $\|\Psi_0\|=1$. 
Let $\Psi_t$ the unique solution to $i \partial_t \Psi_t
=  H_{W_\beta, 0} \Psi_t$ with initial datum $\Psi_0,\; \|\Psi_0 \|=1$. 
 Let $\phi_t$ the unique solution to $i \partial_t \phi_t
= h^{\text{NLS}}_{a, t} \phi_t$ with initial datum $\phi_0 \in L^{2}(\mathbb{R}^2,\mathbb{C}), \;\|\phi_0\|=1$.
Assume (A1),(A2), (A4) and (A5).
Then,
\begin{itemize}
\item[(a)]
 \begin{align}
  \| \nabla_1 \Psi_t \| \leq \mathcal{K}(\phi_t) \;.
 \end{align}
 
\item[(b)] 
\begin{align}
\|q^{\phi_t}_2 \nabla_1 \Psi_t \|^2 \leq 
\mathcal{K}(\phi_t)
\left(
\alpha (\Psi_t, \phi_t)
+
N^{-1/2} 
\right)
\;.
\end{align}

\item[(c)]
For 
any $p \in \mathbb{N}$, there exists a constant $C_p$, depending on $p$, such that
\begin{align}
&
\left\|
\sqrt{|N W_\beta (x_1-x_2)|}
 q^{\phi_t}_1 q^{\phi_t}_2  \Psi_t 
 \right\|^2
 \leq
\mathcal{K}(\phi_t)
C_p
N^{\beta/p}
\left(
\alpha (\Psi_t, \phi_t)
+
N^{-1/2} 
\right)
\;.
\end{align}
\end{itemize}

\end{lemma}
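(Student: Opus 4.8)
\textbf{Proof proposal for Lemma \ref{varianzlemma}.}

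The plan is to control the kinetic energy of the full wave function via the stability-of-second-kind assumption, and then to extract the $q_2$-projected kinetic energy by comparing to the NLS energy of $\phi_t$.

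\emph{Part (a).} The key input is conservation of energy together with (A1). Since $\Psi_t$ solves the Schr\"odinger equation with the time-\emph{independent} Hamiltonian $H_{W_\beta,0}$ in this Lemma, we have $\laa \Psi_t, H_{W_\beta,0} \Psi_t \raa = \laa \Psi_0, H_{W_\beta,0} \Psi_0 \raa$, which is bounded by $CN$ by (A5). Now write $H_{W_\beta,0} = \epsilon \sum_k (-\Delta_k) + H^{(\epsilon,0)}_{W_\beta,0}$ with $H^{(\epsilon,0)}_{W_\beta,0} \geq -CN$ by (A1)+(A2). Hence $\epsilon \sum_k \|\nabla_k \Psi_t\|^2 \leq \laa \Psi_t, H_{W_\beta,0}\Psi_t\raa + CN \leq CN$, and by permutation symmetry $\|\nabla_1 \Psi_t\|^2 \leq C/\epsilon = \mathcal{K}(\phi_t)$. (If one wants to allow the time-dependent Hamiltonian as well, one uses instead $\frac{d}{dt}\laa\Psi_t,H_{W_\beta,t}\Psi_t\raa = \laa\Psi_t,\dot A_t \Psi_t\raa$ together with (A2) and a Gr\"onwall argument; the external-potential contribution is controlled via $\|\dot A_t\|_\infty$, absorbed in $\mathcal{K}(\phi_t)$.)

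\emph{Part (b).} This is the heart of the Lemma. First split $\|\nabla_1 \Psi_t\|^2 = \|p_2^{\phi_t}\nabla_1\Psi_t\|^2 + \|q_2^{\phi_t}\nabla_1\Psi_t\|^2$ using $1 = p_2 + q_2$ (note $\nabla_1$ commutes with $p_2,q_2$). The strategy is to show that $\|\nabla_1\Psi_t\|^2$ is \emph{close} to the quantity $\langle\phi_t,-\Delta\phi_t\rangle$, up to the error terms claimed, which forces the $q_2$-part to be small. Concretely, I would express the per-particle energy $N^{-1}\laa\Psi_t,H_{W_\beta,t}\Psi_t\raa$ (which is conserved or almost conserved, using Part (a)-type arguments for the external-potential part) and compare it to the NLS energy functional evaluated at $\phi_t$ (also conserved along the NLS flow). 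The kinetic term $N^{-1}\sum_k\|\nabla_k\Psi_t\|^2 = \|\nabla_1\Psi_t\|^2$ should match $\|\nabla\phi_t\|^2$; the interaction term $\frac{N-1}{2}\laa\Psi_t,W_\beta(x_1-x_2)\Psi_t\raa$ should match $\frac{a}{2}\langle|\phi_t|^2,|\phi_t|^2\rangle$ up to errors controlled by $\alpha(\Psi_t,\phi_t)$ (using $p_1p_2$-insertions, the definition of $\potdiff^\phi_\beta$, and the fact that $NW_\beta \rightharpoonup a\delta$); and the external term matches $\langle\phi_t,A_t\phi_t\rangle$. The difference of the two energies at time $0$ is $O(N^{-\delta})$ only under (A5)$'$, but here (A5) merely gives boundedness — so the comparison must instead be: kinetic energy of $\Psi_t$ $\leq$ (total energy) $-$ (interaction) $-$ (external), and one bounds the interaction energy \emph{from below} by $\frac{a}{2}\|\phi_t\|_4^4 + (\text{small})$ and the full energy from above. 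A cleaner route: use that $\laa\Psi_t,(p_1p_2 + \text{remainder})W_\beta p_1p_2\raa$ captures the main interaction, with the remainder controlled by $\laa\Psi_t,q_1\Psi_t\raa$ and the estimates of Part (c); combining, $\|\nabla_1\Psi_t\|^2 \le \|\nabla\phi_t\|^2 + \mathcal{K}(\phi_t)(\alpha(\Psi_t,\phi_t)+N^{-1/2})$. On the other hand $\|p_2\nabla_1\Psi_t\|^2 = \langle\phi_t,-\Delta\phi_t\rangle\laa\Psi_t,p_2\Psi_t\raa + (\text{cross terms}) \geq \|\nabla\phi_t\|^2(1-\laa\Psi_t,q_1\Psi_t\raa) - (\text{small})$; subtracting gives the bound on $\|q_2\nabla_1\Psi_t\|^2$.

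\emph{Part (c).} Here the idea is to interpolate: $\||NW_\beta|^{1/2}q_1q_2\Psi_t\|^2 = \laa\Psi_t, q_1q_2 |NW_\beta|(x_1-x_2) q_1q_2\Psi_t\raa$. Bound $|NW_\beta|$ by a smoothed version and use Sobolev-type / Hardy-type inequalities in two dimensions: since $W$ is compactly supported and bounded, $|NW_\beta(x)| \leq C N^{2\beta}\mathds{1}_{|x|\le CN^{-\beta}}$, and $\||NW_\beta|^{1/2}\psi\|^2 \le C N^{2\beta}\||x_1-x_2|\le CN^{-\beta}\} \psi\|^2 \leq C \|\psi\|\,\|\nabla_1\psi\|\cdot N^{\beta\cdot(\text{something})}$ by a standard localization lemma — the factor $N^{\beta/p}$ for arbitrary $p$ reflects the logarithmic failure of the Sobolev embedding $H^1(\mathbb{R}^2)\hookrightarrow L^\infty$, so one uses $H^1 \hookrightarrow L^q$ for all $q<\infty$ with constant growing like $q$, optimizing $q\sim p$. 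The $\nabla_1$ that appears is then $\nabla_1 q_1 q_2\Psi_t$; one more $1=p_1+q_1$ split, together with Part (a) (for $\nabla_1 q_1 \sim \nabla_1$) and Part (b) (for the $q_2$-factor when hitting $p_1$), closes the estimate with the stated $\mathcal{K}(\phi_t)C_p N^{\beta/p}(\alpha(\Psi_t,\phi_t)+N^{-1/2})$.

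\emph{Main obstacle.} The delicate point is Part (b): making the comparison between the $N$-body kinetic energy and $\|\nabla\phi_t\|^2$ quantitative with only the error $\alpha(\Psi_t,\phi_t)+N^{-1/2}$ and without assuming (A5)$'$. This requires carefully matching the interaction energy of $\Psi_t$ to the NLS self-interaction term, which in turn needs the concentration estimate of Part (c) — so Parts (b) and (c) are genuinely coupled and must be proven together (or bootstrapped), with the singular nature of $NW_\beta$ in two dimensions (the $N^{\beta/p}$ losses) being the technical crux.
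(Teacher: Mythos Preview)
Your Part~(a) is correct and matches the paper's argument. Your Part~(c) sketch is also essentially the paper's approach (H\"older in $x_1$ followed by the 2D Gagliardo--Nirenberg inequality $\|\varrho\|_m \le C_m\|\nabla\varrho\|^{(m-2)/m}\|\varrho\|^{2/m}$, then bounding $\|\nabla_1 q_1 q_2\Psi_t\|$ via Part~(b)).

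The genuine gap is in Part~(b). Your route --- compare $\|\nabla_1\Psi_t\|^2$ with $\|\nabla\phi_t\|^2$ and subtract to isolate $\|q_2\nabla_1\Psi_t\|^2$ --- does not work here, and it misses the actual mechanism. Three concrete problems:
\begin{itemize}
\item You never use the energy \emph{variance}, yet $\text{Var}_{H_{W_\beta,\cdot}}(\Psi)$ is built into $\alpha$ precisely because it is the controlling quantity in this step. The paper's proof writes
\[
\epsilon\,\|q_2\nabla_1\Psi_t\|^2 \;\le\; \tfrac{1}{N-1}\llaa\Psi_t,\,q_2\,H_{W_\beta,t}\,q_2\,\Psi_t\rraa + C\,\|q_1\Psi_t\|^2,
\]
then inserts $H/N = (H/N - E) + E$ with $E=N^{-1}\laa\Psi_t,H\Psi_t\raa$. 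The fluctuation piece $\laa q_2\Psi_t,(H/N-E)\Psi_t\raa$ is bounded by Cauchy--Schwarz by $\|q_2\Psi_t\|\sqrt{\text{Var}}$ and goes straight into $\alpha$. What remains is the off-diagonal term $(N-1)^{-1}|\laa\Psi_t,q_2 H_{W_\beta,t}p_2\Psi_t\raa|$, which is estimated directly (kinetic, interaction, external potential); the $N^{-1/2}$ comes from the interaction contribution via a 2D Sobolev bound together with Part~(a).
\item Your comparison of $\|\nabla_1\Psi_t\|^2$ with $\|\nabla\phi_t\|^2$ requires the energy \emph{difference} to be small. That is assumption~(A5)$'$, which is \emph{not} hypothesized in this Lemma; (A5) alone gives only boundedness. (Indeed, the paper carries out exactly your comparison later, in Section~\ref{sobolevtracenorm}, but only after assuming~(A5)$'$.)
\item The identity you write, $\|p_2\nabla_1\Psi_t\|^2=\langle\phi_t,-\Delta\phi_t\rangle\laa\Psi_t,p_2\Psi_t\raa+(\text{cross terms})$, is false: $p_2$ acts on $x_2$ and commutes with $-\Delta_1$, so $\|p_2\nabla_1\Psi_t\|^2=\laa\Psi_t,(-\Delta_1)p_2\Psi_t\raa$, which does not factor through $\|\nabla\phi_t\|^2$.
\end{itemize}
A consequence of the correct argument is that there is no circularity between (b) and (c): (b) is proved on its own via the variance, and (c) then calls (b). The ``bootstrap'' you anticipate is not needed.
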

\begin{proof}
\begin{itemize}
\item[(a)]
Using (A1) together with (A2), we directly obtain
 the operator inequality
\begin{align*}
-\sum_{k=1}^N \epsilon \Delta_k
\leq &
 H_{W_\beta, t}
+CN .
\end{align*}

Using $ \frac{d}{dt}
N^{-1}
 \laa \Psi_t,  H_{W_\beta, t} \Psi_t \raa \leq
 \|\dot{A}_t \|_\infty
 $ together with (A5), the energy per particle
$N^{-1}
 \laa \Psi_t,  H_{W_\beta, t} \Psi_t \raa \leq \mathcal{K}(\phi_t) $ is uniformly bounded in $N$.
Since $\Psi_t$ is symmetric, we obtain
\begin{align*}
 N \epsilon \laa \Psi_t, -\Delta_1 \Psi_t \raa
=&
\laa \Psi_t, 
\left(
-\sum_{k=1}^N \epsilon \Delta_k
\right)
 \Psi_t \raa
\leq 
\mathcal{K}(\phi_t) N .
\end{align*}
\item[(b)] (see also \cite{leopold}.)
We estimate
\begin{align*}
&\epsilon
\| q_2 \nabla_1 \Psi_t \|^ 2
=
\frac{1}{N-1}
\laa \Psi_t, 
q_2
\epsilon
\sum_{k=1}^N(-\Delta_k) q_2 \Psi_t \raa
-
\frac{\epsilon}{N-1}
\laa \Psi_t, 
q_2
(-\Delta _2) q_2 \Psi_t \raa
\\
\leq 
&
\frac{1}{N-1}
\laa \Psi_t, 
q_2
 H_{W_\beta, t}
 q_2 \Psi_t \raa
 +
 C
\laa \Psi_t, q_1 \Psi_t \raa
\\
= &
\frac{N}{N-1}
\laa \Psi_t, 
q_2
\left(
\frac{ H_{W_\beta, t} }{N}
-
N^{-1} \laa \Psi_t,  H_{W_\beta, t} \Psi_t \raa
 \right)
 \Psi_t \raa +
\frac{1}{N-1} 
 \laa \Psi_t,  H_{W_\beta, t} \Psi_t \raa
 \laa \Psi_t, q_2 \Psi_t \raa
 \\
 -&
 \frac{1}{N-1}
\laa \Psi_t, 
q_2
 H_{W_\beta, t}
p_2
 \Psi_t \raa
  +
 C
\laa \Psi_t, q_1 \Psi_t \raa
\\
\leq &
C
\text{Var}_{ H_{W_\beta, t}} (\Psi_t)
 +
 \frac{1}{N-1}
 |
\laa \Psi_t, 
q_2
 H_{W_\beta, t}
p_2
 \Psi_t \raa
 |
 +
 2
\mathcal{K}(\phi_t)
\| q_1 \Psi_t \|^2 
.
\end{align*}
It remains to estimate
\begin{align*}
 &\frac{1}{N-1}
 |
\laa \Psi_t, 
q_2
 H_{W_\beta, t}
p_2
 \Psi_t \raa
 |
 \\
 \leq&
 \frac{1}{N-1}
|\laa \Psi_t, 
q_2
(-\Delta_2)
p_2
 \Psi_t \raa |
 +
| \laa \Psi_t, 
q_2
W_\beta (x_1-x_2)
p_2
 \Psi_t \raa |
 +
 \frac{1}{N-1}
| \laa \Psi_t,
 q_2
 A_t(x_1)
 p_2
 \Psi_t \raa |
 \\
 \leq &
\frac{1}{N-1}
\left(
\| \nabla_1 \Psi_t \|
\|\nabla \phi_t \|
+
\|\nabla \phi_t \|^2
\right)
+
\|W_\beta\| \| \phi_t \|_\infty \| \mathds{1}_{B_{CN^ {-\beta}}(0)}(x_1-x_2) \Psi_t \|
+
\|\phi_t\|_\infty^2 \|W_\beta\|_1
\\
+&
\frac{1}{N-1}
(
\|A_t^-\|_\infty
+
\|\sqrt{ A^+_t} \Psi_t \| \|\sqrt{ A_t^+} \phi_t \|
+
\langle \phi_t, A_t \phi_t \rangle
),
\end{align*}
where we used $q_2=1-p_2$ for all three contributions in the last inequality.
Recall the two-dimensional Sobolev's inequality as presented in e.g.  \cite{liebanalysis}, Theorem 8.5.
For any $\rho \in H^1(\mathbb{R}^2, \mathbb{C})$ and for any $2 \leq p < \infty$, there exists a constant $C_p$, depending on $p$, such that
\begin{align}
\| \rho \|_p^2
\leq
C_p
\left(
\| \rho \|^2 + \|\nabla \rho \|^2
\right)
\end{align}
holds.
It is shown in Theorem 8.5. in \cite{liebanalysis} that $C_p$ fulfills
$C_p \leq C p$.
We use this inequality in the $x_1$ variable and obtain together with H\"older's inequality, to obtain
\begin{align*} 
&
\| \mathds{1}_{B_{CN^ {-\beta}}(0)}(x_1-x_2) \Psi_t \|^2
\\
\leq &
\|  \mathds{1}_{B_{CN^ {-\beta}}(0)}\|_{ \frac{N}{N-1}} 
\int d^2 x_2\dots d^2x_N 
\left(
\int d^2x_1  |\Psi_t(x_1, \dots,x_N)|^{2N}
\right)^{1/N}
 \\
 \leq &
 C
 N^{1- 2 \beta}
 \int d^2 x_2\dots d^2x_N 
\left(
\int d^2x_1  |\nabla_1 \Psi_t(x_1, \dots,x_N)|^{2}
+
\int d^2 x_1  | \Psi_t(x_1, \dots,x_N)|^{2}
\right)
\;.
\\
\leq &
 C
  N^{1- 2 \beta}
 \left(
 \| \nabla_1\Psi_t \|^2
 +
  \| \Psi_t \|^2
 \right).
 \end{align*}
With $\|W_\beta\|  = CN^{-1 + \beta} $, we obtain together with (a)
\begin{align}
\|W_\beta\|  \| \mathds{1}_{B_{CN^ {-\beta}}(0)}(x_1-x_2) \Psi_t \|
\leq
\mathcal{K}(\phi_t)
N^{-1/2}.
\end{align}

Next, we show that $\|\sqrt{ A_t^+} \Psi_t \|$ and $ \|\sqrt{ A_t^+} \phi_t \|$ are uniformly
bounded in $N$. 
Using
 the operator inequality (A1) together with (A2) and (A5)
directly implies
\begin{align*}
\epsilon 
\laa \Psi_t
\sum_{k=1}^N A^+_t(x_k)
\Psi_t \raa
\leq 
 \laa \Psi_t,  H_{W_\beta, t} \Psi_t \raa +\mathcal{K}(\phi_t) N
\leq 
\mathcal{K}(\phi_t) N
\;.
\end{align*}
To control $\langle \phi_t, A^+_t \phi_t \rangle$,
let $\Omega_t =\phi_t^{\otimes N}$. Then
\begin{align*}
&\epsilon 
\langle \phi_t, A^+_t
\phi_t \rangle
\leq
N^{-1} \laa \Omega_t,  H_{W_\beta, t} \Omega_t \raa
+
 \mathcal{K}(\varphi_t)
\\
=&
\langle \phi_t, 
\left(
-\Delta 
+
\frac{a}{2} |\phi_t|^2
+
 A_t 
 \right)
 \phi_t \rangle
 +
  \mathcal{K}(\varphi_t)
 \\
 +&
 \langle \phi_t, 
\left(
\frac{1}{2}
 (N-1) W_\beta \ast |\phi_t|^2
-
\frac{a}{2} |\phi_t|^2
 \right)
 \phi_t \rangle
\;.
\end{align*}
Note that
\begin{align*}
&
\left|
\frac{d}{dt}
\langle \phi_t, 
\left(
-\Delta 
+
\frac{a}{2}
|\phi_t|^2 
+
 A_t 
 \right)
 \phi_t \rangle
 \right|
\leq
\| \dot{A}_t \|_\infty \;.
\end{align*}
which implies together with (A5)
\begin{align*}
\langle \phi_t, 
\left(
-\Delta 
+
\frac{a}{2}
|\phi_t|^2 
+
 A_t 
 \right)
 \phi_t \rangle
 \leq \mathcal{K}(\phi_t) \;.
\end{align*}
Furthermore,  we obtain as in \eqref{pppqcancelation}
\begin{align*}
& 
\left|
\langle \phi_t, 
\left(
 (N-1) W_\beta \ast |\phi_t|^2
-
a|\phi_t|^2
 \right)
 \phi_t \rangle
 \right|
\\
\leq &
\| \phi_t\|_\infty^2
\left(
\left\|
N W_\beta \ast |\phi_t|^2
-
a | \phi|^2
 \right\|
 +
 \|W_\beta\|_1 \|\phi_t\|_\infty^2
 \right)
 \\
\leq&
 \mathcal{K}(\varphi_t)
(
 N^{-2\beta} \ln(N) + N^{-1})
\;.
\end{align*}
This concludes the proof of (b).

\item[(c)]
First, we like to recall the following Gagliardo-Nirenberg interpolation inequality:
for any $\varrho \in H^1(\mathbb{R}^2,\mathbb{C})$
any for any $2<m<\infty$, there exists a constant $C_m$, depending only on $m$, such that
$\| \varrho \|_m \leq
 C_m \| \nabla \varrho\|^{\frac{m-2}{m}} \| \varrho \|^{\frac{2}{m}}
$ holds. 
For $1<p<\infty$, we 
estimate, using H\"older's - and the inequality above
\begin{align}
&
\nonumber
\left\|
\sqrt{|N W_\beta (x_1-x_2)|}
 q_1 q_2  \Psi_t 
 \right\|^2
 \leq
\| N W_\beta\|_\infty
\left\|
\mathds{1}_{B_{CN^{-\beta}}(0)}(x_1-x_2)
 q_1 q_2  \Psi_t 
 \right\|^2
 \\
 \nonumber
 \leq &
 C N^{2 \beta}
 \int_{\mathbb{R}^{2N-2}}d^2x_2 \dots d^2x_N
 \int_{\mathbb{R}^2} d^2x_1
 | (q_1 q_2  \Psi_t) (x_1, \dots, x_N)|^2
 \mathds{1}_{B_{CN^{-\beta}}(0)}(x_1-x_2)
 \\
 \nonumber
 \leq &
  C_p N^{2 \beta}
\|  \mathds{1}_{B_{CN^{-\beta}}(0)}\|_{\frac{p}{p-1}}
 \int_{\mathbb{R}^{2N-2}}d^2x_2 \dots d^2x_N
\left(
 \int_{\mathbb{R}^2} d^2x_1
 | (q_1 q_2  \Psi_t) (x_1, \dots, x_N)|^{2p}
\right)^{1/p}
\\
\nonumber
 \leq &
  C_p N^{ 2\beta
\left(1- \frac{p-1}{p}   \right)
  }
 \int_{\mathbb{R}^{2N-2}}d^2x_2 \dots d^2x_N
\left(
 \int_{\mathbb{R}^2} d^2x_1
 | (\nabla_1 q_1 q_2  \Psi_t) (x_1, \dots, x_N)|^{2}
\right)^{\frac{p-1}{p}}
\\
\label{labelneu}
\times &
\left(
 \int_{\mathbb{R}^2} d^2 \tilde{ x}_1
 | (q_1 q_2  \Psi_t) (\tilde{ x}_1, \dots, x_N)|^{2}
\right)^{1/p}
\end{align}
We use H\"older's inequality with respect to the $x_2, \dots x_N$-integration with the conjugate pair
$r = \frac{p}{p-1}$ and $s=p$ to obtain
\begin{align*}
\eqref{labelneu}
\leq
  C_p N^{ \frac{2\beta}{p}  
}
\| \nabla_1 q_1q_2 \Psi_t\|^ {2 \frac{p-1}{p}}
\| q_1 q_2 \Psi_t\|^{\frac{2}{p}}.
\end{align*}
Note that
\begin{align*}
\|\nabla_1 q_1 q_2 \Psi_t \|^2
\leq
2
\| \nabla_1 p_1 q_2 \Psi_t \|^2
+
2
\|\nabla_1 q_2 \Psi_t \|^2
\leq
\mathcal{K}(\phi_t)
\left(
\alpha( \Psi_t, \phi_t)
+
N^{-1/2}
\right).
\end{align*}
Renaming $p$, we thus obtain with part (b), that there exists a constant depending on $p$ such that
\begin{align*}
\left\|
\sqrt{|N W_\beta (x_1-x_2)|}
 q_1 q_2  \Psi_t 
 \right\|^2
 \leq
 C_p
 \mathcal{K}(\phi_t)
 N^{\beta/p}
\left(
 \alpha(\Psi_t, \phi_t)+N^{-1/2}\right) .
\end{align*}
\end{itemize}

\end{proof}

\subsection{Proof of Lemma \ref{gammalemma} }
\label{gammaestimates}

Following a general strategy, we will smear out the potential $W_\beta$ in order to use smoothness properties of $\Psi$. For this, we define
\begin{definition}\label{udef}
For any $0\leq\beta_1\leq  \beta$, we define
\begin{align}
U_{\beta_1}(x)=\left\{
         \begin{array}{ll}
         \frac{a}{\pi}
		N^{-1+2\beta_1}  & \hbox{for \ $|x|<N^{-\beta_1}$,} \\
           0 & \hbox{else.}
         \end{array}
       \right.
\end{align}

and
\begin{align} \label{defh} h_{\beta_1,\beta}(x)=  \frac{1}{2\pi} \int_{\mathbb R^2}   \ln|x-y| 
(W_{\beta}(y)-U_{\beta_1}(y))d^2y
\; .
\end{align}
\end{definition}
\begin{lemma}\label{ulemma}
For any $0\leq  \beta_1\leq  \beta$, we obtain with the above definition
\begin{enumerate}
\item
\begin{align}
&\Delta h_{\beta_1,\beta}=W_{\beta}-U_{\beta_1}
.
\end{align}
\item
 \begin{align}
 \|h_{\beta_1,\beta}\| \leq &C N^{-1-\beta_1}\ln(N) \text{ for } \beta_1>0
 \; ,
 \\
\|h_{0,\beta}\| \leq &C N^{-1} \text{ for } \beta>0
\; ,
\\
 \|\nabla h_{\beta_1,\beta}\|\leq& CN^{-1} (\ln (N))^{1/2} \;,
\end{align}
 \end{enumerate}
\end{lemma}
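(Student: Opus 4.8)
Write $g:=W_\beta-U_{\beta_1}$. Everything rests on two facts: $\tfrac{1}{2\pi}\ln|\cdot|$ is the fundamental solution of $\Delta$ on $\mathbb{R}^2$, and $g$ has two vanishing moments, $\int_{\mathbb{R}^2}g\,d^2y=0$ (because $\int W_\beta=a/N=\int U_{\beta_1}$) and $\int_{\mathbb{R}^2}y\,g(y)\,d^2y=0$ (because $W$, hence $W_\beta$, and $U_{\beta_1}$ are spherically symmetric). I will use throughout that $g$ is supported in a ball $B_R(0)$ with $R=CN^{-\beta_1}$ — here the assumption $\beta_1\le\beta$ is what lets the support of $W_\beta$ be absorbed — together with the elementary bounds $\|g\|_1\le CN^{-1}$, $\|W_\beta\|_\infty\le CN^{-1+2\beta}$, $\|U_{\beta_1}\|_\infty\le CN^{-1+2\beta_1}$, and the moment bounds $\int|y|\,|g(y)|\,d^2y\le CN^{-1-\beta_1}$, $\int|y|^2\,|g(y)|\,d^2y\le CN^{-1-2\beta_1}$ (again $\beta_1\le\beta$ being used). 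Part (1) is then immediate: $g\in L^\infty_c$ and $\tfrac{1}{2\pi}\ln|\cdot|\in L^1_{\mathrm{loc}}$, so $h_{\beta_1,\beta}=\tfrac{1}{2\pi}\ln|\cdot|\ast g$ is a well-defined locally integrable function with $\Delta h_{\beta_1,\beta}=(\Delta\tfrac{1}{2\pi}\ln|\cdot|)\ast g=g=W_\beta-U_{\beta_1}$ in the distributional sense; elliptic regularity gives $h_{\beta_1,\beta}\in W^{2,p}_{\mathrm{loc}}$, so the identity holds a.e.

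For part (2) I split $\mathbb{R}^2$ into the far region $\{|x|\ge 2R\}$ and the near region $\{|x|<2R\}$. On the far region I Taylor-expand the kernel about $y=0$ and exploit the vanishing moments. For $|x|\ge 2|y|$ one has $\big|\ln|x-y|-\ln|x|+\tfrac{x\cdot y}{|x|^2}\big|\le C|y|^2/|x|^2$, so, subtracting the (integrated-to-zero) terms $\ln|x|$ and $-\tfrac{x\cdot y}{|x|^2}$, one gets $|h_{\beta_1,\beta}(x)|\le\tfrac{C}{|x|^2}\int|y|^2|g(y)|\,d^2y\le CN^{-1-2\beta_1}|x|^{-2}$; likewise, subtracting the $y=0$ value of the gradient kernel $\tfrac{1}{2\pi}\tfrac{x-y}{|x-y|^2}$ and using only $\int g=0$ gives $|\nabla h_{\beta_1,\beta}(x)|\le\tfrac{C}{|x|^2}\int|y|\,|g(y)|\,d^2y\le CN^{-1-\beta_1}|x|^{-2}$. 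Since $\int_{2R}^\infty r^{-3}\,dr=CR^{-2}=CN^{2\beta_1}$, integrating the squares over $\{|x|\ge 2R\}$ yields $\|h_{\beta_1,\beta}\|_{L^2(|x|\ge 2R)}\le CN^{-1-\beta_1}$ and $\|\nabla h_{\beta_1,\beta}\|_{L^2(|x|\ge 2R)}\le CN^{-1}$, with no logarithmic factor.

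On the near region I write $h_{\beta_1,\beta}=\tfrac{1}{2\pi}\ln|\cdot|\ast W_\beta-\tfrac{1}{2\pi}\ln|\cdot|\ast U_{\beta_1}=:h_W-h_U$, and similarly for the gradient. The disk $U_{\beta_1}$ is integrated explicitly through the mean-value identity $\tfrac{1}{2\pi}\int_0^{2\pi}\ln|x-\rho e^{i\theta}|\,d\theta=\ln\max(|x|,\rho)$ with $\rho=N^{-\beta_1}$: this gives $|h_U(x)|\le C(N^{-1}\ln N+N^{-1})$ for $|x|<2R$, the logarithm coming from the self-energy term $\propto\ln\rho=-\beta_1\ln N$ — and for $\beta_1=0$ one has $\ln\rho=0$, hence only $|h_U(x)|\le CN^{-1}$. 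For $h_W$ one splits according to whether $|x|$ is inside or outside a ball of radius $\sim N^{-\beta}$: using $\|W_\beta\|_\infty$ and $\int_{|z|\le CN^{-\beta}}|\ln|z||\,d^2z\le CN^{-2\beta}\ln N$ one gets $|h_W(x)|\le CN^{-1}\ln N$ for $|x|\le CN^{-\beta}$, and using $\|W_\beta\|_1$ one gets $|h_W(x)|\le CN^{-1}(1+|\ln|x||)$ for $CN^{-\beta}\le|x|<2R$. Squaring, integrating over $\{|x|<2R\}$, and adding the $h_U$ part gives $\|h_{\beta_1,\beta}\|_{L^2(|x|<2R)}\le CN^{-1-\beta_1}\ln N$. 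The gradient is treated the same way via $\int_{|y|<\rho}|x-y|^{-1}\,d^2y\le C\rho$: this yields $|\nabla h_W(x)|\le CN^{-1+\beta}$ for $|x|\le CN^{-\beta}$ and $|\nabla h_W(x)|\le CN^{-1}|x|^{-1}$ for $CN^{-\beta}\le|x|<2R$, whose square integrates to $CN^{-2}\ln N$ (the logarithm arising from $\int_{CN^{-\beta}}^{CN^{-\beta_1}}r^{-1}\,dr$), so $\|\nabla h_W\|_{L^2(|x|<2R)}\le CN^{-1}(\ln N)^{1/2}$, whereas $\|\nabla h_U\|_{L^2(|x|<2R)}\le CN^{-1}$. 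Together with the far-region estimates this proves the three bounds in (2) except for the borderline case.

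The one point that needs care is the borderline bound $\|h_{0,\beta}\|\le CN^{-1}$, which permits no logarithmic loss: here $R$ is of order one, so the crude pointwise bound $|h_W(x)|\le CN^{-1}\ln N$ on the whole near region is not sharp enough. One instead observes that for $\beta_1=0$ the logarithmically large contribution of $h_W$ lives only on the ball $\{|x|\le CN^{-\beta}\}$, whose $L^2$-weight is $(\ln N)^2N^{-2}\cdot N^{-2\beta}\le CN^{-2}$ for every fixed $\beta>0$; that on the remaining (order-one) region one has $|h_W(x)|\le CN^{-1}(1+|\ln|x||)$ with $\int_{|x|\le C}(1+|\ln|x||)^2\,d^2x<\infty$; and that $h_U$ carries no logarithm at all when $\beta_1=0$. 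Combining these with the far-region bound $\|h_{0,\beta}\|_{L^2(|x|\ge 2R)}\le CN^{-1}$ (which also supplies the decay needed for $h_{0,\beta}\in L^2$ in the first place) gives $\|h_{0,\beta}\|\le CN^{-1}$. All remaining steps are routine region-by-region bookkeeping; this separation of scales in the near region is the only genuinely delicate ingredient.
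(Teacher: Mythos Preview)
Your proof is correct and self-contained. The paper does not actually prove this lemma here; it simply refers to Lemma~7.2 in \cite{jeblick} and remarks that the argument there carries over when the sign condition on $W$ is dropped. Your argument is a standard potential-theoretic one: exploit the two vanishing moments of $g=W_\beta-U_{\beta_1}$ (zeroth by construction, first by radiality) to get the decay $|h|\lesssim N^{-1-2\beta_1}|x|^{-2}$ and $|\nabla h|\lesssim N^{-1-\beta_1}|x|^{-2}$ in the far field, then do direct pointwise estimates on $h_W$ and $h_U$ separately in the near field. This is essentially the approach of \cite{jeblick} as well, so nothing is materially different; your handling of the borderline case $\beta_1=0$---isolating the only logarithmically large piece of $h_W$ on the tiny ball $\{|x|\le CN^{-\beta}\}$ so that its $L^2$-weight kills the $\ln N$---is exactly the point that needs to be made explicit, and you have.
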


\begin{proof} 
The lemma has been proven in \cite{jeblick}, Lemma 7.2. for nonnegative potentials $W_\beta$. It is easy to verify that this proof  also works without specifying the sign of $W_\beta$. We thus refer the reader to
\cite{jeblick} for the details of the proof.

\end{proof}
Next, we prove some well known properties of the projectors $p_j,q_j$.
\begin{lemma}\label{kombinatorik}
 Let $f\in L^1\left(\mathbb{R}^2,\mathbb{C}\right)$, $g\in L^2\left(\mathbb{R}^2,\mathbb{C}\right)$. Then,
\begin{align}\label{kombeqa}\|p_j f(x_j-x_k)p_j\|_{\text{op}}
\leq&  \|f\|_1\|\phi\|_\infty^2\;,
\\ \label{kombeqb}
\|p_jg^*(x_j-x_k)\|_{\text{op}}=&
\|g(x_j-x_k)p_j\|_{\text{op}}\leq  \|g\|\;\|\phi\|_\infty
\\\label{kombeqc}
\| |\phi(x_j) \rangle \langle \nabla_j \phi(x_j)| h^* (x_j-x_k)\|_{\text{op}}=&
\|h(x_j-x_k)\nabla_j p_j\|_{\text{op}}\leq  \|h\|\|\nabla\phi\|_{\infty}
\;.\end{align}

\end{lemma}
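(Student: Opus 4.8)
The plan is to establish the three bounds in Lemma \ref{kombinatorik} by exploiting the rank-one structure of the projector $p_j = |\phi(x_j)\rangle\langle\phi(x_j)|$, so that each inequality reduces to a statement about integral operators in the single variable $x_j$ with the other variables frozen. First I would prove \eqref{kombeqa}: for fixed $\Psi$ and fixed values of the remaining variables, $p_j f(x_j-x_k) p_j$ acts as $\phi(x_j)$ times the scalar $\int \phi^*(y)\, f(y-x_k)\, \big(\int\phi^*(z)\Psi(\dots z\dots)\,d^2z\big)\,\phi(y)\,d^2y$; wait, more carefully, it produces $\phi(x_j)\,\big(\int \phi^*(y) f(y-x_k)\phi(y)\,d^2y\big)\cdot\big(\int\phi^*(z)\Psi\,d^2z\big)$. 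The inner double integral is bounded in modulus by $\|\phi\|_\infty^2 \|f\|_1$ uniformly in $x_k$, so pulling this scalar out and using $\|p_j\|_{\text{op}}\le 1$ for the remaining projection gives $\|p_j f(x_j-x_k)p_j\|_{\text{op}}\le \|f\|_1\|\phi\|_\infty^2$.

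For \eqref{kombeqb}, the two operator norms are equal because each is the adjoint of the other (or: $\|gp_j\|_{\text{op}}^2 = \|p_j g^* g p_j\|_{\text{op}}$ and the right side equals $\sup$ over the scalar $\int\phi^*(y)|g(y-x_k)|^2\phi(y)\,d^2y \le \|\phi\|_\infty^2\|g\|^2$). Alternatively and most cleanly, $g(x_j-x_k)p_j\Psi = g(x_j-x_k)\phi(x_j)\cdot(\int\phi^*\Psi)$; applying Cauchy-Schwarz in $x_j$, $\|g(\cdot-x_k)\phi\| \le \|g\|\,\|\phi\|_\infty$, and the remaining factor $\int\phi^*(\tilde x_j)\Psi\,d^2\tilde x_j$ has $L^2$-norm (in the other variables) at most $\|\Psi\|$ by Cauchy-Schwarz again. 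Squaring and integrating over the remaining variables yields the bound $\|g\|\,\|\phi\|_\infty$. The third inequality \eqref{kombeqc} is proved identically to \eqref{kombeqb}, with $g$ replaced by $h$ and $\phi$ replaced by $\nabla\phi$: note $\nabla_j p_j = |\nabla\phi(x_j)\rangle\langle\phi(x_j)|$ as an operator (the gradient lands on the left factor), so $h(x_j-x_k)\nabla_j p_j\Psi = h(x_j-x_k)\nabla\phi(x_j)\cdot(\int\phi^*\Psi)$, and the same Cauchy-Schwarz estimate gives $\|h\|\,\|\nabla\phi\|_\infty$; the operator-norm equality with $\||\phi(x_j)\rangle\langle\nabla_j\phi(x_j)|\,h^*(x_j-x_k)\|_{\text{op}}$ is again adjointness.

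I do not expect a genuine obstacle here — these are standard ``combinatorics of $p$ and $q$'' estimates and the only care needed is bookkeeping: keeping track of which variables are integrated, using $\|p_j\|_{\text{op}} = \|q_j\|_{\text{op}} = 1$ freely, and being precise that in $\nabla_j p_j$ the derivative acts only on the $\phi(x_j)$ prefactor and not on the integral against $\phi^*$. One should also note that $f \in L^1$ suffices for \eqref{kombeqa} while \eqref{kombeqb}–\eqref{kombeqc} need only $g,h \in L^2$, which is exactly the asymmetry that makes the $p$-$p$ sandwich cheaper than a single-sided $p$. If pressed for the cleanest writeup, I would do the $L^2(\mathbb{R}^{2N})$ computation once for a generic rank-one operator $|\psi_1(x_j)\rangle\langle\psi_2(x_j)|$ multiplied by $m(x_j-x_k)$ and then specialize.
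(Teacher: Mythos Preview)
Your proposal is correct and follows essentially the same route as the paper: the paper also exploits the rank-one structure, writing $p_j f(x_j-x_k)p_j = p_j(f\star|\phi|^2)(x_k)$ for \eqref{kombeqa} and then deducing \eqref{kombeqb} and \eqref{kombeqc} by squaring the operator norm to land back on a $p_j\,|\cdot|^2\,p_j$ expression of type \eqref{kombeqa}. Your alternative direct Cauchy--Schwarz argument for \eqref{kombeqb}--\eqref{kombeqc} is a minor stylistic variation but not a different idea.
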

\begin{proof}
First note that, for bounded operators $A, B$,
$\| A B \|_{\text{op}}=\|  B^* A^* \|_{\text{op}}$ holds, where $A^*$ is the adjoint operator of $A$.
To show (\ref{kombeqa}), note that
\begin{align} p_j f(x_j-x_k)p_j=p_j (f\star|\phi|^2)(x_k)\;.
\label{faltungorigin}
\end{align}
It follows that $$\|p_j f(x_j-x_k)p_j\|_{\text{op}}\leq  \|f\|_1\|\phi\|_\infty^2\;.$$

For (\ref{kombeqb}) we write
\begin{align*}
\|g(x_j-x_k)p_j\|_{\text{op}}^2=&\sup_{\|\Psi\|=1}\|g(x_j-x_k)p_j\Psi\|^2=
\\=&\sup_{\|\Psi\|=1}\laa \Psi,p_j  |g(x_j-x_k)|^2 p_j\Psi\raa\\\leq& \|p_j  |g(x_j-x_k)|^2p_j\|_{\text{op}}\;.
\end{align*}
With (\ref{kombeqa}) we get (\ref{kombeqb}).
For \eqref{kombeqc} we use
\begin{align*}
\| g(x_j-x_k) \nabla_j p_j \|_{\text{op}}^ 2 
=&
\sup_{\|\Psi\|=1}
\laa \Psi,
p_j
(|g|^ 2 * |\nabla \phi|^2)(x_k)
\Psi \raa
\leq
\||g|^ 2 * |\nabla \phi|^2\|_{\infty}
 \\
 \leq
 &
 \|g\|^2 \| \nabla \phi \|_\infty^2
\end{align*}
\end{proof}
We further need the following

\begin{lemma}\label{trick}
Let $\Omega,\chi\in L^ 2(\mathbb{R}^{2N}, \mathbb{C}) $ 
such that $\Omega$ and $\chi$ are symmetric w.r.t. to the exchange of the variables $x_2, \dots, x_N$.
Let $O_{i,j}$ be an operator acting on the $i^{th}$ and $j^{th}$  coordinate. Then
\begin{align}
|\laa\Omega,O_{1,2}\chi\raa|&\leq \|\Omega\|^2+\left|\laa O_{1,2}\chi,O_{1,3}\chi\raa\right|+
\frac{1}{N-1}\|O_{1,2}\chi\|^2
\;.
\end{align}
\end{lemma}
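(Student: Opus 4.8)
The plan is to expand $\laa\Omega,O_{1,2}\chi\raa$ by inserting the identity and exploiting the permutation symmetry of $\Omega$ and $\chi$ in the variables $x_2,\dots,x_N$. First I would write $\Omega = \frac{1}{N-1}\sum_{j=2}^{N}\Omega$ trivially, but the actual mechanism is cleaner: note that by symmetry of $\chi$ in $x_2,\dots,x_N$, the quantity $\laa\Omega,O_{1,j}\chi\raa$ is independent of $j\in\{2,\dots,N\}$ (after using the symmetry of $\Omega$ as well to move the relabelled variable back). Hence $\laa\Omega,O_{1,2}\chi\raa = \frac{1}{N-1}\sum_{j=2}^{N}\laa\Omega,O_{1,j}\chi\raa = \laa\Omega,\frac{1}{N-1}\sum_{j=2}^N O_{1,j}\chi\raa$. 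Then by Cauchy--Schwarz, $|\laa\Omega,O_{1,2}\chi\raa| \leq \|\Omega\|\,\big\|\frac{1}{N-1}\sum_{j=2}^N O_{1,j}\chi\big\|$, and since $\|\Omega\|\,x \leq \|\Omega\|^2 + x^2$ (or rather $\|\Omega\| x \le \frac12(\|\Omega\|^2 + x^2)$; one can simply use $ab\le a^2+b^2$), it suffices to control $\big\|\frac{1}{N-1}\sum_{j=2}^N O_{1,j}\chi\big\|^2$.

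Next I would expand this square:
\begin{align*}
\left\|\frac{1}{N-1}\sum_{j=2}^N O_{1,j}\chi\right\|^2
=\frac{1}{(N-1)^2}\sum_{j=2}^N\sum_{k=2}^N \laa O_{1,j}\chi, O_{1,k}\chi\raa
=\frac{1}{(N-1)^2}\sum_{j\neq k}\laa O_{1,j}\chi, O_{1,k}\chi\raa
+\frac{1}{(N-1)^2}\sum_{j=2}^N \|O_{1,j}\chi\|^2.
\end{align*}
In the off-diagonal sum, by the symmetry of $\chi$ in $x_2,\dots,x_N$ every term with $j\neq k$ equals $\laa O_{1,2}\chi, O_{1,3}\chi\raa$ (relabel the pair $(j,k)\mapsto(2,3)$), and there are $(N-1)(N-2)$ such terms, so this contributes $\frac{(N-1)(N-2)}{(N-1)^2}\laa O_{1,2}\chi,O_{1,3}\chi\raa$, whose modulus is bounded by $|\laa O_{1,2}\chi,O_{1,3}\chi\raa|$. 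Likewise, each diagonal term equals $\|O_{1,2}\chi\|^2$, giving $\frac{N-1}{(N-1)^2}\|O_{1,2}\chi\|^2 = \frac{1}{N-1}\|O_{1,2}\chi\|^2$. Combining the three bounds yields exactly the claimed inequality.

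I do not expect a serious obstacle here; this is a standard combinatorial symmetrization trick. The only point requiring a little care is making sure the relabelling arguments are valid, i.e. that $O_{1,j}$ only acts on coordinates $1$ and $j$ so that permuting $x_2,\dots,x_N$ genuinely interchanges the operators $O_{1,j}$ as claimed, and that $\Omega,\chi$ being symmetric in precisely those variables is what is used (not full symmetry). One should also be slightly careful that the "independence of $j$" step uses the symmetry of \emph{both} $\Omega$ and $\chi$, whereas the off-diagonal relabelling in the expanded square uses only the symmetry of $\chi$ — both are available by hypothesis. A harmless constant slack (using $ab \le a^2+b^2$ rather than the sharp AM--GM) is built into the statement, so no optimization is needed.
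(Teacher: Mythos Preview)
Your proposal is correct and follows essentially the same approach as the paper: symmetrize to write $\laa\Omega,O_{1,2}\chi\raa=\frac{1}{N-1}\sum_{j=2}^N\laa\Omega,O_{1,j}\chi\raa$, apply Cauchy--Schwarz and the elementary bound $ab\le a^2+b^2$, then expand the square into diagonal and off-diagonal terms and use the symmetry of $\chi$ to reduce each to the $j=2$ (resp.\ $(j,k)=(2,3)$) case. Your remarks about which symmetry assumption is used at which step are accurate and match the paper's argument.
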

\begin{proof}
Using symmetry and Cauchy Schwarz
\begin{align*}
|\laa\Omega,O_{1,2}\chi\raa|=
&\frac{1}{N-1}|\laa\Omega,\sum_{j=2}^N O_{1,j}\chi\raa|\leq
\frac{1}{N-1}\|\Omega\|\;
\left\|\sum_{j=2}^N O_{1,j}\chi \right\|
\\ \leq &
\|\Omega\|^2
+
\frac{1}{(N-1)^2}
\left\|\sum_{j=2}^N O_{1,j}\chi \right\|^2
\;.
\end{align*}
Note that
\begin{align*}
\left\|\sum_{j=2}^N O_{1,j}\chi \right\|^2&=\laa\sum_{j=2}^N O_{1,j}\chi,\sum_{k=2}^N O_{1,k}\chi\raa
\\\leq&\sum_{j=2}^N 
|\laa O_{1,j}\chi,O_{1,j}\chi\raa|+
|\sum_{j\neq k,j,k=2}^N \laa O_{1,j}\chi,O_{1,k}\chi\raa|
\\\leq&(N-1)|\laa O_{1,2}\chi,O_{1,2}\chi\raa|+(N-1)(N-2)|\laa O_{1,2}\chi,O_{1,3}\chi\raa|
\;.
\end{align*}
\end{proof}


We now prove Lemma \ref{gammalemma}.
\begin{lemma}\label{hnorms}
Let $\Psi  \in L^2_{s}(\mathbb{R}^{2N}, \mathbb{C}) \cap \mathcal{D}( H_{W_\beta, \cdot} ), \| \Psi \|=1$
and let $\phi \in L^2(\mathbb{R}^{2}, \mathbb{C}) ,\; \| \phi \|=1$.
Assume (A1), (A2) and (A5).
 Then,
\begin{enumerate}
\item[(a)]  
\begin{align}
N\left|\laa\Psi p_1p_2
\potdiff^\phi_\beta(x_1,x_2) q_1p_2\Psi\raa\right|\leq   
\mathcal{K}(\phi)
 \left( N^{-1}+   N^{- 2 \beta } \ln(N) \right)
\;.
\end{align}
\item[(b)]
 \begin{align} 
  N|\laa\Psi, p_1p_2
\potdiff^\phi_\beta(x_1,x_2)
 q_1q_2\Psi\raa|
 \leq 
\mathcal{K}(\phi)
	\left(
		\laa \Psi, q_1 \Psi \raa 
		 +
N^{-1/3} \ln(N)^2 
\right) 
\;.
\end{align}
\item[(c)]
\begin{align} &
N\left|\laa\Psi,  q_1 p_2
\potdiff^\phi_\beta(x_1,x_2)
 q_1 q_2\Psi\raa\right|
 \leq 
 \mathcal{K}(\phi)
 \ln(N)^{1/2}
 \left(
\alpha(\Psi, \phi)
+ 
N^{-1/2 }
\right) 
\;.
\end{align}
\item[(d)]
Let $\phi_t$  the solution to $ i \partial_t \phi_t= h^{\text{NLS}}_{a, t} \phi_t, \; \| \phi_0 \|=1$.
Let $\Psi_t$ the solution to $i \partial \Psi_t= H_{W_\beta,t} \Psi_t$ with
$\Psi_0 \in L^2_{s}(\mathbb{R}^{2N}, \mathbb{C}) \cap \mathcal{D}( (H_{W_\beta, 0})^ 2 ), \| \Psi_0 \|=1$
 Then,
\begin{align}
\left|
\frac{d}{dt}
\text{Var}_{ H_{W_\beta, t}} (\Psi_t)
\right|
\leq
\mathcal{K}(\phi_t)
\left(
\alpha(\Psi_t, \phi_t)
+
 N^{-1}
\right) \;.
\end{align}

\end{enumerate}

\end{lemma}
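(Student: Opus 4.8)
The four assertions are essentially independent and bound the four contributions to $\frac{d}{dt}\alpha(\Psi_t,\phi_t)$ isolated in Lemma~\ref{ableitung}: parts (a)--(c) control the three lines of $\gamma(\Psi,\phi)$, one for each type of transition of a particle out of the condensate, and part (d) controls $\left|\frac{d}{dt}\text{Var}_{H_{W_\beta,t}}(\Psi_t)\right|$. Throughout I would use freely Lemma~\ref{kombinatorik} (the convolution and operator-norm identities for $p_j,q_j$), Lemma~\ref{ulemma} (the smeared potential $h_{\beta_1,\beta}$ and its $L^2$-bounds, plus routine extensions to other $L^p$, e.g.\ $\|h_{\beta_1,\beta}\|_\infty\lesssim N^{-1}\ln N$), Lemma~\ref{trick} (the combinatorial device for sums over particles), the two-dimensional Sobolev inequality $\|\varrho\|_p^2\le Cp(\|\varrho\|^2+\|\nabla\varrho\|^2)$, and the a priori inputs of Lemma~\ref{varianzlemma}: $\|\nabla_1\Psi_t\|\le\mathcal{K}(\phi_t)$, $\|q_2^{\phi_t}\nabla_1\Psi_t\|^2\le\mathcal{K}(\phi_t)(\alpha(\Psi_t,\phi_t)+N^{-1/2})$, and $\|\sqrt{|NW_\beta(x_1-x_2)|}\,q_1^{\phi_t}q_2^{\phi_t}\Psi_t\|^2\le\mathcal{K}(\phi_t)C_pN^{\beta/p}(\alpha(\Psi_t,\phi_t)+N^{-1/2})$.

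\emph{Part (a)} is the mildest and needs only operator-norm bounds. Since $W_\beta$ is spherically symmetric, $p_2W_\beta(x_1-x_2)p_2=(W_\beta\ast|\phi|^2)(x_1)\,p_2$ and $p_2|\phi|^2(x_2)p_2=\|\phi\|_4^4\,p_2$. Inserting these and using $p_1q_1=0$, the $|\phi|^2(x_2)$-counterterm in $\potdiff^\phi_\beta$ drops and one is left with $p_1p_2\potdiff^\phi_\beta(x_1,x_2)q_1p_2=p_1\big((W_\beta\ast|\phi|^2)(x_1)-\tfrac{a}{N-1}|\phi|^2(x_1)\big)q_1p_2$, so the quantity in (a) is at most $N\,\|(W_\beta\ast|\phi|^2)-\tfrac{a}{N-1}|\phi|^2\|_\infty$. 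Writing $W_\beta\ast|\phi|^2=\tfrac1N(NW_\beta)\ast|\phi|^2$, using $\int NW_\beta=a$ and Taylor expanding $|\phi|^2$ to second order in the convolution (the first-order term vanishing by spherical symmetry, and $\text{supp}\,W_\beta\subset B_{CN^{-\beta}}(0)$, $\|\Delta|\phi|^2\|_\infty<\infty$) gives the bound $\mathcal{K}(\phi)(N^{-1}+N^{-2\beta})$.

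\emph{Part (b)} is the crux, and the only place where the absence of an a priori bound on $\|\nabla_1q_1\Psi_t\|$ (unavailable for sign-indefinite $W$) really bites. Here \emph{both} counterterms in $\potdiff^\phi_\beta$ disappear since $p_2q_2=0$, so it suffices to bound $N|\laa\Psi,p_1p_2W_\beta(x_1-x_2)q_1q_2\Psi\raa|$. I would fix $0<\beta_1\le\beta$ and split $W_\beta=U_{\beta_1}+\Delta h_{\beta_1,\beta}$ (Lemma~\ref{ulemma}(1)). For the $\Delta h_{\beta_1,\beta}$-part one integrates by parts in $x_1$ (using $(\Delta h)(x_1-x_2)u=\nabla_1\!\cdot\!((\nabla h)(x_1-x_2)u)-(\nabla h)(x_1-x_2)\!\cdot\!\nabla_1u$, or moving the full Laplacian onto $p_1p_2\Psi$); the essential point is that every derivative is steered onto a $p$-projected factor, producing the \emph{bounded} operators $\nabla_1p_1=|\nabla\phi(x_1)\rangle\langle\phi(x_1)|$, $\Delta_1p_1=|\Delta\phi(x_1)\rangle\langle\phi(x_1)|$, $\nabla_2p_2=|\nabla\phi(x_2)\rangle\langle\phi(x_2)|$, or, when a derivative must land on $q_1q_2\Psi$, it produces only $q_2\nabla_1\Psi$ (after $\nabla_1q_1=\nabla_1-|\nabla\phi(x_1)\rangle\langle\phi(x_1)|$), which is controlled by Lemma~\ref{varianzlemma}\,(b); $\nabla_1q_1\Psi$ never appears. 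The resulting terms are estimated by the $L^p$-bounds on $h_{\beta_1,\beta},\nabla h_{\beta_1,\beta}$ (Lemma~\ref{ulemma}(2)), the a priori bounds $\|\nabla_1\Psi_t\|$ and $\|q_2\nabla_1\Psi_t\|$, the Sobolev inequality, Lemma~\ref{trick}, and $\|q_1q_2\Psi\|^2\le\laa\Psi,q_1\Psi\raa$. For the $U_{\beta_1}$-part one observes that $NU_{\beta_1}$ is again an admissible scaled potential (spherically symmetric, in $L^\infty_c$, $\int NU_{\beta_1}=a$), so Lemma~\ref{varianzlemma}\,(c) applies to it with $\beta$ replaced by $\beta_1$; combined with $\|\,|U_{\beta_1}|^{1/2}(x_1-x_2)\,p_2\|_{\text{op}}\le\|U_{\beta_1}\|_1^{1/2}\|\phi\|_\infty$ (Lemma~\ref{kombinatorik}), the convolution cancellation of part (a) applied to the diagonal piece $p_2U_{\beta_1}(x_1-x_2)p_2$, and Lemma~\ref{trick} with $O_{12}=p_2U_{\beta_1}(x_1-x_2)q_2$, this part is controlled as well. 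The free parameters $\beta_1$ and the Sobolev exponent $p\sim\ln N$ are then chosen so that the competing errors $N^{-\beta_1}$, $N^{-1+2\beta_1}$ and $N^{\beta_1/p}$ balance (namely $\beta_1=1/3$), giving $\mathcal{K}(\phi)\big(\laa\Psi,q_1\Psi\raa+N^{-1/3}\ln(N)^2\big)$. I expect the hardest part of the whole lemma to be the bookkeeping here: tracking which integration-by-parts branch each term originates from and checking that no branch leaves a bare $\sqrt{\alpha}$ with an unbounded coefficient.

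\emph{Part (c):} again both counterterms vanish ($p_2q_2=0$), so it suffices to bound $N|\laa\Psi,q_1p_2W_\beta(x_1-x_2)q_1q_2\Psi\raa|$. Crucially there is now a $q_1$ on the left too, so a direct Cauchy--Schwarz splitting $\laa\Psi,q_1p_2W_\beta q_1q_2\Psi\raa=\laa|W_\beta|^{1/2}(x_1-x_2)\,p_2q_1\Psi,\;\text{sgn}(W_\beta)\,|W_\beta|^{1/2}(x_1-x_2)\,q_1q_2\Psi\raa$ is productive: the first factor is bounded by $\|\,|W_\beta|^{1/2}(x_1-x_2)\,p_2\|_{\text{op}}\|q_1\Psi\|\le(\|W\|_1/N)^{1/2}\|\phi\|_\infty\|q_1\Psi\|$ (Lemma~\ref{kombinatorik}), the second by $N^{-1/2}$ times Lemma~\ref{varianzlemma}\,(c). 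Taking $p=\beta\ln N$ (so $N^{\beta/p}=e$, $C_p\lesssim\ln N$), multiplying by $N$, and using $\sqrt{\laa\Psi,q_1\Psi\raa}\,\sqrt{\alpha+N^{-1/2}}\le\tfrac32\alpha+\tfrac12N^{-1/2}$ gives $\mathcal{K}(\phi)\ln(N)^{1/2}\big(\alpha(\Psi,\phi)+N^{-1/2}\big)$.

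\emph{Part (d):} write $\text{Var}_{H_{W_\beta,t}}(\Psi_t)=N^{-2}\|(H_{W_\beta,t}-\langle H_{W_\beta,t}\rangle_t)\Psi_t\|^2$ with $\langle\,\cdot\,\rangle_t:=\laa\Psi_t,\,\cdot\,\Psi_t\raa$. Differentiating and using $i\partial_t\Psi_t=H_{W_\beta,t}\Psi_t$, the contribution of the Schrödinger flow is a sum of inner products that are manifestly real because $H_{W_\beta,t}$ and its powers are self-adjoint, so its imaginary part vanishes, and there remains
\begin{align*}
\frac{d}{dt}\text{Var}_{H_{W_\beta,t}}(\Psi_t)=\frac{2}{N^2}\,\text{Re}\,\laa(\dot{\mathcal{A}}_t-\langle\dot{\mathcal{A}}_t\rangle_t)\Psi_t,\;(H_{W_\beta,t}-\langle H_{W_\beta,t}\rangle_t)\Psi_t\raa,\qquad \dot{\mathcal{A}}_t:=\sum_{j=1}^N\dot A_t(x_j),
\end{align*}
the only surviving time derivative of the Hamiltonian. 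By Cauchy--Schwarz this is at most $\tfrac2N\|(\dot{\mathcal{A}}_t-\langle\dot{\mathcal{A}}_t\rangle_t)\Psi_t\|\cdot\sqrt{\text{Var}_{H_{W_\beta,t}}(\Psi_t)}$. For the first factor I would subtract the constant $cN$, $c:=\langle\phi_t,\dot A_t\phi_t\rangle$, and write $\dot{\mathcal{A}}_t-cN=\sum_j(\dot A_t(x_j)-c)$ with $p_j(\dot A_t(x_j)-c)p_j=0$; expanding each factor as $p_j(\cdot)q_j+q_j(\cdot)p_j+q_j(\cdot)q_j$, using symmetry, $q_1q_2\le q_1$ and $\|\dot A_t\|_\infty<\infty$ (A2), one gets $\|\sum_j(\dot A_t(x_j)-c)\Psi_t\|^2\le C\|\dot A_t\|_\infty^2(N+N^2\laa\Psi_t,q_1^{\phi_t}\Psi_t\raa)$, while $|\langle\dot{\mathcal{A}}_t\rangle_t-cN|\le 3N\|\dot A_t\|_\infty\|q_1^{\phi_t}\Psi_t\|$; hence $\|(\dot{\mathcal{A}}_t-\langle\dot{\mathcal{A}}_t\rangle_t)\Psi_t\|\le C\|\dot A_t\|_\infty(N^{1/2}+N\sqrt{\laa\Psi_t,q_1\Psi_t\raa})$. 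Using $\laa\Psi_t,q_1\Psi_t\raa\le\alpha(\Psi_t,\phi_t)$, $\text{Var}\le\alpha(\Psi_t,\phi_t)$ and AM--GM then yields $\left|\frac{d}{dt}\text{Var}_{H_{W_\beta,t}}(\Psi_t)\right|\le C\|\dot A_t\|_\infty(\alpha(\Psi_t,\phi_t)+N^{-1})\le\mathcal{K}(\phi_t)(\alpha(\Psi_t,\phi_t)+N^{-1})$; the finiteness of all objects is guaranteed by $\Psi_0\in\mathcal{D}((H_{W_\beta,0})^2)$ and domain invariance.
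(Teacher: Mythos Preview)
Your proposal is largely sound, but you have inverted the roles of (b) and (c): it is (c), not (b), that is ``the crux'' and the only place where the variance-based smallness of $\|q_2\nabla_1\Psi\|$ is genuinely needed (the paper says so explicitly in the Remark after the lemma).

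\textbf{Part (b).} The paper's argument is simpler than what you describe and does \emph{not} use Lemma~\ref{varianzlemma}(b) or (c). For $\beta\le 1/3$ a direct symmetrization (essentially Lemma~\ref{trick}) already gives $\laa\Psi,q_1\Psi\raa+N^{-1+2\beta}$. For $\beta>1/3$ one splits $W_\beta=U_{\beta_1}+\Delta h_{\beta_1,\beta}$, reuses the $\beta\le1/3$ case for $U_{\beta_1}$, integrates by parts in the $\Delta h$-piece, and applies Lemma~\ref{trick} with $\Omega=N^{-\eta/2}\Gamma$, $\Gamma\in\{q_1\Psi,\nabla_1 q_1\Psi\}$. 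The point is that one only needs the \emph{constant} bound $\|\nabla_1 q_1\Psi\|\le\|\nabla_1\Psi\|+\|\nabla\phi\|\le\mathcal{K}(\phi)$ from Lemma~\ref{varianzlemma}(a); the prefactor $N^{-\eta}$ supplies the decay, and optimizing $\eta=\beta_1=1/3$ gives $N^{-1/3}\ln(N)^2$. So $\nabla_1 q_1\Psi$ \emph{does} appear, but harmlessly. Your attempt to replace it by $q_2\nabla_1\Psi$ and invoke the variance bound is superfluous here, and if pursued naively (without the $N^{-\eta}$ weight) produces exactly the bare $(\ln N)^{1/2}\sqrt{\alpha}$ you worry about. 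The stated bound for (b) involves only $\laa\Psi,q_1\Psi\raa$, not $\alpha$.

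\textbf{Part (c).} Here your route is genuinely different from the paper's and in fact shorter. The paper splits $W_\beta=U_0+\Delta h_{0,\beta}$, integrates by parts, and ultimately reduces to terms of the form $N|\laa t_2 q_1\Psi,\, s(x_1-x_2)\nabla_2 q_1\Psi\raa|$ with $s\in\{h_{0,\beta},\nabla h_{0,\beta}\}$, bounded via $\|\nabla_2 q_1\Psi\|=\|q_2\nabla_1\Psi\|$ and Lemma~\ref{varianzlemma}(b). Your direct Cauchy--Schwarz with $\sqrt{|NW_\beta|}$ followed by Lemma~\ref{varianzlemma}(c) (taking $p\sim\ln N$) reaches the same endpoint in one line. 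One small correction: only the $|\phi(x_1)|^2$-counterterm vanishes here (via $p_2q_2=0$); the $|\phi(x_2)|^2$-term survives but is trivially bounded by $C\|\phi\|_\infty^2\laa\Psi,q_1\Psi\raa$.

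\textbf{Parts (a) and (d).} Your (a) via second-order Taylor expansion of $|\phi|^2$ in the convolution is a valid alternative to the paper's approach, which instead writes $NW_\beta-a\delta=\Delta h$ with $\|h\|_1\le CN^{-2\beta}\ln N$ and uses Young's inequality on $\|(\Delta h)\ast|\phi|^2\|\le\|h\|_1\|\Delta|\phi|^2\|$. Your (d) is essentially the paper's argument, organized slightly differently (the paper decomposes $\dot A_t(x_1)$ via $p_1,q_1$ on both sides rather than subtracting $c=\langle\phi_t,\dot A_t\phi_t\rangle$ first).
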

\begin{remark}
(a) and (b) have essentially been proven in \cite{jeblick} for a slightly different definition of $\alpha (\Psi, \phi)$. It is (c) where our old estimates fail. In \cite{jeblick} we we able to control (c) by estimating $\| \nabla_1 q_1 \Psi\|$ in terms of $\alpha(\Psi, \phi)$ and some small error. In this estimate, it was crucial that
$
(1-p_1p_2) W_\beta(x_1-x_2) (1- p_1p_2) \geq 0
$ holds as an operator inequality and therefore forces $W_\beta$ to be nonnegative
(cf. the definition of $Q_\beta (\Psi, \phi)$ below Equation (100) in \cite{jeblick}).
In this paper, we make use of a strategy which was developed in \cite{leopold} to derive the Maxwell-Schr\"odinger equations from the Pauli-Fierz Hamiltonian. Instead of estimating 
$
\| \nabla_1 q_1 \Psi \|
$, we now control $\| \nabla_1 q_2 \Psi \|$, see Lemma \ref{varianzlemma}.
\end{remark}
\begin{proof}

\begin{enumerate}
\item
We estimate
\begin{align*}
N \left|\laa\Psi, p_1p_2
\potdiff^\phi_\beta(x_1,x_2) q_1p_2\Psi\raa\right|
\leq  N\|p_1p_2
\potdiff^\phi_\beta(x_1,x_2) q_1p_2\|_{\text{op}}\;.
\end{align*}
$\|p_1p_2
\potdiff^\phi_\beta(x_1,x_2) q_1p_2\|_{\text{op}}$ can be estimated using $p_1q_1=0$ and (\ref{faltungorigin}).
\begin{align*}
& N
\left\|
	p_1p_2
	\left(
		W_\beta(x_1-x_2)-\frac{a}{N-1}|\phi(x_1)|^2-
		\frac{a}{N-1}|		
		\phi(x_2)|^2
	\right) q_1p_2
\right\|_{\text{op}}
\\
&\leq\|p_1p_2 
(NW_\beta(x_1-x_2)-a |\phi(x_1)|^2)
p_2\|_{\text{op}}
+C \|\phi\|_\infty^2 N^{-1}
\\
&\leq \| \phi \|_{\infty} \ \|N(W_\beta\star|\phi|^2)-a |\phi|^2\|+C \|\phi\|_\infty^2 N^{-1}\;.
\end{align*}
Let $h$ be given by 
$$h(x)= \frac{1}{2 \pi}\int_{\mathbb{R}^2}d^2y \ln |x-y| NW_\beta(y) - \frac{a}{2\pi}\ln |x|
\;.
 $$
It then follows
$$\Delta h(x)=N W_\beta(x) - a \delta(x)$$ 
in the sense of distributions.
 Since $a = \int_{\mathbb{R}^2}d^2x W(x)$, this implies  (see Lemma \ref{ulemma}), 
$h(x)=0$ for $x \notin B_{RN^{- \beta}}(0)$, where $R N^{-\beta}$ is the radius of the support of $W_\beta$.
Thus,
\begin{align*}
\|h\|_1
\leq&
\frac{1}{2 \pi}
\int_{\mathbb{R}^2}d^2x \int_{\mathbb{R}^2}  d^2y|\ln |x-y|| \mathds{1}_{B_{RN^{- \beta}}(0)}(x) NW_\beta(y)
\\
+&
\frac{|a|}{2 \pi}
\int_{\mathbb{R}^2}  d^2x  \ln(|x|) \mathds{1}_{B_{RN^{- \beta}}(0)}(x)
\\
\leq &
C N^{-2 \beta} \ln(N) \;.
\end{align*}
Integration by parts and Young's inequality then imply
\begin{align}
\label{pppqcancelation}
&\| N(W_\beta\star|\phi|^2)- a|\phi|^2\|=\|(\Delta h)\star|\phi|^2\|
\\
\nonumber
\leq& \| h\|_1 \|\Delta|\phi|^2\|
\leq \mathcal{K}(\varphi) N^{-2\beta} \ln(N) 
\;.
\end{align}
Thus, we obtain the bound
\begin{align}
N \left|\laa\Psi, p_1p_2
\potdiff^\phi_\beta(x_1,x_2) q_1p_2\widehat{w}\Psi\raa\right|
\leq
\mathcal{K}(\phi)
 \left( N^{-1}+   N^{- 2 \beta } \ln(N) \right)
\;,
\end{align}
which then proves (a).

\item
We will first consider the case $0<\beta \leq 1/3$.
Note that
$
p_1 p_2 \potdiff^\phi_\beta(x_1,x_2)q_1 q_2
=
p_1 p_2 W_\beta(x_1-x_2)q_1 q_2
$.
We estimate
\begin{align*}
&N |\laa\Psi ,q_1 q_2 W_\beta(x_1-x_2) p_1 p_2
\Psi\raa|
=
\frac{N}{N-1}
|\laa q_1\Psi , 
\sum_{k=2}^N
q_k W_\beta(x_1-x_k) p_1 p_k
\Psi\raa|
\\
\leq &
\frac{N}{N-1}
\|q_1\Psi  \|
\left\|
\sum_{k=2}^N
q_k W_\beta(x_1-x_k) p_1 p_k
\Psi
\right\|
\\
\leq &
\laa \Psi, q_1 \Psi \raa
+
\left\|
\sum_{k=2}^N
q_k W_\beta(x_1-x_k) p_1 p_k
\Psi
\right\|^2
\\
=&
\laa \Psi, q_1 \Psi \raa
+
(N-1)
\left\|
q_2 W_\beta(x_1-x_2) p_1 p_2
\Psi
\right\|^2
\\
+&
(N-1)(N-2)
\laa \Psi, 
p_1 p_2 W_\beta(x_1-x_2) q_2
q_3 W_\beta(x_1-x_3) p_1 p_3
\Psi \raa
\\
\leq &
\laa \Psi, q_1 \Psi \raa
+
N
\|W_\beta \|^2 \|\phi \|_\infty^2
\\
+&
N^2
\| p_2 W_\beta(x_1-x_2) p_1 \|_{\text{op}}^2
\| q_1 \Psi \|^2
\\
\leq &
\mathcal{K}(\phi)
\left(
\laa \Psi, q_1 \Psi \raa
+
N^{-1+ 2 \beta}
\right) \;.
\end{align*}
In the last estimate, we used Lemma \ref{ulemma} together with Lemma \ref{kombinatorik} to estimate
$
\| p_1 W_\beta(x_1-x_2) p_2 \|_{\text{op}}
\leq
\|p_1 \sqrt{|W_\beta|} \|_{\text{op}}^2
\leq
\|\phi\|_\infty^2 \|\sqrt{|W_\beta|}\|^2
\leq
\mathcal{K}(\phi) N^{-1}
$.

This proves (b) for the case $\beta \leq 1/3$.

\item[(b)] for $1/3< \beta$:
We use $U_{\beta_1}$ from Definition \ref{udef} for some $0<\beta_1 \leq 1/3$.
 We then obtain
 \begin{align}
 \nonumber
&   N\laa\Psi, p_1p_2
 W_{\beta}(x_1-x_2)
q_1q_2\Psi\raa
\\ 
\label{eins}
  =&
    N\laa\Psi, p_1p_2
U_{\beta_1}(x_1-x_2)
q_1q_2\Psi\raa
  \\
  \label{zwei}
  +&
    N\laa\Psi, p_1p_2
  \left(W_{\beta}(x_1-x_2)-U_{\beta_1}(x_1-x_2)\right) q_1q_2\Psi\raa.
 \end{align}
Term \eqref{eins} has been controlled above.
So we are left to control \eqref{zwei}.

Let  $\Delta h_{\beta_1,\beta}=W_{\beta}-U_{\beta_1}$, as in Lemma \ref{ulemma}. Integrating by parts and using $\nabla_1 h_{\beta_1,\beta}(x_1-x_2)=-\nabla_2 h_{\beta_1,\beta}(x_1-x_2)$ gives
 \begin{align}
\nonumber
&N\left|\laa\Psi, p_1p_2\left(W_{\beta}(x_1-x_2)-U_{\beta_1,\beta}(x_1-x_2)\right)q_1q_2\Psi\raa\right|
\\\label{csplit1}&
\leq N\left|\laa\nabla_1p_1\Psi, p_2\nabla_2 h_{\beta_1,\beta}(x_1-x_2) q_1q_2\Psi\raa\right|
\\\label{csplit2}
&+N\left|\laa \Psi, p_1p_2\nabla_2 h_{\beta_1,\beta}(x_1-x_2) \nabla_1q_1q_2\Psi\raa\right|
\;.
\end{align}
Let $ t_1 \in \lbrace p_1, \nabla_1 p_1 \rbrace$ and let
 $\Gamma \in \lbrace q_1\Psi, \nabla_1 q_1\Psi \rbrace$.

For both \eqref{csplit1} and \eqref{csplit2}, we use 
 Lemma \ref{trick} with 
 $O_{1,2}=N^{1+ \eta/2} q_2\nabla_2 h_{\beta_1,\beta}(x_1-x_2)p_2$, $\chi=t_1\Psi$ and
 $\Omega= N^{- \eta/2} \Gamma$. This yields
 
\begin{align}
\label{gammaterm}
&\eqref{csplit1}+\eqref{csplit2}
\leq 
2 
\sup_{t_1 \in  \lbrace p_1, \nabla_1 p_1 \rbrace,  \Gamma \in \lbrace q_1\Psi, \nabla_1 q_1\Psi \rbrace }
\Big(
N^{-\eta} \| \Gamma \|^2
\\
\label{diagonalterm}
+&
\frac{N^{2+\eta}}{N-1}\|q_2\nabla_2 h_{\beta_1,\beta}(x_1-x_2)t_1p_2\Psi\|^2
 \\
\label{nebendiagonale}
+&
N^{2+\eta}\left|\laa \Psi,t_1p_2 q_3 \nabla_2 h_{\beta_1,\beta}(x_1-x_2)\nabla_3 h_{\beta_1,\beta}
(x_1-x_3)t_1q_2 p_3\Psi\raa\right|
\Big)
\;.
 \end{align}
The first term can be bounded using $\|\nabla_1 q_1 \Psi\| \leq \mathcal{K}(\phi)$.

Using $\|t_1\Psi\|^2 \leq \mathcal{K}(\phi)$, we obtain
\begin{align*}
 (\ref{diagonalterm})
 \leq&
 \mathcal{K}(\phi)
 \frac{N^{2+\eta}}{N-1}\|\nabla_2 h_{\beta_1,\beta}(x_1-x_2)p_2\|_{\text{op}}^2
\leq 
 \mathcal{K}(\phi)
\frac{N^{2+\eta}}{N-1}
\|\phi\|_\infty ^2
\|\nabla h_{\beta_1,\beta}\|^2
\\
\leq&
 \mathcal{K}(\phi)
N^{\eta-1}\ln(N)
 \;,
 \end{align*}
where we used Lemma \ref{ulemma} in the last step.

Next, we estimate 
\begin{align*}
(\ref{nebendiagonale})\leq& N^{2+\eta}\| p_2
\nabla_2 h_{\beta_1,\beta}(x_1-x_2)t_1 q_2
\Psi\|^2\
\nonumber
\\
\leq&
2N^{2+\eta}\| p_2
 h_{\beta_1,\beta}(x_1-x_2)t_1 \nabla_2q_2
\Psi\|^2
\nonumber
\\+&
2N^{2+\eta}\||\varphi(x_2)\rangle \langle \nabla \varphi(x_2)|
 h_{\beta_1,\beta}(x_1-x_2)t_1 q_2
\Psi\|^2\
\nonumber
\\
\leq &
2N^{2+\eta}\| p_2
 h_{\beta_1,\beta}(x_1-x_2)
 \|_{\text{op}}^2
 \|
t_1 
  \nabla_2 q_2
\Psi\|^2\
\nonumber
\\
+&
2N^{2+\eta}
\| |\varphi(x_2)\rangle \langle \nabla \varphi(x_2)|
 h_{\beta_1,\beta}(x_1-x_2) \|_{\text{op}}^2
  \| t_1 q_2
\Psi\|^2\
\nonumber
\\
\leq &
 \mathcal{K}(\phi)
 N^{2+\eta}
\| h_{\beta_1,\beta} \|^2
\nonumber
\leq
 \mathcal{K}(\phi)
N^{\eta- 2\beta_1} \ln(N)^2
\;.
\end{align*}
Thus, for all $\eta \in \mathbb{R}$
\begin{align*}
&  N\laa\Psi, p_1p_2
  \left(W_{\beta}(x_1-x_2)-U_{\beta_1,\beta}(x_1-x_2)\right)
 q_1q_2\Psi\raa
  \nonumber
\\ 
  \leq &
  \mathcal{K}(\phi,A_t)
  \left(  N^{ - \eta} + N^{ \eta-1} \ln(N) +N^{ \eta- 2 \beta_1} \ln(N)^2 \right).
\end{align*}
Hence, we obtain,
using $ N^{ \eta-1} \ln(N) < N^{ \eta- 2 \beta_1} \ln(N)$, 
 \begin{align*} 
 & N\laa\Psi, p_1p_2
  W_{\beta}(x_1-x_2)
q_1q_2\Psi\raa
  \nonumber
\\ 
 & \leq 
\mathcal{K}(\phi,A_t)
	\left(
		\laa \Psi, q_1 \Psi \raa 
		 +
		  \inf_{\eta>0}
  \inf_{\frac{1}{3}\geq \beta_1 >0}
  \left(
 			N^{ \eta- 2 \beta_1} \ln(N)^2
 				 +N^{-1+2 \beta_1}
+  N^{ - \eta}
\right)  
\right) 
   \;.
\end{align*}
 and we get (b)  in full generality by choosing $\eta=\beta_1=1/3$.

\item[(c)]
First note that
\begin{align*}
N \left|
\laa \Psi,
q_1 p_2 \frac{a}{N-1} |\phi(x_1)|^2 q_1q_2 \Psi \raa 
\right|
\leq
C \| \phi \|_\infty^2
\laa \Psi, q_1 \Psi \raa \;.
\end{align*}
Let $U_{0}$ be given by Definition \ref{udef}. Using Lemma \ref{kombinatorik} and integrating by parts we get
\begin{align}\nonumber
N&\left|\laa\Psi,   q_1p_2W_\beta(x_1-x_2) q_1q_2\Psi\raa\right|
\nonumber 
\\\leq&\nonumber
N\left|\laa\Psi,q_1  p_2U_{0}(x_1-x_2) q_1q_2 \Psi\raa\right|+
N\left|\laa\Psi,q_1  p_2(\Delta_1
h_{0,\beta}(x_1-x_2)) q_1q_2\Psi\raa\right|
\\\leq&\label{term1}
N\|q_1\Psi\|\;\|U_{0}\|_\infty\|q_1q_2\Psi\|
\\%
\label{term2}
+&N\left|\laa \nabla_2 p_2  q_1   \Psi,(\nabla_2
h_{0,\beta}(x_1-x_2))q_1q_2\Psi\raa\right|
\\
\label{term3}
+&
N\left|\laa\Psi, q_1p_2(\nabla_2
h_{0,\beta}(x_1-x_2))
\nabla_2   q_1q_2\Psi\raa\right|
\;.
\end{align}
The first contribution is bounded by
$$\eqref{term1}
\leq
C\laa\Psi,q_1\Psi\raa
\;.
$$
The second term $\eqref{term2}$ can be estimated as 
\begin{align}
\eqref{term2} 
\label{noideaforalabel}
=
&N\left|\laa  \Delta_2 p_2 q_1  \Psi,
h_{0,\beta}(x_1-x_2)
q_1q_2\Psi\raa\right|
\\ 
\label{noideaforalabel2}
+&
N\left|\laa   \nabla_2 p_2 q_1  \Psi,
h_{0,\beta}(x_1-x_2)
q_1 \nabla_2
\Psi\raa\right|
\\
+& \label{noideaforalabel3}
N\left|\laa  \nabla_2 p_2 q_1\Psi,
h_{0,\beta}(x_1-x_2)
q_1 \nabla_2p_2
\Psi\raa\right|   .
\end{align}
The last contribution $\eqref{term3}$ can be rewritten as
\begin{align}
\eqref{term3} 
\leq &
\label{noideaforalabel4}
N\left|\laa\Psi, q_1p_2(\nabla_2
h_{0,\beta}(x_1-x_2))
\nabla_2 q_1 \Psi\raa\right|
\\
+
\label{noideaforalabel5}
&N\left|\laa\Psi, q_1p_2(\nabla_2
h_{0,\beta}(x_1-x_2))
\nabla_2p_2  q_1  \Psi\raa\right|
\;.
\end{align}
We estimate each contribution separately, using 
Lemma \ref{ulemma} together with Lemma \ref{kombinatorik}. We obtain
\begin{align*}
\eqref{noideaforalabel}
\leq &
N \|q_1  \Psi \|
\| 
h_{0,\beta}(x_1-x_2) \Delta_2 p_2 \|_{\text{op}}
\|  q_1 q_2  \Psi \|
\\
\leq & 
C \| \Delta\phi \|_\infty
\laa \Psi, q_1 \Psi \raa
 .
\end{align*}
Analogously,
\begin{align*}
\eqref{noideaforalabel3}
\leq &
N \|q_1  \Psi \|
\| h_{0,\beta}(x_1-x_2)  \nabla_2 p_2 \|_{\text{op}}
\| \nabla \phi \|_\infty
\|q_1 \Psi \|
\\
\leq &
C \| \nabla\phi \|_\infty^2
\laa \Psi, q_1 \Psi \raa
 .
\end{align*}
Next, we control
\begin{align*}
\eqref{noideaforalabel5}
\leq &
N \|q_1  \Psi \|
\|  p_2 \nabla_2 h_{0,\beta}(x_1-x_2) 
\nabla_2 p_2
\|_{\text{op}}
\|q_1 \Psi \| 
\\
\leq &
C 
\|\nabla \phi \|_\infty
\|\phi\|_\infty N \|\nabla h_{0,\beta}\|_1
 \|q_1 \Psi \|^2 
\\
\leq &
C  \|\nabla \phi \|_\infty \|\phi\|_\infty
\laa \Psi, q_1 \Psi \raa .
\end{align*}
To control \eqref{noideaforalabel2} and \eqref{noideaforalabel4}, 
we estimate for $t_2 \in \lbrace p_2, |\phi(x_2) \rangle\langle(\nabla \phi)(x_2)| \rbrace$ and
$s \in \lbrace
h_{0,\beta},
\nabla_2
h_{0,\beta} \rbrace$
\begin{align*}
&
N\left|\laa q_1\Psi,   t_2
s(x_1-x_2)
\nabla_2 q_1 \Psi\raa\right|
\\
\leq  &
\ln(N)^{1/2}
\| q_1 \Psi \|^2
+
\ln(N)^{-1/2}
N^2 \|t_2 s(x_1-x_2)\|_{\text{op}}^2 
\| \nabla_2 q_1 \Psi \|^2
\\
\leq &
\mathcal{K}(\phi)
\ln(N)^{1/2}
\left(
\text{Var}_{ H_{W_\beta, \cdot}} (\Psi)
+
\laa \Psi, q_1 \Psi \raa
+
N^{-1/2}
\right) .
\end{align*}
In the last line, we used Lemma \ref{varianzlemma} (b) together with Lemma \ref{ulemma}.

\item[(d)]
We estimate
\begin{align*}
&
\left|
\frac{d}{dt}
\text{Var}_{ H_{W_\beta, t}} (\Psi_t)
\right|
=
N^{-2}
\left|
\frac{d}{dt}
\laa \Psi_t,
\left(
 H_{W_\beta, t}
-
\laa \Psi_t,  H_{W_\beta, t} \Psi_t \raa
\right)^2
\Psi_t \raa
\right|
\\
\leq&
2 
N^{-1}
\left|
\laa \Psi_t,
\left(
 H_{W_\beta, t}
-
\laa \Psi_t,  H_{W_\beta, t} \Psi_t \raa
\right)
\left(
\dot{A}_t(x_1)
-
\laa \Psi_t,
\dot{A}_t(x_1)\Psi_t \raa
\right)
\Psi_t \raa
\right|
\\
\leq&
2 N^{-1}
\left|
\laa \Psi_t,
\left(
 H_{W_\beta, t}
-
\laa \Psi_t,  H_{W_\beta, t} \Psi_t \raa
\right)
\left(
p_1\dot{A}_t(x_1)p_1
-
\laa \Psi_t,
p_1\dot{A}_t(x_1)p_1\Psi_t \raa
\right)
\Psi_t \raa
\right|
\\
+&
2 N^{-1}
 \left|
\laa \Psi_t,
\left(
 H_{W_\beta, t}
-
\laa \Psi_t,  H_{W_\beta, t}\Psi_t \raa
\right)
\left(
p_1\dot{A}_t(x_1)q_1
-
\laa \Psi_t,
p_1\dot{A}_t(x_1)q_1
\Psi_t \raa
\right)
\Psi_t \raa
\right|
\\
+&
2 N^{-1}
 \left|
\laa \Psi_t,
\left(
 H_{W_\beta, t}
-
\laa \Psi_t,  H_{W_\beta, t} \Psi_t \raa
\right)
\left(
N^{-1}
\sum_{k=1}^N
q_k\dot{A}_t(x_k)p_k
-
\laa \Psi_t,
q_1\dot{A}_t(x_1)p_1
\Psi_t \raa
\right)
\Psi_t \raa
\right|
\\
+&
2  N^{-1}
\left|
\laa \Psi_t,
\left(
 H_{W_\beta, t}
-
\laa \Psi_t,  H_{W_\beta, t} \Psi_t \raa
\right)
\left(
q_1\dot{A}_t(x_k)q_1
-
\laa \Psi_t,
q_1\dot{A}_t(x_k)q_1\Psi_t \raa
\right)
\Psi_t \raa
\right|
\\
\leq &
4\text{Var}_{ H_{W_\beta, t}} (\Psi_t)
+
\left\|
\left(
p_1\dot{A}_t(x_1)p_1
-
\laa \Psi_t, 
p_1\dot{A}_t(x_1)p_1\Psi_t \raa
\right)
\Psi_t
\right\|^2
\\
+&
\left\|
\left(
p_1\dot{A}_t(x_1)q_1
-
\laa \Psi_t, 
p_1\dot{A}_t(x_1)q_1\Psi_t \raa
\right)
\Psi_t
\right\|^2
\\
+&
\left\|
\left(
N^{-1}
\sum_{k=1}^N
q_k\dot{A}_t(x_k)p_k
-
\laa \Psi_t, 
q_1\dot{A}_t(x_1)p_1\Psi_t \raa
\right)
\Psi_t
\right\|^2
\\
+&
\left\|
\left(
q_1\dot{A}_t(x_1)q_1
-
\laa \Psi_t, 
q_1\dot{A}_t(x_1)q_1\Psi_t \raa
\right)
\Psi_t
\right\|^2
\end{align*}
Note that
\begin{align*}
&\left\|
\left(
p_1\dot{A}_t(x_1)p_1
-
\laa \Psi_t, 
p_1\dot{A}_t(x_1)p_1\Psi_t \raa
\right)
\Psi_t
\right\|^2
\\
=&
\left(
\int_{\mathbb{R}^2}d^2x
\dot{A}_t(x) |\phi_t(x)|^2
\right)^2
\laa \Psi_t, p_1 \Psi_t \raa
\left(
1- \laa \Psi_t, p_1 \Psi_t \raa
\right)
\\
\leq  &
\mathcal{K}(\phi_t) \laa \Psi_t, q_1 \Psi_t \raa \;.
\end{align*}
Furthermore
\begin{align*}
&\left\|
\left(
N^{-1}
\sum_{k=1}^N
q_k\dot{A}_t(x_k)p_k
-
\laa \Psi_t, 
q_1\dot{A}_t(x_1)p_1\Psi_t \raa
\right)
\Psi_t
\right\|^2
\\
=&
N^{-2}
\sum_{k,l=1}^N
\laa \Psi_t, p_l\dot{A}_t(x_l)q_l q_k\dot{A}_t(x_k)p_k
\Psi_t \raa
-
\left|
\laa \Psi_t, 
q_1\dot{A}_t(x_1)p_1\Psi_t \raa
\right|^2
\\
\leq &
\laa \Psi_t, p_1\dot{A}_t(x_1)q_1 q_2\dot{A}_t(x_2)p_2
\Psi_t \raa
+
\frac{1}{N}
\laa \Psi_t, p_1\dot{A}_t(x_1)q_1 \dot{A}_t(x_1)p_1
\Psi_t \raa
+ \|\dot{A}_t \|_{\infty} \laa \Psi_t, q_1 \Psi_t \raa
\\
\leq &
\mathcal{K}(\phi_t)
\left(
\laa \Psi_t, q_1 \Psi_t \raa
+
\frac{1}{N}
\right)
\;.
\end{align*}
To control the two remaining terms, let $s_1 \in \lbrace p_1, q_1 \rbrace$. Then, we need to estimate
\begin{align*}
&\left\|
\left(
s_1\dot{A}_t(x_1)q_1
-
\laa \Psi_t, 
s_1\dot{A}_t(x_1)q_1\Psi_t \raa
\right)
\Psi_t
\right\|^2
\\
=&
\laa \Psi_t, 
q_1 \dot{A}_t(x_1)
s_1\dot{A}_t(x_1)q_1\Psi_t \raa
-
\left|
\laa \Psi_t, 
s_1\dot{A}_t(x_1)q_1\Psi_t \raa
\right|^2
\\
\leq &
2 
\| \dot{A}_t \|_\infty^2
\laa \Psi_t, q_1 \Psi_t \raa
\;.
\end{align*}
In total, we obtain
\begin{align*}
\left|
\frac{d}{dt}
\text{Var}_{ H_{W_\beta, t}} (\Psi_t)
\right|
\leq
\mathcal{K}(\phi_t)
\left(
\alpha (\Psi_t, \phi_t)
+
 N^{-1}
\right) \;.
\end{align*}

\end{enumerate}
Combining the estimates (a)-(d), Lemma \ref{gammalemma} is then proven.

\end{proof}

\section{Proof of Theorem \ref{theo} (b)} \label{sobolevtracenorm}

\begin{proof}
We make use of the inequality
\begin{align*}
\text{Tr}
\left|
\sqrt{1-\Delta}
( \gamma_{\Psi_t}^{(1)}- | \phi_t \rangle \langle \phi_t |)
\sqrt{1-\Delta}
\right|
\leq
C
(1+ \| \nabla_1 \phi_t\|)^2
(
\|q_1 \Psi_t\|
+
\|q_1 \Psi_t\|^2
+
\|\nabla_1 q_1 \Psi_t\|
+
\|\nabla_1 q_1 \Psi_t\|^2
),
\end{align*}
which was proven in \cite{picklnorm}, see also \cite{anapolitanos}.
Using Theorem \ref{theo} (a), we are left to show $
\lim\limits_{N \rightarrow \infty}\| \nabla_1 q_1 \Psi_t\|=0$. 
In general, this does not follow from 
$\lim\limits_{N \rightarrow \infty}\| \nabla_1 q_2 \Psi_t\|=0$.
To see this, consider the symmetrized wave-function
\begin{align*}
\Gamma (x_1, \dots, x_N)
=
\frac{1}{\sqrt{ N}} \sum_{k=1}^N
\phi (x_1) \dots \eta(x_k)\dots \phi(x_N)
\end{align*}
for $\eta, \phi \in L^2(\mathbb{R}^{2}, \mathbb{C}),\; \|\eta\|= \|\phi\|=1,\; \langle \eta, \phi \rangle= 0$. Then 
\begin{align*}
&\| q^{\phi}_1 \Gamma\|^2= N^{-1}, \qquad
\| \nabla_1 q^{\phi}_2 \Gamma\|^2 =N^{-1} \|\nabla \phi\|^2
, \qquad
\| \nabla_1 q^{\phi}_1 \Gamma\|^2= N^{-1} \|\nabla \eta\|^2 \;.
\end{align*}
Note that $\|\nabla \eta\|$ can be chosen arbitrarily.
However, for  $A_t \in L^p(\mathbb{R}^2, \mathbb{R})$, with $p>2$,
it is possible to control $\| \nabla_1 q_1 \Psi_t\|$ in terms of
$\| q_1 \Psi_t \|$, $\| \nabla_1 q_2 \Psi_t\|$
and the energy difference $ \left|
 N^ {-1}\laa \Psi_0,  H_{W_\beta, 0} \Psi_0 \raa 
 -
 \langle \phi_0, 
\left(
-\Delta 
+
\frac{a}{2}
 |\phi_0|^2
+
 A_0 
 \right)
 \phi_0 \rangle 
 \right|$, assuming conditions (A2) and (A4). 
 Together with assumptions (A1), (A3), (A5) and (A5)' and Theorem \ref{theo}, part (a), it is then possible to
 bound $\| \nabla_1 q_1 \Psi_t\|$ sufficiently well.
First, we consider 
\begin{align}
|\|\nabla_1 \Psi_t \|^ 2- \| \nabla \phi_t \|^ 2|
\leq&
\label{energydifference}
\left|
\frac{1}{N}
\laa \Psi_0, H_{W_\beta, 0 } \Psi_0 \raa
-
\langle\phi_0, \left(
-\Delta 
+
\frac{a}{2}
 |\phi_0|^2
+
 A_0 
 \right)
  \phi_0 \rangle
\right|
\\
+&
\label{energychange}
\int_0^ t ds
\left|
\laa \Psi_s, \dot{A}_s (x_1) \Psi_s \raa
-
\langle \phi_s, \dot{A}_s \phi_s \rangle
\right|
\\
\label{ppwpp-mf}
+&
\frac{1}{2}
\left|
\laa \Psi_t, 
p_1 p_2
(N-1)
W_\beta (x_1-x_2)
p_1 p_2
 \Psi_t \raa
 -
 a
 \langle \phi_t, |\phi_t|^2 \phi_t \rangle
 \right|
\\
\label{firsterrror}
 +&
N
\left|
 \laa \Psi_t, 
p_1 p_2
W_\beta (x_1-x_2)
(1-p_1 p_2)
 \Psi_t \raa
 \right|
 \\
 \label{seconderrror}
 +&
 N
 \left|
  \laa \Psi_t, 
(1-p_1 p_2)
W_\beta (x_1-x_2)
(1-p_1 p_2)
 \Psi_t \raa 
\right| 
 \\
 +&
 \label{A-Mf}
 \left|
 \laa \Psi_t, A_t(x_1) \Psi_t \raa
 -
 \langle \phi_t, A_t \phi_t \rangle
 \right|
 .
\end{align}
We estimate each line separately. From condition (A5)', it follows that
$\eqref{energydifference} \leq C N^ {-\delta}$.
Using $\dot{A}_t \in L^\infty (\mathbb{R}^2, \mathbb{R})$, we estimate
\begin{align*}
\eqref{energychange}
\leq &
t
\sup_{s \in [0, t]}
\left(
|\langle \phi_s, \dot{A}_s \phi_s \rangle | \laa \Psi_s, q^{\phi_s}_1 \Psi_s \raa
+
2
 \left|\laa \Psi_s,p^{\phi_s}_1 \dot{A}_s(x_1)q^{\phi_s}_1 \Psi_s \raa \right|
 +
  \left|\laa \Psi_s,q^{\phi_s}_1 \dot{A}_s(x_1)q^{\phi_s}_1 \Psi_s \raa \right|
  \right)
\\
\leq &
t
\sup_{s \in [0, t]}
\left(
\|\dot{A}_s\|_\infty (\|q^{\phi_s}_1 \Psi_s\|+ \|q^{\phi_s}_1 \Psi_s\|^ 2)
\right).
\end{align*}

 Next,
\begin{align*}
\eqref{ppwpp-mf}
 \leq &
 \left|
  \langle \phi_t, (N-1)W_\beta \star |\phi_t|^2 \phi_t \rangle
\laa \Psi_t, 
p_1 p_2
 \Psi_t \raa
 -
 a
 \langle \phi_t, |\phi_t|^2 \phi_t \rangle
 \right|
 \\
 \leq &
 \mathcal{K}(\phi_t)
 \|N W_\beta\|_1 \|q_1 \Psi_t \|^2
 +
 \left|
   \langle \phi_t, 
   (
 (N-1)  W_\beta \star |\phi_t|^2 
-a | \phi_t|^2   
   )
   \phi_t \rangle
 \right|
\\
\leq &
\mathcal{K}(\phi_t)
\left(
\laa \Psi_t, q_1 \Psi_t \raa
+
N^{-2 \beta} \ln(N)+ N^{-1} 
 \right)
 .
\end{align*}

Note that
\begin{align*}
& 
\eqref{firsterrror}
+
\eqref{seconderrror}
\leq
C
 \left\|
\sqrt{N |W_\beta(x_1-x_2)|}
 p_1 p_2  \Psi_t 
 \right\|
 \left(
 \left\|
\sqrt{N |W_\beta (x_1-x_2)|}
 q_1 q_2  \Psi_t 
 \right\|
 +
 \left\|
\sqrt{N |W_\beta(x_1-x_2)|}
 q_1 p_2  \Psi_t 
 \right\|
 \right)
 \\
 &
 +C
 \left\|
\sqrt{N |W_\beta(x_1-x_2)|}
 p_1 q_2  \Psi_t 
 \right\|^2 
 +C
 \left\|
 \sqrt{N |W_\beta(x_1-x_2)|}
 q_1 q_2  \Psi_t 
 \right\|^2
 .
\end{align*}
Using Lemma \ref{varianzlemma} (c), we
 obtain 
\begin{align*}
\left\|
\sqrt{|N W_\beta (x_1-x_2)|}
 q_1 q_2  \Psi_t 
 \right\|^2
 \leq
 \mathcal{K}(\phi_t)
 C_p
 N^{\beta/p}
\left(
 \alpha(\Psi_t, \phi_t)+
 N^{-1/2 }\right) .
\end{align*}

Furthermore,
\begin{align*}
\left\|\sqrt{N |W_\beta (x_1-x_2)|}
 p_1
 \right\|_{\text{op}}
 \leq
 \mathcal{K}(\phi_t).
\end{align*}
Note that
it was shown in part (a) that $
 \alpha(\Psi_t, \phi_t)+
 \leq
  \mathcal{K}(\phi_t)
N^{-\delta}$ for some $\delta >0$. 
Choosing $p \in \mathbb{N}$ large enough,
we then obtain
$
\eqref{firsterrror}
+
\eqref{seconderrror}
\leq
  \mathcal{K}(\phi_t) N^{-\gamma}
$, for some $\gamma>0$.
We estimate, using $|\laa \Psi_t, q_1 A_t(x_1) q_1 \Psi_t\raa|
\leq
\|q_1 \Psi_t\|( \|A_t \phi_t\|+ \|  A_t(x_1) \Psi_t\|)$,
\begin{align*}
|\eqref{A-Mf}|
\leq &
|\langle \phi_t, A_t \phi_t \rangle | \laa \Psi_t, q_1 \Psi_t \raa
+
2
 \left|\laa \Psi_t,p_1 A_t(x_1)q_1 \Psi_t \raa \right|
 +
  \left|\laa \Psi_t,q_1 A_t(x_1)q_1 \Psi_t \raa \right|
\\
\leq &
|\langle \phi_t, A_t \phi_t \rangle | \laa \Psi_t, q_1 \Psi_t \raa
+
\|q_1 \Psi_t\|( \|A_t \phi_t\|+ \|  A_t(x_1) \Psi_t\|).
\end{align*}
If $A_t \in L^\infty (\mathbb{R}^2, \mathbb{R})$ holds, we obtain
\begin{align*}
|\eqref{A-Mf}| \leq
 \mathcal{K}(\phi_t)
 (\|q_1 \Psi_t \|+ \|q_1 \Psi_t\|^2).
\end{align*}
On the other hand,
using Sobolev and H\"older inequality (see the proof of Lemma \ref{varianzlemma}), 
together with $|\langle \phi_t, A_t \phi_t \rangle |+
\|\nabla_1 \Psi_t\|+ \|\nabla \phi_t\| \leq \mathcal{K}(\phi_t)
$, we obtain, for any $1<p<\infty$ 
\begin{align*}
|\eqref{A-Mf}| \leq
 \mathcal{K}(\phi_t)
 \left(
 1+
\|A_t\|_{\frac{2p}{p-1}}
\right)
 (\|q_1 \Psi_t \|+ \|q_1 \Psi_t\|^2).
\end{align*}

Therefore, if $A_t \in L^p(\mathbb{R}^2, \mathbb{C})$ holds for some
$
p \in ]2, \infty]
$ and for all $t \in \mathbb{R}$, we obtain
\begin{align*}
|\|\nabla_1 \Psi_t \|^ 2- \| \nabla \phi_t \|^ 2|
\leq
t
\sup_{s \in [0,t]}
\left(
	\mathcal{K}(\phi_s) 
	\left(
		\alpha (\Psi_s, \phi_s)+
		\sqrt{\alpha (\Psi_s, \phi_s)}+ N^{-\delta}+N^{-1}
	\right)
\right)
.
\end{align*}
Since
\begin{align*}
\| \nabla_1 q_1 \Psi_t\|^2
\leq
|\|\nabla_1 \Psi_t \|^ 2- \| \nabla \phi_t \|^ 2|
+
\| \nabla \phi_t\|^2 \laa \Psi_t, q_1 \Psi_t \raa
+
2 \|\nabla\phi_t \|  \|q_1 \Psi_t\| \|\Psi_t\|
\end{align*}
holds, we obtain with part (a) of Theorem \ref{theo}, part (b) of Theorem \ref{theo}.

\end{proof}

\section{Appendix}
\subsection{Energy variance of a product state}
\begin{lemma} \label{varianceproduct}
Let
$\Psi = \phi^{\otimes N}$ and assume that
$
\|\phi\|_\infty+ \|-\Delta\phi\|_\infty+ 
\|-\Delta \phi\|
+
\| \nabla \phi\|
+ \langle \phi, A_s \phi \rangle
+
\langle \phi, A_s^2 \phi \rangle \leq C
$.
Then, 
\begin{align}
\text{Var}_{ H_{W_\beta, s}}(\Psi)
\leq
C(N^{-1}+N^{-1+\beta}+N^{-2+ 2 \beta})
\;.
\end{align}

\end{lemma}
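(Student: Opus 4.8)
The plan is to exploit the product structure of $\Psi=\phi^{\otimes N}$ directly. Write $h=-\Delta+A_s$ for the one-particle operator and $V=\sum_{i<j}W_\beta(x_i-x_j)$, so that $H_{W_\beta,s}=\sum_{j=1}^N h_j+V$. First I would note that the regularity hypotheses on $\phi$, together with (A2), guarantee $\Psi\in\mathcal D((H_{W_\beta,s})^2)$, so that $\text{Var}_{H_{W_\beta,s}}(\Psi)$ is well defined and the algebra below is justified. Setting $\langle h\rangle=\langle\phi,h\phi\rangle$ and $\tilde h=h-\langle h\rangle$ (so that $\langle\phi,\tilde h\phi\rangle=0$), one has $H_{W_\beta,s}-\laa\Psi,H_{W_\beta,s}\Psi\raa=\sum_{j=1}^N\tilde h_j+(V-\laa\Psi,V\Psi\raa)$ with $\laa\Psi,V\Psi\raa=\binom N2\langle|\phi|^2,W_\beta\ast|\phi|^2\rangle$. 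Squaring, taking the expectation in $\Psi$, and using that the $\laa\Psi,V\Psi\raa$-part of the mixed term drops out because $\laa\Psi,\sum_j\tilde h_j\Psi\raa=0$, I arrive at
\[
N^2\,\text{Var}_{H_{W_\beta,s}}(\Psi)=\llaa\Psi,\Bigl(\sum_{j=1}^N\tilde h_j\Bigr)^{2}\Psi\rraa+2\,\text{Re}\,\llaa\Psi,\Bigl(\sum_{j=1}^N\tilde h_j\Bigr)V\Psi\rraa+\Bigl(\laa\Psi,V^{2}\Psi\raa-\laa\Psi,V\Psi\raa^{2}\Bigr).
\]
The guiding observation is that on a product state every summand factorises over $\LZ$, and any summand containing a ``dangling'' factor $\tilde h_j$ whose index $j$ is not shared by an accompanying potential factor vanishes because $\langle\phi,\tilde h\phi\rangle=0$.

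For the kinetic--kinetic term, expanding $\bigl(\sum_j\tilde h_j\bigr)^2=\sum_{j,k}\tilde h_j\tilde h_k$ I discard the off-diagonal terms, which factorise into $\langle\phi,\tilde h\phi\rangle^2=0$, and bound each of the $N$ diagonal terms by $\langle\phi,\tilde h^2\phi\rangle\le\|h\phi\|^2\le C$ (using $\|\Delta\phi\|$ and $\langle\phi,A_s^2\phi\rangle\le C$); this term is $O(N)$ and contributes $O(N^{-1})$. For the mixed term $\sum_{j}\sum_{i<k}\laa\Psi,\tilde h_jW_\beta(x_i-x_k)\Psi\raa$, the summands with $j\notin\{i,k\}$ have a dangling $\tilde h_j$ and vanish; the remaining $O(N^2)$ summands are each equal (by symmetry and evenness of $W_\beta$) to $\langle(\tilde h\phi)\otimes\phi,W_\beta(x_1-x_2)\,\phi\otimes\phi\rangle$, which I bound by $\|\tilde h\phi\|\,\|W_\beta(x_1-x_2)\phi\otimes\phi\|\le\|\tilde h\phi\|\,\|W_\beta\|_2\,\|\phi\|_\infty$; since $\|W_\beta\|_2^2\le CN^{-2+2\beta}$ this is $O(N^{-1+\beta})$. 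Hence the mixed term is $O(N^{1+\beta})$ and contributes $O(N^{-1+\beta})$. (A sharper $O(N^{-1})$ bound on this term is available by integrating out $x_2$ first and using $\|W_\beta\ast|\phi|^2\|_2\le\|W_\beta\|_1\|\phi\|_4^2$, but it is not needed.)

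For the potential--potential term I would write $\laa\Psi,V^2\Psi\raa=\sum_{i<j}\sum_{k<l}\laa\Psi,W_\beta(x_i-x_j)W_\beta(x_k-x_l)\Psi\raa$ and split according to $|\{i,j\}\cap\{k,l\}|\in\{0,1,2\}$. The coincident part has $\binom N2$ terms, each equal to $\langle|\phi|^2,W_\beta^2\ast|\phi|^2\rangle\le\|W_\beta^2\|_1\|\phi\|_\infty^2\le CN^{-2+2\beta}$ (using $\|W_\beta^2\|_1=\|W_\beta\|_2^2$), so this part is $O(N^{2\beta})$. The one-shared-index part has $O(N^3)$ terms, each equal to $\int_{\mathbb{R}^2}|\phi(x_1)|^2\bigl((W_\beta\ast|\phi|^2)(x_1)\bigr)^2d^2x_1\le\|\phi\|_\infty^2\|W_\beta\ast|\phi|^2\|_2^2\le CN^{-2}$, so this part is $O(N)$. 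Finally the disjoint part equals $\binom N2\binom{N-2}2\langle|\phi|^2,W_\beta\ast|\phi|^2\rangle^2$, and combined with $-\laa\Psi,V\Psi\raa^2=-\binom N2^2\langle|\phi|^2,W_\beta\ast|\phi|^2\rangle^2$ it leaves exactly $-\binom N2(2N-3)\langle|\phi|^2,W_\beta\ast|\phi|^2\rangle^2$, of size $O(N^3)\cdot(CN^{-1})^2=O(N)$. Thus the potential--potential term is $O(N+N^{2\beta})$ and contributes $O(N^{-1}+N^{-2+2\beta})$. Summing the three contributions yields $\text{Var}_{H_{W_\beta,s}}(\Psi)\le C(N^{-1}+N^{-1+\beta}+N^{-2+2\beta})$.

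I expect the only genuinely delicate point to be the cancellation in the disjoint part of the last term: $\laa\Psi,V^2\Psi\raa$ and $\laa\Psi,V\Psi\raa^2$ are individually of order $N^2$, and one must keep the exact combinatorial factor $\binom N2^2-\binom N2\binom{N-2}2=\binom N2(2N-3)$ to see that their difference is only $O(N)$ rather than $O(N^2)$. Everything else reduces to Young's and H\"older's inequalities together with the elementary scalings $\|W_\beta\|_1\le CN^{-1}$, $\|W_\beta\|_2^2\le CN^{-2+2\beta}$, and the mean-zero identity $\langle\phi,\tilde h\phi\rangle=0$ that kills all the dangling-$\tilde h_j$ contributions.
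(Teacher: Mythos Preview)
Your proof is correct and follows the same underlying strategy as the paper: expand the variance on the product state, use the tensor structure to factorise, and bound the surviving terms via $\|W_\beta\|_1\le CN^{-1}$ and $\|W_\beta\|_2^2\le CN^{-2+2\beta}$.

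The organisation differs in one useful way. The paper keeps the kinetic part $T=\sum_k(-\Delta_k)$ and the external-potential part $\mathcal A=\sum_k A_s(x_k)$ separate, leading to six cross terms $T^2$, $T\mathcal W$, $\mathcal W^2$, $\mathcal A^2$, $\mathcal A T$, $\mathcal A\mathcal W$, each of which is compared against the corresponding piece of $\laa\Psi,H\Psi\raa^2$. You instead bundle $-\Delta+A_s$ into a single one-particle operator $h$ and center it, so that the mean-zero identity $\langle\phi,\tilde h\phi\rangle=0$ kills all ``dangling'' contributions at once; this reduces the bookkeeping to three terms. Your mixed term then absorbs both the paper's $T\mathcal W$ and $\mathcal A\mathcal W$ contributions, and your bound $\|\tilde h\phi\|\,\|W_\beta\|_2\,\|\phi\|_\infty=O(N^{-1+\beta})$ recovers exactly the paper's worst contribution (coming from $\mathcal A\mathcal W$). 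The potential--potential analysis and the cancellation $\binom N2^2-\binom N2\binom{N-2}2=\binom N2(2N-3)$ are identical in both proofs. Your packaging is slightly more economical, but the content is the same.
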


\begin{proof}
The proof is a direct calculation using the product structure of $\Psi= \phi^{\otimes N}$.
We first calculate,
denoting $T=\sum_{k=1}^N (-\Delta_k) $, $\mathcal{W}= \sum_{i<j}^N W_\beta (x_i-x_j)$ and
$\mathcal{A}= \sum_{k=1}^N A_s (x_k)$,

\begin{align*}
\frac{1}{N^2}
\laa \Psi,  H_{W_\beta, s} \Psi \raa^2=&
\frac{1}{N^2}
\laa
\Psi,
\left(
T+\mathcal{W}+\mathcal{A}
\right)
\Psi
\raa^2
\\
=&
\frac{1}{N^2}
(
N \langle \phi, -\Delta \phi \rangle
+
\frac{N (N-1)}{2}\langle \phi, W_\beta \ast |\phi|^2 \phi \rangle
+
N \langle \phi, A_s \phi \rangle
)^2
\\
=&
 \langle \phi, -\Delta \phi \rangle^2
+
\frac{(N-1)^2}{4}\langle \phi, W_\beta \ast |\phi|^2 \phi \rangle^2
+
 \langle \phi, A_s \phi \rangle^2
 \\
 +&
 (N-1) 
\langle \phi, -\Delta \phi \rangle
\langle \phi, W_\beta \ast |\phi|^2 \phi \rangle
+
2 \langle \phi, -\Delta \phi \rangle \langle \phi, A_s \phi \rangle
\\
+&
 (N-1)
 \langle \phi, A_s \phi \rangle 
 \langle \phi, W_\beta \ast |\phi|^2 \phi \rangle
.
\end{align*}
It then follows that
\begin{align}
\nonumber
\text{Var}_{ H_{W_\beta, s}}(\Psi)=&
 \laa \Psi, 
\frac{ H_{W_\beta, s}^2 }{N^2} 
\Psi \raa
-
\frac{1}{N^2}
\laa \Psi,  H_{W_\beta, s} \Psi \raa^2
 \\
 \label{variance1}
=&
\frac{1}{N^2}
\laa \Psi,
T^2
\Psi \raa
-
 \langle \phi, -\Delta \phi \rangle^2
 \\
  \label{variance2}
+ &
\frac{1}{N^2}
2 \text{Re}
(\laa T \Psi, \mathcal{W} \Psi \raa )
-
 (N-1) 
\langle \phi, -\Delta \phi \rangle
\langle \phi, W_\beta \ast |\phi|^2 \phi \rangle
\\
\label{variance3}
+&
\frac{1}{N^2}
\laa \Psi, \mathcal{W}^2 \Psi \raa
-
\frac{(N-1)^2}{4}\langle \phi, W_\beta \ast |\phi|^2 \phi \rangle^2
\\
\label{variance4}
+&
\frac{1}{N^2}
\laa \Psi,
\mathcal{A}^2
\Psi \raa
-
 \langle \phi, A_s \phi \rangle^2
\\
\label{variance5}
+&
\frac{1}{N^2}
2 \text{Re}
(\laa \mathcal{A} \Psi, T \Psi \raa )
-
2 \langle \phi, -\Delta \phi \rangle \langle \phi, A_s \phi \rangle
\\
\label{variance6}
+&
\frac{1}{N^2}
2 \text{Re}
(\laa  \mathcal{A} \Psi,  \mathcal{W} \Psi \raa )
-
 (N-1)\langle \phi, W_\beta \ast |\phi|^2 \phi \rangle
 \langle \phi, A_s \phi \rangle .
\end{align}
We estimate each line separately.
\begin{align*}
|\eqref{variance1}|
=&
\left|
\frac{1}{N}
\laa \Psi,
(-\Delta_1)^ 2
\Psi \raa
+
\frac{N-1}{N}
\laa \Psi,
(-\Delta_1)(-\Delta_2)
\Psi \raa
-
\langle \phi, -\Delta \phi \rangle^2
\right|
\\
\leq &
\frac{
\| - \Delta \phi \|^2+ \|\nabla \phi \|^4
}{N}
\;.
\end{align*}
Note that
\begin{align*}
&
\frac{1}{N^2}
2 \text{Re}
(\laa T \Psi, \mathcal{W} \Psi \raa )
\\
=
&
\frac{1}{N^2}
\sum_{k=1}^N
\sum_{i \neq j =1}^N
\text{Re}
(\laa (-\Delta_k) \Psi, W_\beta(x_i-x_j) \Psi \raa )
\\
=&
\frac{2(N-1)}{N}
\text{Re}
(\laa (-\Delta_1) \Psi, W_\beta(x_1-x_2) \Psi \raa )
\\
+&
\frac{(N-1)(N-2)}{N}
\text{Re}
(\laa (-\Delta_1) \Psi, W_\beta(x_2-x_3) \Psi \raa )
\\
\leq &
\frac{2(N-1)}{N}
\| \Delta \phi \|_{\infty}
\| W_\beta(x_1-x_2)\|_1 \|\phi\|_\infty
\\
+&
\frac{(N-1)(N-2)}{N}
\|\nabla \phi \|^2 \langle \phi , W_\beta \ast |\phi|^2 \phi \rangle
\\
\leq &
C N^{-1}
\| \Delta \phi \|_\infty \|\phi\|_\infty 
+
\frac{(N-1)(N-2)}{N}
\langle \phi, -\Delta \phi \rangle \langle \phi , W_\beta \ast |\phi|^2 \phi \rangle ,
\end{align*}
which immediately implies
\begin{align*}
|\eqref{variance2}|
\leq
C 
\frac{\| \nabla \phi \|_\infty \| \phi\|_\infty+ 
\| \nabla \phi \|^2 \| \phi \|_\infty^4	
}{N }
\;.
\end{align*}

Next, we calculate
\begin{align*}
\frac{1}{N^2}
\laa \Psi, \mathcal{W}^2 \Psi \raa
=&
\frac{1}{4 N^2}
\sum_{i \neq j =1}^N
\sum_{k \neq l =1}^N
\laa \Psi, W_\beta (x_i-x_j) W_\beta (x_k-x_l) \Psi \raa
\\
=&
\frac{N-1}{2 N}
\laa \Psi, W_\beta (x_1-x_2) ^2 \Psi \raa
\\
+&
\frac{(N-1)(N-2)}{N}
\laa \Psi, W_\beta (x_1-x_2) W_\beta (x_2-x_3)  \Psi \raa
\\
+&
\frac{(N-1)(N-2)(N-3)}{4N}
\laa \Psi, W_\beta (x_1-x_2)  W_\beta (x_3-x_4) \Psi \raa
\;.
\end{align*}
The first term is bounded by
\begin{align*}
\frac{N-1}{2 N}
\laa \Psi, W_\beta (x_1-x_2) ^2 \Psi \raa
\leq
\|\phi \|^2_\infty \|W_\beta \|^2 \leq C N^{-2+2 \beta}   \|\phi \|^2_\infty
\;.
\end{align*}
The second term can be bounded using
\begin{align*}
f(x_2)=&
N^{-1+2 \beta}
\left| \int_{\mathbb{R}^2}dx_1| \phi (x_1)|^2 W(N^\beta(x_1-x_2)) \right|
\\
\leq &
N^{-1} \int_{\mathbb{R}^2} dx_1 | \phi (N^{-\beta} x_1) |^2 | W(x_1-N^\beta x_2) |
\leq
N^{-1} \|W \|_1 \|\phi \|^2_\infty
\end{align*}
by
\begin{align*}
&\frac{(N-1)(N-2)}{N}
\laa \Psi, W_\beta (x_1-x_2) W_\beta (x_2-x_3) \Psi \raa
\\
=&
\frac{(N-1)(N-2)}{N}
\int_{\mathbb{R}^2} dx_2 
|\phi (x_2)|^2 f(x_2)^2
\leq 
\frac{1}{N}
\|W\|_1^2
\|\phi \|^4_\infty .
\end{align*}
It therefore follows that
\begin{align*}
|\eqref{variance3}|
\leq
C
N^{-2+ 2 \beta} \| \phi \|_\infty^2
+
C N^{-1}
( \| \phi\|_\infty^ 2
+
\| \phi\|_\infty^ 4
)
\;.
\end{align*}
\eqref{variance4} is estimated by
\begin{align*}
|\eqref{variance4}|
=&
\left|
\frac{1}{N}
\laa \Psi,
A_s(x_1)^2
\Psi \raa
+
\frac{N-1}{N}
\laa \Psi,
A_s(x_1) A_s(x_2)
\Psi \raa
-
\langle \phi, A_s \phi \rangle^2
\right|
\\
\leq &
\frac{
\langle \phi, A_s^2 \phi \rangle+ \langle \phi, A_s \phi \rangle^2
}{N}
\;.
\end{align*}
Furthermore,
\begin{align*}
|\eqref{variance5}|
\leq&
\left|
\frac{2}{N}
|\laa \Psi, A_s(x_1) (-\Delta_1) \Psi \raa |
+
2 
\frac{N-1}{N}
\laa \Psi,
A_s(x_1) (-\Delta_2)
\Psi \raa
-
2 \langle \phi, -\Delta \phi \rangle
\langle \phi, A_s \phi \rangle
\right|
\\
\leq &
C \frac{\|-\Delta \phi\| \|A_s \phi\|+ \|\nabla \phi\|^2 \langle \phi, A_s \phi \rangle}{N}
\;.
\end{align*}
Finally,
\begin{align*}
|\eqref{variance6}|
\leq&
C
\left(
| \laa A_s(x_1) \Psi, W_\beta(x_1-x_2) \Psi \raa |
+
| \laa \Psi, A_s(x_1) W_\beta(x_2-x_3) \Psi \raa|
\right)
\\
\leq &
C
(
\| A_s \phi \| \|W_\beta\| \|\phi\|_\infty
+
\langle \phi, A_s \phi \rangle 
\|W_\beta\|_1
\|\phi\|_\infty^2 
)
\\
\leq &
C
(
N^{-1+\beta}
\| A_s \phi \|  \|\phi\|_\infty
+
N^{-1}
\langle \phi, A_s \phi \rangle 
\|\phi\|_\infty^2 
) \;.
\end{align*}
\end{proof}

\subsection{Persistence of regularity of $\phi_t$}
\label{solutionTheory}
We study the nonlinear Schr\"odinger equation in two spatial dimensions \eqref{nls} with a harmonic potential
\begin{align*}
i \partial_t \phi_t=
(-\Delta + a | \phi_t|^2+|x|^2) \phi_t 
\end{align*}
under the conditions $a>-a^*$ and $\| \phi_0\|=1$. The solution theory of \eqref{nls} is well studied in absence of external fields. There, the global existence and persistence of regularity of $\phi_t \in H^k(\mathbb{R}^2, \mathbb{C})$ was established, assuming $\phi_0$ regular enough \cite{cazenave}. 
The condition $a>-a^*$ is known to be optimal, that is, for $a<-a^*$, there exist blow-up solutions.
It is interesting to note that global existence of solutions in
$L^\infty(\mathbb{R}^2, \mathbb{C})$ directly implies  persistence of higher regularity of solutions in $H^k(\mathbb{R}^2, \mathbb{C})$, see \cite{cazenave} and below. 
\begin{lemma}
Let $\phi_0 \in H^1(\mathbb{R}^2, \mathbb{C}), \|\phi_0 \|=1$ such that
$
\| \nabla \phi_0\|^2+
\||x| \phi_0 \|^2
+
\frac{a}{2}
\langle \phi_0, |\phi_0|^2 \phi_0 \rangle
\leq C
$.
Let $a>-a^*$.
\begin{enumerate}
\item
The nonlinear Schr\"odinger equation
\begin{align*}
i \partial_t \phi_t
=
(
-\Delta
+
a
|\phi_t|^2
+
|x|^2
) 
\phi_t
\end{align*}
admits a solution $\phi_t \in H^1(\mathbb{R}^2, \mathbb{C})$ globally in time. 
\item
Define the norm 
$
\| u \|_{\Sigma,m}
=
\sqrt{
\sum_{k=0}^m
(
\| \nabla^k u \|^2+ \| |x|^k u\|^2
)
}
$. Then
\begin{align*}
\| \phi_t \|_{\Sigma,4}
\leq
\| \phi_0 \|_{\Sigma,4}
e^{ C \int_0^t ds \| \phi_s \|_\infty^2} \;.
\end{align*}
\item
Assume $\| \phi_0  \|_{\Sigma,4}< \infty$. Then,
 there exist a time dependent constant $C_t$, also depending on
$\|\phi_0 \|_{\Sigma,4}$, such that
$\| \phi_t \|_{\Sigma,4} \leq C_t$.

\end{enumerate}

\end{lemma}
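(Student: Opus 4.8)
The plan is to establish (a)--(c) in order, using conservation laws, the sharp Gagliardo--Nirenberg inequality, a tame product estimate for the cubic nonlinearity, and the logarithmic Brezis--Gallouet inequality. Write $\Lambda := 1 - \Delta + |x|^2$ for the shifted harmonic oscillator: it is classical that $\Lambda$ is self-adjoint and positive on $L^2(\mathbb{R}^2,\mathbb{C})$ and that $\|\Lambda^{m/2}\phi\|$ is equivalent to $\|\phi\|_{\Sigma,m}$ for every nonnegative even integer $m$, so the $\Sigma$-norms are graph norms of powers of $\Lambda$; set $\Sigma^m := \{u : \|u\|_{\Sigma,m} < \infty\}$. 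The analytic ingredient used repeatedly is the tame (Moser-type) estimate $\|\Lambda^{m/2}(|\phi|^2\phi)\| \leq C \|\phi\|_\infty^2 \|\phi\|_{\Sigma,m}$, which follows by distributing $x^\alpha\partial^\beta$ with $|\alpha|+|\beta| \leq m$ over the three factors of $\phi^2\overline{\phi}$ by Leibniz, applying H\"older's inequality, and interpolating each resulting factor by Gagliardo--Nirenberg between the top norm $\|\phi\|_{\Sigma,m}$ and $\|\phi\|_\infty$ (the interpolation exponents sum to one because the total order is $m$); cf. \cite{cazenave} for the unweighted analogue. For part (a), local well-posedness of \eqref{nls} with $A_t = |x|^2$ in $\Sigma := H^1(\mathbb{R}^2,\mathbb{C}) \cap \{u : |x|u \in L^2(\mathbb{R}^2,\mathbb{C})\}$, with a lifespan depending only on $\|\phi_0\|_{\Sigma,1}$, is standard (\cite{cazenave}), so global existence reduces to an a priori bound on $\|\phi_t\|_{\Sigma,1}$. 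I would obtain it from conservation of mass, $\|\phi_t\| = 1$, and of energy, $E(\phi_t) := \|\nabla\phi_t\|^2 + \||x|\phi_t\|^2 + \frac{a}{2}\langle\phi_t,|\phi_t|^2\phi_t\rangle = E(\phi_0)$, which is finite by hypothesis: if $a \geq 0$ this already gives $E(\phi_0) \geq \|\nabla\phi_t\|^2 + \||x|\phi_t\|^2$, while for $-a^* < a < 0$ the sharp Gagliardo--Nirenberg inequality gives $\frac{|a|}{2}\|\phi_t\|_4^4 \leq \frac{|a|}{a^*}\|\nabla\phi_t\|^2$, hence $E(\phi_0) \geq (1 - |a|/a^*)\|\nabla\phi_t\|^2 + \||x|\phi_t\|^2$ with $1 - |a|/a^* > 0$. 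In either case $\sup_t \|\phi_t\|_{\Sigma,1} =: M_1 < \infty$, so the $\Sigma$-solution is global.

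For part (b), set $F(t) := \|\Lambda^2\phi_t\|^2$, equivalent to $\|\phi_t\|_{\Sigma,4}^2$. Applying $\Lambda^2$ to $i\partial_t\phi_t = (-\Delta + |x|^2)\phi_t + a|\phi_t|^2\phi_t$ and pairing with $\Lambda^2\phi_t$, the linear contribution is $-i\langle\Lambda^2\phi_t, (-\Delta+|x|^2)\Lambda^2\phi_t\rangle$, whose real part vanishes by self-adjointness, so $\frac{d}{dt}F = 2a\,\text{Im}\langle\Lambda^2\phi_t, \Lambda^2(|\phi_t|^2\phi_t)\rangle$ and therefore $\frac{d}{dt}F \leq 2|a|\,\|\Lambda^2\phi_t\|\,\|\Lambda^2(|\phi_t|^2\phi_t)\|$. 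The tame estimate with $m = 4$ gives $\frac{d}{dt}F \leq C\|\phi_t\|_\infty^2 F$, and Gr\"onwall's lemma yields the claimed bound. To make this rigorous one runs the argument first for the locally-in-time solution in $\Sigma^4$ (which exists by local well-posedness in $\Sigma^4$) and observes that the a priori bound prevents its $\Sigma^4$-norm from blowing up as long as $\int_0^t \|\phi_s\|_\infty^2\,ds < \infty$, so that the $\Sigma^4$-solution is global and satisfies the stated estimate.

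For part (c), combine (a) and (b). The obstruction is that $H^1(\mathbb{R}^2) \not\hookrightarrow L^\infty(\mathbb{R}^2)$, so $M_1$ does not control $\|\phi_t\|_\infty$, and a plain interpolation $\|\phi_t\|_\infty \leq C\|\phi_t\|_{\Sigma,2}^\theta M_1^{1-\theta}$ would turn a $\Sigma^2$-estimate into $\dot y \leq Cy^{1+\theta}$, which may blow up in finite time. I would instead use the logarithmic Brezis--Gallouet inequality $\|\phi_t\|_\infty \leq CM_1(1 + \log(e + \|\phi_t\|_{\Sigma,2}/M_1))^{1/2}$: with $G(t) := \|\Lambda\phi_t\|^2$, equivalent to $\|\phi_t\|_{\Sigma,2}^2$, the computation of (b) at the $\Sigma^2$ level together with the tame estimate for $m = 2$ gives $\frac{d}{dt}G \leq C\|\phi_t\|_\infty^2 G$, which Brezis--Gallouet converts into $\frac{d}{dt}G \leq C\,G\log(e + G)$ with $C$ depending on $M_1$; such an inequality has solutions that are finite for all time, growing at worst double-exponentially. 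Hence $\|\phi_t\|_{\Sigma,2} \leq C_t$, and by the two-dimensional Sobolev embedding $H^2 \hookrightarrow L^\infty$ we get $\|\phi_t\|_\infty \leq C_t$, so $\int_0^t \|\phi_s\|_\infty^2\,ds < \infty$; feeding this into part (b) together with $\|\phi_0\|_{\Sigma,4} < \infty$ gives $\|\phi_t\|_{\Sigma,4} \leq C_t$. Alternatively one can bypass Brezis--Gallouet and derive the $L^2$-critical persistence of regularity directly from Strichartz estimates as in \cite{cazenave}. The main obstacle is exactly this borderline $L^\infty$-control in two dimensions, which is why the logarithmic inequality, rather than a plain interpolation, is needed.
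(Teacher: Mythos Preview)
Your proof is correct and, for parts (a) and (b), proceeds essentially as the paper does: conservation of mass and energy together with the sharp Gagliardo--Nirenberg inequality for (a), and the equivalence $\|\Lambda^{m/2}\cdot\|\sim\|\cdot\|_{\Sigma,m}$ plus the tame product estimate and Gr\"onwall for (b). The paper obtains (b) via Duhamel and the invariance of $\|\cdot\|_{2,A}$ under $U_t=e^{-it(-\Delta+|x|^2)}$, whereas you differentiate $\|\Lambda^2\phi_t\|^2$ directly; these are equivalent.

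Part (c) is where the routes genuinely diverge. You close the $L^\infty$-control using the Brezis--Gallouet logarithmic inequality, which gives $\dot G\leq C\,G\log(e+G)$ and hence double-exponential (but finite) growth of $\|\phi_t\|_{\Sigma,2}$; then $H^2\hookrightarrow L^\infty$ and (b) finish the job. The paper instead exploits the special structure of the harmonic potential via the \emph{lens transform}
\[
\phi_t(x)=\frac{1}{\cos t}\,u_{\tan t}\!\left(\frac{x}{\cos t}\right)e^{-i\frac{|x|^2}{2}\tan t},\qquad |t|<\tfrac{\pi}{2},
\]
which converts the harmonic NLS into the free cubic NLS $i\partial_t u=(-\Delta+a|u|^2)u$. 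Global $H^2$-regularity for the free equation (known from Strichartz-based persistence of regularity, cf.\ \cite{cazenave}) then transfers back to $\phi_t$, and translation in time extends this past $|t|=\pi/2$. Your approach is more self-contained and works verbatim for any smooth subquadratic potential, not just $|x|^2$; the paper's approach is shorter but relies on the exact algebraic identity underlying the lens transform and on citing the free $H^2$-theory. Your closing remark about bypassing Brezis--Gallouet via Strichartz is in fact what the paper does implicitly through that citation.
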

\begin{remark}
Part (c) directly implies that $\phi_t \in H^4(\mathbb{R}^2,\mathbb{C})$.
Our proof relies on the works of \cite{ carles2, carles, cazenave, killip, tao, weinstein}, see also the references therein. It also might be possible to show a polynomial growth in $t$ of the constant $C_t$, using the refined estimates presented in \cite{ carles2, carles}.
\end{remark}

\begin{proof}
\begin{enumerate}
\item
The global existence in $H^1(\mathbb{R}^2,\mathbb{C})$ is well known, see Remark 3.6.4 in \cite{cazenave}. We sketch the proof for completeness.
Let $U_t$ denote the generator of the time evolution of the
linear Schr\"odinger equation $i \partial_t u_t= (-\Delta+|x|^2) u_t$. 
For any $\phi_0 \in L^2(\mathbb{R}^2, \mathbb{C})$, we consider the Duhamel formula
\begin{align}
\label{Nlsintegral}
\phi_t=
U_t \phi_0
-i a
\int_{0}^t ds
U_{t-s}  | \phi_s|^2 \phi_s \;.
\end{align}
Note that is is  known  that 
there exists a nonempty open interval $I$, $0 \in I$ such that 
\eqref{Nlsintegral}  has a unique solution 
$\phi_t$, provided the initial datum $\phi_0$ fulfills $\|\phi_0\|_{\Sigma,1}\leq C$ 
(see Proposition 1.5. in \cite{carles}).
Furthermore, for any $t \in I$,
$\|\phi_t \|= \|\phi_0\|=1$. We may assume that $I$ is the maximal interval on which 
a solution of \eqref{Nlsintegral} exists.
Assume now that $\phi_t$ blows up in finite time, i.e. $I$ is bounded. It is then known  
that $
\int_{0}^{\sup I} dt
\|\phi_t\|_4^4= \infty 
$ \cite{killip}.

Assume $t \in I$ and consider the NLS energy
\begin{align*}
\mathcal{E}_{\text{NLS}}(\phi_t)=
\| \nabla \phi_t \|^2
+
 \frac{a}{2}
 \langle \phi_t, 
 |\phi_t|^2
 \phi_t \rangle
 +
 \| |x| \phi_t \|^2  \;.
\end{align*} 
Under the conditions $a>-a^*, \|\phi_0\|=1$, the two dimensional Gagliardo-Nirenberg inequality
$
\frac{a^*}{2} \|u\|_4^4 \leq
\| \nabla u \|^2 \|u\|^2
$, $u \in H^1(\mathbb{R}^2, \mathbb{C})$
implies  that
$
\mathcal{E}_{\text{NLS}}(\phi_t)
> 0
$.  Furthermore
$\frac{d}{dt}
\mathcal{E}_{\text{NLS}}(\phi_t)=0$, see Proposition 1.6. in \cite{carles}.
This directly implies that there exists an $\epsilon>0$ such that
\begin{align*}
 \epsilon \| \nabla \phi_t\|^2
 \leq
 C
 \;.
\end{align*}
The two dimensional 
Gagliardo-Nirenberg inequality implies, together with $\|\phi_t\|= \|\phi_0\| \,, \forall t \in I$,
\begin{align*}
\int_{0}^{\sup I}
dt
\|\phi_t\|_4^4
\leq
C
\int_{0}^{\sup I}
dt
\|\nabla \phi_t \|^2
\leq C \sup I < \infty \;.
\end{align*}
Therefore, the solution $\phi_t$ of \eqref{Nlsintegral} exists globally in time and fulfills
$\phi_t \in H^1(\mathbb{R}^2, \mathbb{C}), \| |x| \phi_t\|< \infty$.
\item
Let $A(x)=|x|^2$ and
define, for any $u \in L^2(\mathbb{R}^2,\mathbb{C})$, the norm
 $\| u \|_{k,A}  = 
\sqrt{
\sum_{m=0}^k
 \|(-\Delta+A)^m u \|^2
 }$. Note that  $\| \cdot \|_{k,A}$ is invariant under $U_t$, that is
$\|U_t u\|_{k,A}= \|u\|_{k,A}$. We will first show that
$ \| u \|_{2,A} $ and $ \| u \|_{\Sigma,4}$ are equivalent norms.
Let $u \in H^4(\mathbb{R}^2, \mathbb{C})$.
Note that
\begin{align*}
\| u \|_{2,A}^2
=&
\|u\|^2
+
\|(-\Delta+A) u \|^2
+
\|(-\Delta+A)^2 u \|^2
\\
\leq&
\|u\|^2
+
2 \|-\Delta u \|^2 +2 \| A u \|^2
\\
+&
\|
\left(
(-\Delta)^2 +A^2+ (-\Delta A)+ 2 A(-\Delta)
-2(\nabla A)\cdot \nabla 
\right)   
    u \|^2 
\\
\leq&
C \big(
\|u\|^2
+
\|-\Delta u \|^2 + \| A u \|^2
+
\|(-\Delta)^2 u \|^2
+
 \| A^2 u \|^2
 \\
+&
\| A(-\Delta) u \|^2
+
\|(\nabla A)\cdot \nabla u \|^2
\big)
\,.
\end{align*}
Since $
\nabla A^2 =
4 |x|^2 x, \;
\Delta A^2 =12 A
$, we obtain,
\begin{align*}
\| A(-\Delta) u \|^2
=&
\langle u, (-\Delta) A^2 (-\Delta) u  \rangle
\\
=&
\langle u, (-\Delta A^2) (-\Delta) u  \rangle
+
2
\langle u, (-\nabla A^2)\cdot \nabla  (-\Delta) u  \rangle
+
\langle u,  A^2 (-\Delta)^2  u  \rangle
\\
\leq &
C
(
 \| A  u \| \|-\Delta u \|
+
\| |x|^3 u \| \| \nabla \Delta u \|
+
\|A^2 u \| \| (-\Delta)^2 u \|
)
\\
\leq 
&
C (
\|A u \|^2+ \|-\Delta u \|^2 + \|(-\Delta)^2 u \|^2
+
\|A^2 u \|^2
) \;.
\end{align*}
For the last inequality, we used $
\| |x|^3 u \|^2
=
\langle |x|^2 u, |x|^4 u \rangle
\leq
\| A u\|^2+ \|A^2 u \|^2$, as well as
$
\| \nabla \Delta u \|
\leq
\| -\Delta u \|^2+ \|(-\Delta)^2 u \|^2$.
We use polar coordinates $(r, \phi)$. Then, $ (\nabla A) \cdot \nabla
= 2 r \partial_r$. 
Hence,
\begin{align*}
\|(\nabla A)\cdot \nabla u \|^2
=&
-4
\langle u ,
\partial_r( r^2 \partial_r u) \rangle
=
-4
\langle u ,
\left(
2 r \partial_r+ r^2 \partial_r^2
\right)
 u \rangle 
 \\
 =&
- 4
\langle u ,
r^2 \left(
r^{-1} \partial_r+ \partial_r^2
\right)
 u \rangle
 -4
 \langle u ,
r \partial_r
 u \rangle
 \\
 \leq &
  4
\langle r^2 u ,
- \left(
r^{-1} \partial_r+ \partial_r^2
+
\frac{1}{r^2} \partial^2_{ \phi }
\right)
 u \rangle
 -4
 \left\langle
|x| \frac{x}{|x|} 
  u ,
\nabla 
 u \right\rangle
 \\
 \leq &
 C
 (
 \| A u \|^2+
 \|-\Delta u \|^2
 +
 \| |x| u \|^2+ \|\nabla u \|^2
 )
 \,.
\end{align*}
Therefore,
$ \| u \|_{2,A} \leq C \| u \|_{{\Sigma,4}}$ holds. 
To show the converse,
first note that
$
 \| u \|_{{\Sigma,4}}^2
 \leq
 C( \|u \|^2+ \| A u \|^2+ \|-\Delta u \|^2
 +
 \|A^2 u \|^2+ \|\Delta^2 u \|^2)$. 
Since $-\Delta\leq -\Delta+ |x|^2$ and $|x|^2\leq -\Delta+ |x|^2$ holds as an operator inequality, we directly
obtain $ \| u \|_{{\Sigma,4}} \leq C \| u \|_{2,A}  $.

By $\|u v\|_{H^k} \leq \|u\|_\infty \|v\|_{H^k}+ \|u\|_{H^k} \|v\|_\infty$ ,
 $\| \cdot \|_{2,A}$ fulfills the generalized Leibniz rule
\begin{align*}
\| u v \|_{2,A}
\leq &
C\| u v \|_{{\Sigma,4}}
\leq
C (\|u \|_\infty \|v\|_{{\Sigma,4}}
+
 \|u\|_{{\Sigma,4}} \|v \|_\infty
 )
 \\
 \leq &
C(
\| u \|_{2,A} \|v \|_\infty+\|u\|_\infty \|  v \|_{2,A})
\;. 
\end{align*}

From \eqref{Nlsintegral}, we obtain 
\begin{align*}
\|\phi_t\|_{2,A}
\leq &
\|U_t \phi_0\|_{2,A}
+
|a|
\int_{0}^t ds
\|
U_{t-s} | \phi_s|^2 \phi_s
\|_{2,A}
\\
=&
\| \phi_0\|_{2,A}
+
|a|
\int_{0}^t ds
\|
 | \phi_s|^2 \phi_s
\|_{2,A}
\\
\leq &
\| \phi_0\|_{2,A}
+
C 
\int_{0}^t ds
\| \phi_s \|_\infty^2
\| \phi_s
\|_{2,A}
 \;.
\end{align*}
By a Gr\"onwall inequality, we obtain (b).
\item
We show that
$\phi_t \in H^2(\mathbb{R}^2, \mathbb{C})$ globally in time. 
Recall the existence of global in time solutions of 
\begin{align}
\label{NlsFree}
i \partial_t u_t= (-\Delta+a|u_t|^2) u_t \;.
\end{align}
in $H^2(\mathbb{R}^2, \mathbb{C})$, provided that $a>-a^*$ and $u_0 \in H^2(\mathbb{R}^2, \mathbb{C}), \|u_0\|=1$ holds.
Using the lens transform \cite{carles2, tao}, for $|t|< \pi/2$
\begin{align*}
\phi_t(x)=
\frac{1}{\cos(t)}
u_{\tan (t)}\left( \frac{x}{\cos(t)} \right)
e^{-i \frac{|x|^2}{2} \tan(t)} \;,
\end{align*} 
$\phi_t$ then solves $i \partial_t \phi_t=
(-\Delta+a|\phi_t|^2+ |x|^2) \phi_t$
with initial datum $\phi_0=u_0$.
 We therefore see that the existence of a global-in-time solution of 
\eqref{NlsFree} in $H^2(\mathbb{R}^ 2,\mathbb{C})$ implies existence of a solution
$\phi_t$ in $H^2(\mathbb{R}^ 2,\mathbb{C})$ locally in $t \in ]-\pi/2, \pi/2[$. By translation invariance of time, the solution $\phi_t$ then exists globally in $H^2(\mathbb{R}^ 2,\mathbb{C})$. 
By the embedding $
L^\infty(\mathbb{R}^ 2,\mathbb{C})
\subset
H^2(\mathbb{R}^ 2,\mathbb{C})
$, we obtain, together with (b), (c).

\end{enumerate}

\end{proof}

\subsection{Self-Adjointness} 
\label{selfadjoint}
\begin{lemma}
Let
\begin{align*}
H_{W_\beta, t}=
\sum_{k=1}^N (-\Delta_k)
+
\sum_{i<j=1}^N W_\beta (x_i-x_j)
+
\sum_{k=1}^N A_t(x_k)
\end{align*}
and assume (A1) and (A2).
Then, for all $t \in \mathbb{R}$,
\begin{enumerate}
\item $H_{W_\beta, t}$ is selfadjoint with domain
$ \mathcal{D}(H_{W_\beta, t})
=
\mathcal{D} \left(
\sum_{k=1}^ N(-\Delta_k+A_0(x_k))
\right)$.
\item
$(H_{W_\beta, t})^ 2$ is selfadjoint with domain
$\mathcal{D}((H_{W_\beta, t})^ 2)=
\mathcal{D}((H_{W_\beta, 0})^ 2)$. If, in addition, $W \in C^2(\mathbb{R}^2,\mathbb{R})$, then
$ \mathcal{D}((H_{W_\beta, t})^ 2)
=
\mathcal{D} \left(
(
\sum_{k=1}^ N(-\Delta_k+A_0(x_k))^2
\right)$ holds.

\end{enumerate} 
\end{lemma}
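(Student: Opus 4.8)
The plan is to treat both the two--body term $\mathcal{W}:=\sum_{i<j}W_\beta(x_i-x_j)$ and the ``increment'' $B_t:=\sum_{k=1}^N\big(A_t(x_k)-A_0(x_k)\big)$ as \emph{bounded} symmetric perturbations of the fixed operator $H_{0,0}:=\sum_{k=1}^N\big(-\Delta_k+A_0(x_k)\big)$, and then to combine Kato--Rellich with one elementary fact about domains of squares. First note that, by (A2), $A_0\in H^2(\mathbb{R}^2,\mathbb{R})$, and since $H^2(\mathbb{R}^2)\subset L^\infty(\mathbb{R}^2)$ by the two--dimensional Sobolev embedding already invoked in the paper, $\sum_k A_0(x_k)$ is a bounded operator; hence $H_{0,0}$ is a bounded self-adjoint perturbation of $\sum_k(-\Delta_k)=-\Delta_{\mathbb{R}^{2N}}$ and is self-adjoint on $\mathcal{D}_0:=\mathcal{D}(H_{0,0})=H^2(\mathbb{R}^{2N},\mathbb{C})$. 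Likewise $W\in L^\infty_c$ makes $\mathcal{W}$ bounded and symmetric, and writing $A_t-A_0=\int_0^t\dot A_s\,ds$ together with $\dot A_s\in L^\infty$ shows that $B_t$ is bounded and symmetric with $\|B_t\|_{\text{op}}\le N\int_0^t\|\dot A_s\|_\infty\,ds$. Since $H_{W_\beta,t}=H_{0,0}+\mathcal{W}+B_t$, Kato--Rellich yields part (1): $H_{W_\beta,t}$ is self-adjoint on $\mathcal{D}_0=\mathcal{D}\big(\sum_k(-\Delta_k+A_0(x_k))\big)$ for every $t$, and in particular $\mathcal{D}(H_{W_\beta,t})=\mathcal{D}(H_{W_\beta,0})$. (The same scheme works verbatim whenever $A_0$ is merely such that $\sum_k(-\Delta_k+A_0(x_k))$ is self-adjoint and $A_t-A_0$ is bounded with bounded gradient and Laplacian, which is why this also covers $A_0=|x|^s$.)

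For part (2), $(H_{W_\beta,t})^2$ is automatically self-adjoint as the square of a self-adjoint operator, so only the domain identity needs proof, and the mechanism is the following elementary fact, which I would state and prove in two lines: \emph{if $H_0$ is self-adjoint and $B$ is bounded and symmetric with $B\,\mathcal{D}(H_0)\subseteq\mathcal{D}(H_0)$, then $\mathcal{D}\big((H_0+B)^2\big)=\mathcal{D}(H_0^2)$}. Indeed $\mathcal{D}(H_0+B)=\mathcal{D}(H_0)$ by Kato--Rellich; for $\psi\in\mathcal{D}(H_0^2)$ one has $(H_0+B)\psi=H_0\psi+B\psi$ with $H_0\psi\in\mathcal{D}(H_0)$ (definition of $\mathcal{D}(H_0^2)$) and $B\psi\in\mathcal{D}(H_0)$ (since $\psi\in\mathcal{D}(H_0)$), so $\psi\in\mathcal{D}((H_0+B)^2)$; the converse inclusion follows by writing $H_0\psi=(H_0+B)\psi-B\psi$ with both terms in $\mathcal{D}(H_0)$. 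Applying this with $H_0=H_{W_\beta,0}$ and $B=B_t$, it remains to verify $B_t\,\mathcal{D}(H_{W_\beta,0})\subseteq\mathcal{D}(H_{W_\beta,0})$, i.e.\ that multiplication by $g:=A_t-A_0$ in each variable maps $H^2(\mathbb{R}^{2N})$ into itself. Here the key observation is that $\mathcal{D}\big(\sum_k(-\Delta_k)\big)=\{\psi\in L^2:\Delta\psi\in L^2\}$ with $\|\Delta\psi\|^2=\sum_{i,j}\|\partial_i\partial_j\psi\|^2$, so $H^2$--membership is detected by the single scalar $\Delta\psi\in L^2$; computing, for $\psi\in H^2(\mathbb{R}^{2N})$,
\begin{align*}
\Delta\big(g(x_k)\psi\big)=\sum_{j\ne k}g(x_k)\Delta_j\psi+(\Delta g)(x_k)\,\psi+2(\nabla g)(x_k)\cdot\nabla_k\psi+g(x_k)\Delta_k\psi ,
\end{align*}
every summand is a bounded function times an $L^2$ function, because $g,\nabla g,\Delta g\in L^\infty(\mathbb{R}^2)$ --- the latter two via $\nabla g=\int_0^t\nabla\dot A_s\,ds$, $\Delta g=\int_0^t\Delta\dot A_s\,ds$ and (A2). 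Hence $g(x_k)\psi\in H^2(\mathbb{R}^{2N})$ with $H^2$--norm controlled by $\|\psi\|_{H^2}$ times $N$--dependent constants, and summing over $k$ gives $B_t\colon H^2\to H^2$. This proves $\mathcal{D}\big((H_{W_\beta,t})^2\big)=\mathcal{D}\big((H_{W_\beta,0})^2\big)$ for all $t$.

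Finally, under the extra hypothesis $W\in C^2(\mathbb{R}^2,\mathbb{R})$ one has $W_\beta\in C^2_c$, so $W_\beta$ and its first and second derivatives are bounded, and the very same $\Delta$--computation (using $\nabla_i W_\beta(x_i-x_j)=-\nabla_j W_\beta(x_i-x_j)$ and $(\Delta_i+\Delta_j)W_\beta(x_i-x_j)=2(\Delta W_\beta)(x_i-x_j)$) shows that $\mathcal{W}$ maps $H^2(\mathbb{R}^{2N})=\mathcal{D}(H_{0,0})$ into itself. Since $H_{W_\beta,0}=H_{0,0}+\mathcal{W}$, the elementary fact above (now with $H_0=H_{0,0}$, $B=\mathcal{W}$) gives $\mathcal{D}\big((H_{W_\beta,0})^2\big)=\mathcal{D}(H_{0,0}^2)=\mathcal{D}\big((\sum_k(-\Delta_k+A_0(x_k)))^2\big)$, which combined with the $t$--independence established above is the claim; everything restricts to $L^2_s(\mathbb{R}^{2N},\mathbb{C})$ since $H_{W_\beta,t}$ commutes with permutations. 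The only genuinely delicate point is the one highlighted: checking that multiplication by $A_t-A_0$ (resp.\ by $W_\beta$ in the $C^2$ case) preserves $H^2(\mathbb{R}^{2N})$ using merely boundedness of the potential together with its gradient and its \emph{Laplacian}, the crucial feature being that no bound on the individual mixed second derivatives is required because $H^2$--membership on $\mathbb{R}^{2N}$ is equivalent to $\Delta\psi\in L^2$.
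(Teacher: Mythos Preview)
Your proof is correct, and for part (1) it coincides with the paper's argument (bounded perturbation plus Kato--Rellich). For part (2) you take a genuinely different route. The paper expands
\[
(H_{W_\beta,t})^2=(H_{W_\beta,0})^2+H_{W_\beta,0}\mathcal{A}_t+\mathcal{A}_t H_{W_\beta,0}+\mathcal{A}_t^2
\]
and shows by explicit estimates (using $\|\dot A_s\|_\infty$, $\|\nabla\dot A_s\|_\infty$, $\|\Delta\dot A_s\|_\infty$, and the kinetic energy bound from (A1)) that the last three terms are $(H_{W_\beta,0})^2$--bounded with relative bound $\tfrac{1}{\sqrt 2}+\tfrac14<1$, then invokes Kato--Rellich again. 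You instead isolate the clean abstract fact ``$B$ bounded, symmetric, and $B\,\mathcal D(H_0)\subseteq\mathcal D(H_0)$ implies $\mathcal D((H_0+B)^2)=\mathcal D(H_0^2)$'' and verify the domain--preservation hypothesis directly via the Leibniz computation of $\Delta(g(x_k)\psi)$. This is more conceptual and sidesteps the quantitative relative--bound calculation entirely; the $C^2$ case for $W$ then falls out by the same mechanism. The one trade--off is that your explicit verification identifies $\mathcal D(H_{W_\beta,0})$ with $H^2(\mathbb{R}^{2N})$, which uses $A_0\in L^\infty$ (legitimate here since (A2) gives $A_0\in H^2(\mathbb{R}^2)\subset L^\infty$), whereas the paper's estimate works directly on the abstract domain and uses (A1) to bound $\sum_k\|\nabla_k\Psi\|^2$, so it extends without change to unbounded $A_0$ such as $|x|^s$; your parenthetical remark to this effect is morally right but would require recasting the domain--preservation check as $H_{0,0}(g(x_k)\psi)=g(x_k)H_{0,0}\psi+(\text{lower order})$ rather than the pure $H^2$ computation.
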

\begin{proof}
\begin{enumerate}

\item
First note that $ \mathcal{D}(H_{W_\beta, 0})
=
\mathcal{D} \left(
\sum_{k=1}^ N(-\Delta_k+A_0(x_k))
\right)$, since $W_\beta \in  L^{\infty}_c(\mathbb{R}^2, \mathbb{R})$.
We write
\begin{align*}
H_{W_\beta, t}
=
H_{W_\beta, 0}+ 
\sum_{k=1}^N
\int_0^t ds \dot{A}_s(x_k) \;.
\end{align*}
Abbreviate $
\mathcal{A}_t=
\sum_{k=1}^N
\int_0^t ds \dot{A}_s(x_k)$.
Since $ \| \mathcal{A}_t \Psi \| \leq  N \int_0^t ds \|\dot{A}_s\|_\infty
 \| \Psi \|$ holds for all $\Psi \in L^2(\mathbb{R}^2,\mathbb{C)}$, $\mathcal{A}_t$ is infinitesimal
 $H_{W_\beta, 0}$ bounded, which implies by Kato-Rellich that
$\mathcal{D}( H_{W_\beta, 0})=\mathcal{D}( H_{W_\beta, t})$.
\item
Note that $(H_{W_\beta, 0})^2$ is self-adjoint on $\mathcal{D}((H_{W_\beta, 0})^2)$.
Consider
\begin{align*}
(H_{W_\beta, t})^2
=&
(H_{W_\beta, 0})^2
+ H_{W_\beta, 0}\mathcal{A}_t
+
\mathcal{A}_t H_{W_\beta, 0} +
\mathcal{A}_t^2 \;.
\end{align*}
Under assumption (A2)
$
 H_{W_\beta, 0}\mathcal{A}_t
+
\mathcal{A}_t H_{W_\beta, 0} +
\mathcal{A}_t^2
$ is a  symmetric operator on $\mathcal{D}((H_{W_\beta, 0})^2)$.
We estimate, for $\Psi  \in \mathcal{D}((H_{W_\beta, 0})^2), \Psi \neq
 0
$,
\begin{align*}
&
\left\|
\left(
H_{W_\beta, 0}\mathcal{A}_t
+
\mathcal{A}_t H_{W_\beta, 0} +
\left(
\mathcal{A}_t
\right)^2
\right)
\Psi
\right\|
\\
\leq &
2 N
\int_0^t ds \|\dot{A}_s\|_\infty
\|
H_{W_\beta, 0}
\Psi\|
 + N^2
\left( \int_0^t ds \|\dot{A}_s\|_\infty\right)^2
\|
\Psi\|
\\
+&
 N
\int_0^t ds \|\Delta\dot{A}_s\|_\infty
\|
\Psi\|
+
2
\left\|
\sum_{k=1}^N
\int_0^t ds \nabla_k \dot{A}_s(x_k) \nabla_k \Psi 
\right\|
\end{align*}
Note that
\begin{align*}
2 N
&\int_0^t ds \|\dot{A}_s\|_\infty
\|
H_{W_\beta, 0}
\Psi\|
=
2 N
\int_0^t ds \|\dot{A}_s\|_\infty
\sqrt{\laa \Psi,
\left(H_{W_\beta, 0} \right)^2 \Psi \raa
}
\\
\leq &
\sqrt{
2 N^2
\left(
\int_0^t ds \|\dot{A}_s\|_\infty
\right)^2
\| \Psi \|^2+
\frac{1}{2}
\left\|\left(H_{W_\beta, 0} \right)^2 \Psi \right\|^2
}
\\
\leq &
\sqrt{2}N
\int_0^t ds \|\dot{A}_s\|_\infty
\| \Psi \|+
\frac{1}{\sqrt{2}}
\left\|\left(H_{W_\beta, 0} \right)^2 \Psi \right\|
\;.
\end{align*}
Furthermore, for $\epsilon>0$ 
\begin{align*}
&
2
\left\|
\sum_{k=1}^N
\int_0^t ds \nabla_k \dot{A}_s(x_k) \nabla_k \Psi 
\right\|
\leq
2
\sum_{k=1}^N
\int_0^t ds \|\nabla \dot{A}_s\|_{\infty}
\| \nabla_k \Psi \|
\\
\leq &
\frac{2N}{ \epsilon}
\left(\int_0^t ds \|\nabla \dot{A}_s\|_{\infty} \right)^2 \| \Psi \|
+
\frac{\epsilon}{2 \|\Psi\|}
\sum_{k=1}^N
\| \nabla_k \Psi \|^2
\\
\leq &
\frac{2N}{ \epsilon}
\left(\int_0^t ds \|\nabla \dot{A}_s\|_{\infty} \right)^2 \|\Psi\|
+
\frac{1}{2 \|\Psi\|}
\langle \Psi,
H_{W_\beta, 0}
\Psi \rangle
+
CN \| \Psi\|
\;.
\end{align*}
Since
$
\|\Psi\|^{-1}
\laa \Psi,
H_{W_\beta, 0}
\Psi \raa
\leq
\frac{1}{2}\| \Psi \|
+
\frac{1}{2 \|\Psi\|}
\|H_{W_\beta, 0}
\Psi \|^ 2
\leq
\| \Psi \|
+
\frac{1}{2 }
\|(H_{W_\beta, 0})^2
\Psi \|
$, we obtain
\begin{align*}
&\left\|
\left(
H_{W_\beta, 0}\mathcal{A}_t
+
\mathcal{A}_t H_{W_\beta, 0} +
\left(
\mathcal{A}_t
\right)^2
\right)
\Psi
\right\|
\leq 
\left(
\frac{1}{\sqrt{2}}
+
\frac{1}{4}
\right)
\| (H_{W_\beta, 0})^2 \Psi\|
\\
+&
 \left(
\sqrt{2}N \int_0^t ds \|\nabla \dot{A}_s\|_{\infty} 
+
\frac{2 N}{ \epsilon}
\left(\int_0^t ds \|\nabla \dot{A}_s\|_{\infty} \right)^2
+
CN
 \right)
\| \Psi\|
\\
+&
 N
\int_0^t ds \|\Delta\dot{A}_s\|_\infty
\|
\Psi\|
 \;.
\end{align*}
Thus, $H_{W_\beta, 0}\mathcal{A}_t
+
\mathcal{A}_t H_{W_\beta, 0} +
\left(
\mathcal{A}_t
\right)^2$ is relatively $(H_{W_\beta, 0})^2$ bounded with bound $
\frac{1}{\sqrt{ 2}}+ \frac{1}{4}<1$. By Kato-Rellich, $(H_{W_\beta, t})^2$ is self-adjoint with
domain $\mathcal{D}((H_{W_\beta, t})^2)=
\mathcal{D}((H_{W_\beta, 0})^2)$, for all $t \in \mathbb{R}$. By a similar estimate, we also obtain
$ \mathcal{D}((H_{W_\beta, 0})^ 2)
=
\mathcal{D} \left(
(
\sum_{k=1}^ N(-\Delta_k+A_0(x_k))^2
\right)$ if $W \in C^2(\mathbb{R}^2, \mathbb{C})$.

\end{enumerate}

\end{proof}

\section*{Acknowledgments}
We would like to thank Lea Bo{\ss}mann,  Nikolai Leopold and David Mitrouskas for many helpful discussions. We also would like to thank an anonymous  referee for various valuable comments on an earlier version of this paper.
M.J. gratefully acknowledges financial support by the German National Academic Foundation.

\end{document}